\documentclass[lettersize,journal]{IEEEtran}

\IEEEoverridecommandlockouts
% The preceding line is only needed to identify funding in the first footnote. If that is unneeded, please comment it out.
\usepackage{amsthm,amsmath,amssymb}
\usepackage{caption}
\usepackage{graphicx}
\usepackage{stfloats}
\usepackage[font=small]{caption}
\usepackage{cite}
\usepackage{booktabs}
\usepackage[labelformat=simple]{subcaption}
\usepackage{soul}
\usepackage{tabularx}
\usepackage{textcomp}
\usepackage{mathtools}
\usepackage{color}
\usepackage[dvipsnames]{xcolor}
\usepackage[normalem]{ulem}

%\newcommand\soutpars[1]{\let\helpcmd\sout\parhelp#1\par\relax\relax}
%\long\def\parhelp#1\par#2\relax{%
%  \helpcmd{#1}\ifx\relax#2\else\par\parhelp#2\relax\fi%
%}
%
%\usepackage{array}
%\newcommand{\PreserveBackslash}[1]{\let\temp=\\#1\let\\=\temp}
%\newcolumntype{C}[1]{>{\PreserveBackslash\centering}m{#1}}
%\newcolumntype{R}[1]{>{\PreserveBackslash\raggedleft}m{#1}}
%\newcolumntype{L}[1]{>{\PreserveBackslash\raggedright}m{#1}}
%
%\newcommand{\blue}[1]{\color{black}{#1}\color{black}}
%\newcommand{\maroon}[1]{\color{Maroon}{#1}\color{black}}
%
%\renewcommand\thesubfigure{(\alph{subfigure})}

% correct bad hyphenation here
\hyphenation{op-tical net-works semi-conduc-tor}

\newtheorem{definition}{Definition}

\newtheorem{remark}{Remark}
\newtheorem{lemma}{Lemma}
\newtheorem{theorem}{Theorem}

\setlength{\textfloatsep}{12pt}

% \hyphenation{op-tical net-works semi-conduc-tor}
\usepackage{array}
\usepackage{algorithmic}
\usepackage{amsmath}

\usepackage{graphicx,float}
%\graphicspath{{Pictures/}}

\usepackage[utf8]{inputenc}
\usepackage{enumerate}
\usepackage{color}
\usepackage{amsfonts}
\usepackage{algorithm}
\usepackage{algorithmic}
% \usepackage[numbers,sort&compress]{natbib} 
% \setlength{\bibsep}{0.1ex}    %vertical spacing between references
% \def\IEEEbibitemsep{0pt plus .1pt}

 %Use Input in the format of Algorithm
 %Use Output in the format of Algorithm
\captionsetup[figure]{name={Fig.},labelsep=period}

\def\BibTeX{{\rm B\kern-.05em{\sc i\kern-.025em b}\kern-.08em
    T\kern-.1667em\lower.7ex\hbox{E}\kern-.125emX}}

	\begin{document}
	%
	% paper title
	% Titles are generally capitalized except for words such as a, an, and, as,
	% at, but, by, for, in, nor, of, on, or, the, to and up, which are usually
	% not capitalized unless they are the first or last word of the title.
	% Linebreaks \\ can be used within to get better formatting as desired.
	% Do not put math or special symbols in the title.
	% \title{Simultaneous Active and Passive Information Transfer for Reconfigurable Intelligent Surface Aided MIMO Systems}
	\title{{Simultaneous Active and Passive Information Transfer for RIS-Aided MIMO Systems: Iterative Decoding and Evolution Analysis}}
	% \markboth{Journal of \LaTeX\ Class Files,~Vol.~14, No.~8, 
	%August~2015}
	
	%
	% author names and IEEE memberships
	% note positions of commas and nonbreaking spaces ( ~ ) LaTeX will not break
	% a structure at a ~ so this keeps an author's name from being broken across
	% two lines.
	% use \thanks{} to gain access to the first footnote area
	% a separate \thanks must be used for each paragraph as LaTeX2e's \thanks
	% was not built to address multiple paragraphs
	%
	
	%\author{Michael~Shell,~\IEEEmembership{Member,~IEEE,}
	%        John~Doe,~\IEEEmembership{Fellow,~OSA,}
	%        and~Jane~Doe,~\IEEEmembership{Life~Fellow,~IEEE}% <-this %}
	% stops a space
	
	%%% Type One
	% \author{Wenjun Jiang,~\IEEEmembership{Graduate Student Member,~IEEE}, and 
	% 	Xiaojun Yuan,~\IEEEmembership{Senior Member,~IEEE}}
	%%% Type Two
	\author
	{
		\IEEEauthorblockA{Wenjun Jiang,~\IEEEmembership{Graduate Student Member,~IEEE}, and 
		Xiaojun Yuan,~\IEEEmembership{Senior Member,~IEEE}}

			\thanks{
			This work has been accepted for publication in IEEE Transactions on Signal Processing, 2023.

			% This work was supported in part by General Program of National Natural Science Foundation of China under Grant 62071090 and in part by Sichuan Science and Technology Program under Grant 2022ZYD0120. The preliminary version of this work has been accepted for publication in the proceedings of 2023 IEEE International Symposium on Information Theory (ISIT). \emph{(Corresponding author: Xiaojun Yuan.)}
						
			W. Jiang and X. Yuan are with the National Key Lab. on Wireless Communi., University of Electronic Sci. and Tech. of China, Chengdu, China. (e-mail:wjjiang@std.uestc.edu.cn;  xjyuan@uestc.edu.cn).
			}
	}

	%\thanks{M. Shell was with the Department
	%of Electrical and Computer Engineering, Georgia Institute of Technology, 
	%Atlanta,
	%GA, 30332 USA e-mail: (see 
	%http://www.michaelshell.org/contact.html).}% 
	%<-this % stops a space
	%\thanks{J. Doe and J. Doe are with Anonymous University.}% <-this % stops a space
	%\thanks{Manuscript received April 19, 2005; revised August 26, 2015.}

	% note the % following the last \IEEEmembership and also \thanks - 
	% these prevent an unwanted space from occurring between the last author name
	% and the end of the author line. i.e., if you had this:
	% 
	% \author{....lastname \thanks{...} \thanks{...} }
	%                     ^------------^------------^----Do not want these spaces!
	%
	% a space would be appended to the last name and could cause every name on that
	% line to be shifted left slightly. This is one of those "LaTeX things". For
	% instance, "\textbf{A} \textbf{B}" will typeset as "A B" not "AB". To get
	% "AB" then you have to do: "\textbf{A}\textbf{B}"
	% \thanks is no different in this regard, so shield the last } of each \thanks
	% that ends a line with a % and do not let a space in before the next \thanks.
	% Spaces after \IEEEmembership other than the last one are OK (and needed) as
	% you are supposed to have spaces between the names. For what it is worth,
	% this is a minor point as most people would not even notice if the said evil
	% space somehow managed to creep in.

	% The paper headers

	%\markboth{Journal of \LaTeX\ Class Files,~Vol.~14, No.~8, August~2015}%
	%{Shell \MakeLowercase{\textit{et al.}}: Bare Demo of IEEEtran.cls for IEEE 
	%Journals}
	\maketitle
	
	%\section{Abstract}
	%%Reconfigurable intelligent surface (RIS) has been envisioned as a promising technology for sixth-generation network. 
	%
	%%Recent information-theoretic research on reconfigurable intelligent surface (RIS) suggested that applying RIS for information transfer significantly improves the multiplexing gain, compared to applying RIS for passive beamforming.
	%
	% \vspace{-1 cm}
	%
	\begin{abstract}
		This paper investigates the potential of reconfigurable intelligent surface (RIS) for passive information transfer in a RIS-aided multiple-input multiple-output (MIMO) system. {We propose a novel simultaneous active and passive information transfer (SAPIT) scheme.} In SAPIT, the transmitter (Tx) and the RIS deliver information simultaneously, where the RIS information is carried through the RIS phase shifts embedded in reflected signals. We introduce the coded modulation technique at the Tx and the RIS.
		% to avoid the burden of reflection pattern design in the existing approaches.
		The main challenge of the SAPIT scheme is to simultaneously detect the Tx signals and the RIS phase coefficients at the receiver. To address this challenge, we introduce appropriate auxiliary variables to convert the original signal model into two linear models with respect to the Tx signals and one entry-by-entry bilinear model with respect to the RIS phase coefficients. With this auxiliary signal model, we develop a message-passing-based receiver algorithm. Furthermore, we analyze the fundamental performance limit of the proposed SAPIT-MIMO transceiver. Notably, we establish  state evolution to predict the receiver performance in a large-size system. We further analyze the achievable rates of the Tx and the RIS, which provides insight into the code design for sum-rate maximization. Numerical results validate our analysis and show that the SAPIT scheme outperforms the passive beamforming counterpart in achievable sum rate of the Tx and the RIS.
	\end{abstract}

	\begin{IEEEkeywords}
	Reconfigurable intelligent surface, multiple-input multiple-output, active and passive information transfer, message passing, state evolution.
	\end{IEEEkeywords}
	
	\section{Introduction}
	
	Reconfigurable intelligent surface (RIS), a.k.a. intelligent reflecting surface (IRS), has been envisioned as an emerging technology to empower sixth-generation (6G) wireless communications \cite{Yuan_survey,Liu_survey,Zheng_survey}. As a passive device, RIS can be deployed without requiring radio-frequency (RF) modules such as power amplifiers and analog-to-digital converters. A typical RIS consists of an array of reflecting elements, where each element can induce a  phase shift to the incident electromagnetic wave in a nearly passive manner. As such, the RIS was utilized as a passive beamformer to shape the wireless propagation channel \cite{Shuowen,Tianwei,Nemanja}. It was shown that through a collaborative optimization over the RIS phase shifts, RIS can provide a highly reliable link with the power gain quadratic to the number of the reflecting elements \cite{Renzo}. Yet, passive beamforming is not necessarily the most spectrum-efficient way of exploiting the ultimate potential of RIS.
	
	% For example, \cite{Shuowen,Nemanja} considered a RIS-assisted multiple-input multiple-output (MIMO) system and proposed to alternatively optimize the precoding matrix and the phase shifts of the RIS for achievable rate maximization. \cite{Tianwei} considered a multi-user MIMO system with inter-user interferences, and tackle the interference minimization problem through the combination of the RIS reflection signal and the interference.
	
	Recently, the use of RIS for passive information transfer has been studied in \cite{Wenjing_WC,Wenjing_JSAC,Ertugrul,Jing,Lechen,Shaoe_1,Shaoe_2,Shuaishuai,zhao2020metasurface}. As a pioneering attempt, the work in \cite{Wenjing_WC} studied passive beamforming and information transfer (PBIT) in a single-input multiple-output (SIMO) system, where the RIS delivers information through the random on/off state of each RIS element, referred to as on-off reflection modulation. This idea of PBIT was extended to the multiple-input multiple-output (MIMO) system in \cite{Wenjing_JSAC}, and a turbo message-passing (TMP) algorithm was proposed to alternatively detect the transmitter (Tx) signals and the RIS on/off states. 
    The authors in \cite{Shaoe_1} pointed out that the on-off reflection modulation at every RIS element leads to signal-to-noise-ratio (SNR) fluctuation. They alleviated this problem by designing appropriate reflection patterns of the RIS elements. Ref. \cite{Shaoe_2} further proposed quadrature reflection modulation to switch on all RIS elements, which avoids the SNR fluctuation problem. The authors in \cite{Shuaishuai} proposed a joint design of the RIS reflection patterns and the Tx signals for bit error rate (BER) minimization. {In \cite{zhao2020metasurface}, the authors built a RIS prototype that passively transfers information to multiple receivers in an indoor Wi-Fi scenario.} {However, a common problem in \cite{Ertugrul,zhao2020metasurface,Shuaishuai,Jing,Lechen,Shaoe_1,Shaoe_2} is that $n$-bit modulation of the RIS requires the use of $2^n$ different reflection patterns, which causes a high cost in reflection pattern design and related pattern detection, especially when the RIS operates at a high transmission rate.}
	
	% However, a common problem in \cite{Ertugrul,zhao2020metasurface,Shuaishuai,Jing,Lechen,Shaoe_1,Shaoe_2} is that the number of reflection patterns increases exponentially with the information rate at the RIS, potentially causing a high cost in reflection pattern design, storage, and detection.
	 
	More recently, much research interest has been attracted to analyze the information-theoretic performance limit of simultaneous active and passive information transfer (SAPIT), where the Tx and the RIS deliver information simultaneously. In \cite{Karasik}, the authors studied SAPIT in the RIS-aided SIMO system and pointed out that joint channel encoding at the Tx and the RIS achieves a much higher achievable rate than the counterpart passive beamforming scheme. Ref. \cite{cheng2021degree} further extended the analysis to the RIS-aided MIMO system, and showed that passive information transfer outperforms passive beamforming in multiplexing gain. However, due to the high-dimensional integration in the computation of achievable rates, ref. \cite{Karasik} considered a small-scale system with a single Tx antenna and a few number of RIS elements, and ref. \cite{cheng2021degree}  considered asymptotic rate analysis in the high SNR regime.
	The study of RIS for passive information transfer is still in an infancy stage. Particularly, how to fully exploit the potential of RIS for passive information transfer with affordable complexity remains an open challenge.
	% Particularly, a reliable and efficient transceiver design for SAPIT remains one open issue.
	To address this challenge, we propose a new SAPIT scheme for the RIS-aided MIMO system. 
	{We first introduce coded modulation at the Tx and the RIS to relieve the burden of the reflection pattern designs as in \cite{Ertugrul,zhao2020metasurface,Shuaishuai,Jing,Lechen,Shaoe_1,Shaoe_2}. With coded modulation, every RIS element can deliver information by modulation techniques, e.g., phase shift keying (PSK) modulation, and the information is encoded to ensure reliable transmission.}
	Then, the main difficulty of the SAPIT scheme is to simultaneously retrieve the information from the Tx and the information from the RIS. This is a bilinear detection problem by noting that the received signal is a bilinear function of the Tx signal and the RIS phase coefficients. 
	We introduce appropriate auxiliary variables to convert the original signal model into three parts, namely, two linear models of the Tx signals, and one entry-by-entry bilinear model of the RIS phase coefficients. Based on this auxiliary system, we formulate the bilinear detection problem into a Bayesian inference problem, and then develop a message-passing algorithm.
	In the proposed algorithm, inferring the Tx signals from the two linear models is achieved by employing the approximate message passing (AMP) tool \cite{AMP_2009, sundeep,Parker_1}. More importantly, due to the fact that the entry-by-entry bilinear model of the RIS phase coefficients is among the simplest bilinear models, inferring the RIS phase coefficients is directly obtained through standard message passing. 
	
	We further analyze the fundamental performance limit of the SAPIT-MIMO transceiver. We show that in a large-size system, the receiver performance can be accurately characterized by state evolution (SE). A key finding in the SE is that in each message-passing iteration, the equivalent channels experienced by the Tx signals can be treated as additive Gaussian white noise (AWGN) channels and those experienced by the RIS phase coefficients can be treated as Rayleigh fading channels in the large-system limit. The equivalence to AWGN model is due to the use of the two linear models, which is similar to the finding in the SE analysis of AMP \cite{AMP_SE}. The equivalence to Rayleigh fading model is due to the use of the entry-by-entry bilinear model. {Based on the SE, we derive the achievable rates of the Tx and the RIS based on the mutual information and minimum mean-square error relationship \cite{Dongning2}.} Particularly, we interpret the message-passing process as the iteration between a bilinear detector and two decoders, and the maximal achievable sum rate is obtained when the transfer functions of the detector and the two decoders are matched. Simulations validate that SE results match well with numerical results, and show that the proposed SAPIT scheme significantly outperforms the counterpart passive beamforming scheme in terms of achievable sum rate.

	{
	The main contributions of this paper are summarized below:
	\begin{itemize}
		
		\item {We propose a novel SAPIT scheme in the RIS-aided MIMO system, where every RIS element can passively deliver information by the randomness of the phase coefficients. We introduce the coded modulation technique to relieve the burden of the existing reflection pattern designs.}
		
		% We introduce coded modulation technique at the Tx and the RIS. Particularly, every RIS element can passively deliver information through the randomness of the phase coefficients. 
		
		\item {We show that the SAPIT-MIMO system model is equivalent to two linear models of the Tx signals plus one entry-by-entry bilinear model of the RIS phase coefficients. We then formulate the bilinear detection problem into a Bayesian inference problem, and develop a message-passing-based  detection algorithm.}
	
		\item {We analyze the fundamental performance limit of the proposed SAPIT-MIMO transceiver. Particularly, we establish the SE to accurately characterize the performance of the proposed massage-passing algorithm in the large-size system. Based on the SE, we derive the achievable-rate bounds of the Tx and the RIS.}
	   \end{itemize} 
	}

	{Passive beamforming aims to enhance the channel quality of the Tx by adjusting RIS reflection coefficients, while SAPIT aims to maximize the passive information transfer capability of the RIS by designing  RIS reflection patterns. The values of the RIS reflection coefficients are typically known to the receiver in passive beamforming, but unknown to the receiver in SAPIT. PBIT can be regarded as a combination of passive beamforming and SAPIT, where the RIS reflection coefficients are partially optimized to enhance the channel quality of the main link, and partially kept random for passive information transfer. By adjusting the amount of ``randomness'' in the reflection coefficients, there is generally a trade-off between the passive beamforming gain and the information delivery capability of the RIS. We show that, as two extremes of this trade-off, SAPIT can significantly outperform passive beamforming in terms of achievable sum rate of the overall scheme.
	
	% As a final remark, it is interesting to compare passive beamforming, SAPIT, and PBIT from an information-theoretic perspective. The passive beamforming improves the achievable rate of the Tx by optimizing the RIS coefficients according to channel responses, where the reason behind the Tx rate improvement is the power gain. In SAPIT, the channel model is a multiplicative channel of the Tx signals and the RIS coefficients, where our goal is a sum-rate improvement of the Tx and the RIS obtained from the spatial multiplexing gain \cite{cheng2021degree}. PBIT can be interpreted as a combination of passive beamforming and SAPIT, where we exploit the randomness of the RIS coefficients for information transfer, and then optimize the beamforming coefficients under this randomness. There exists a trade-off between the rate improvement by passive beamforming and that by SAPIT. Our work shows that in a large-size system, SAPIT outperforms passive beamforming in terms of achievable sum rate of the Tx and the RIS.
	}
	
	{\emph{Organization:} In Sec. II, we establish the SAPIT scheme in the RIS-aided MIMO system. In Sec. III, we formulate the receiver design problem into a Bayesian inference problem, and develop a message-passing algorithm. In Sec. IV, we develop state evolution to predict the algorithm performance and analyze the achievable rates of the Tx and the RIS. In Sec. V, we present extensive numerical results. In Sec. VI, we conclude this paper.}
	
	\emph{Notation:} We use bold capital letters (e.g., $\mathbf X$) for matrices and bold lowercase letters (e.g., $\mathbf x$) for vectors. $(\cdot)^T$, $(\cdot)^*$, and $(\cdot)^H$ denote the transpose, the conjugate, and the conjugate transpose, respectively. The cardinality of a set $\mathcal{X}$ is denoted as $|\mathcal{X}|$. We use ${\rm diag}(\mathbf x)$ for the diagonal matrix created from vector $\mathbf x$. $||\mathbf X ||_F$ and $||\mathbf x||_2$ denote the Frobenius norm of matrix $\mathbf X$ and the $l_2$ norm of vector $\mathbf x$, respectively. $\mathbf x_1 \prec \mathbf x_2$ represents that every coordinate of $\mathbf x_1$ is less than that of $\mathbf x_2$. $\delta(\cdot)$ denotes the Dirac delta function. Matrix $\mathbf I$ denotes the identity matrix with an appropriate size. For a random vector $\mathbf x$, we denote its probability density function (pdf) by $p(\mathbf x)$. The pdf of a complex Gaussian random vector $\mathbf x \in \mathbb C^N$ with mean $\mathbf m$ and covariance $\mathbf C$ is denoted by $\mathcal{CN}(\mathbf x ; \mathbf m , \mathbf C) = {\rm exp}(-(\mathbf x - \mathbf m)^H \mathbf C^{-1} (\mathbf x - \mathbf m))/(\pi^N |\mathbf C|)$. $\mathbb E[\cdot]$ and $\text{var}(\cdot)$ denote expectation and variance operators.

	% \vspace{-0.4 cm}
	\section{System Model}
	
	% \vspace{-0.2 cm}
	\subsection{Channel Model}
	 As illustrated in Fig. \ref{System}, we consider the RIS-aided MIMO system consisting of a $K$-antenna transmitter (Tx), an $M$-antenna receiver (Rx), and an $N$-element RIS. Denote by $\mathbf F \in \mathbb C^{N \times K}$, $\mathbf G \in \mathbb C^{M \times N}$, and $\mathbf H \in \mathbb C^{M \times K}$ respectively the baseband channel matrices from the Tx to the RIS, from the RIS to the Rx, and from the Tx to the Rx. We assume that the channel is block-fading, i.e., the channel matrices $\mathbf F$, $\mathbf G$, and $\mathbf H$ remain constant in a transmission block. Without loss of generality, each transmission block consists of $Q$ sub-blocks, and each sub-block is divided into $T \ge 1$ time slots. The Tx symbols are changed from time-slot to time-slot while the RIS phases are changed from sub-block to sub-block. In practice, the choice of $T$ is determined by the speed limit of the phase adjustment of the RIS controller \cite{zhang2018space}. We ignore the mutual coupling between RIS elements. The signals reflected by the RIS two or more times are also ignored due to severe path loss.\footnote{We note that mutual coupling and multi-reflection generally lead to more complicated physical models than (1) \cite{9856592,rabault2023tacit}. The corresponding SAPIT design for those models is a very interesting research topic, but is left for future work.} Then, the received signal model in the $t$-th time slot of the $q$-th sub-block is expressed as
	% \begin{align}
	% 	\label{model}
	% 	\quad \mathbf{Y}_q & = \left( \mathbf G \text{diag}(\mathbf s_q) \mathbf F + \mathbf H \right) \mathbf X_{q}  +\mathbf{W}_q, \quad q=1,...,Q,
	% \end{align}
	\begin{align}
		\label{model}
		\quad \mathbf{y}_{qt} & = \left( \mathbf G \text{diag}(\mathbf s_q) \mathbf F + \mathbf H \right) \mathbf x_{qt}  +\mathbf{w}_{qt}, ~ \forall q,t,
	\end{align}
	where $\mathbf y_{qt} = [y_{qt1},...,y_{qtM}]^T \in \mathbb{C}^{M}$ is the received signal vector in the $t$-th time slot of the $q$-th sub-block; $\mathbf{s}_q = [s_{q1},...,s_{qN}]^T \in \mathbb{C}^N $ and $\mathbf x_{qt} = [\mathbf x_{qt1},...,\mathbf x_{qtK}]^T \in \mathbb{C}^{K}$ are the reflecting phase coefficient vector of the RIS and the signal vector of the Tx, respectively; $\mathbf w_{qt}\in \mathbb{C}^{M}$ is an additive white Gaussian noise (AWGN) vector with elements independently drawn from $\mathcal{CN}(0,\sigma_w^2)$. Each element of $\mathbf{s}_q$ is constrained on a constellation $\mathcal{S}=\{ e^{j \theta_1},...,e^{j \theta_{|\mathcal{S}|}}\}$, where $\theta_i, i=1,...,|\mathcal S|$ are the adjustable phase angles of the RIS controller. 
	The constant modulus constraints of the RIS coefficients are naturally satisfied by our design. 
	We assume that each element of $\mathbf{x}_{qt}$ is constrained on a constellation $\mathcal{X}$ with cardinality $|\mathcal{X}|$ and unit average power. This assumption does not lose any generality since we can design the constellation points of $\mathcal X$ to approach any signal shaping (e.g., Gaussian signaling). We also assume that channel state information (CSI) is perfectly known. In practice, CSI can be acquired, e.g., by using the techniques in \cite{zhenqing} and \cite{Hang}. 
	
	\begin{figure}[h] 
		\vspace{-0.2 cm}
		\centering
		\includegraphics[width = 3.4in]{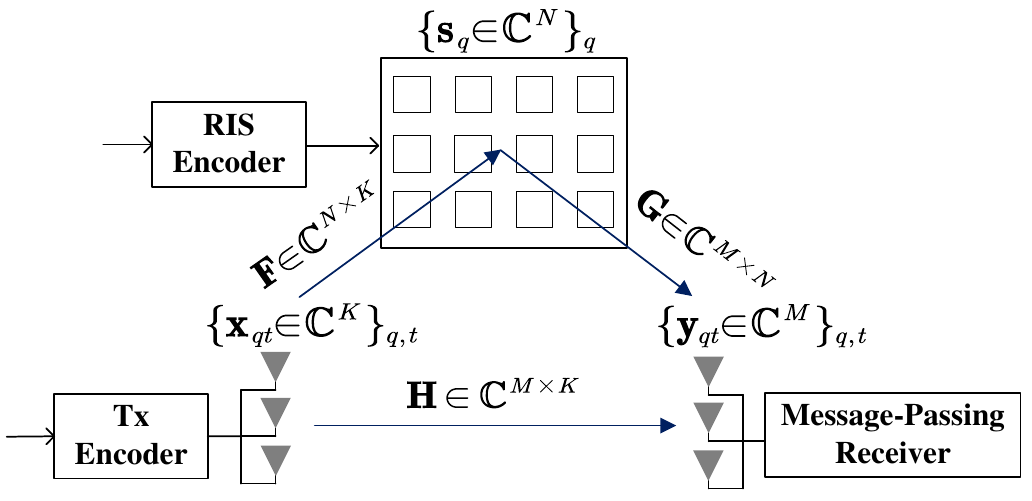}
		\vspace{-0.1 cm}
		\caption{Illustration of the RIS-aided MIMO system consisting of a $K$-antenna Tx, an $M$-antenna Rx, and an $N$-element RIS. Information are delivered by both the RIS and the Tx.}
		\label{System}
	\end{figure}
	
	\vspace{-0.2 cm}
	\subsection{Transmitter and RIS Design}
	%In the above RIS-aided MIMO system, our goal is to explore the potential of the RIS for passive information delivery. More precisely, the phase shifts at the RIS elements are used to modulate information and superimposed on the Tx signal. Then the Rx detects the received signals which imply the combination of the information from both RIS and Tx. 
	
	In this subsection, we describe the transmitter and RIS design for SAPIT. As shown in Fig. \ref{System}, we introduce the Tx encoder and the RIS encoder for reliable transmission. Denote by $\mathcal C_x$ and $\mathcal C_s$ the codebooks adopted at the Tx and the RIS encoder, respectively. At the Tx, a codeword matrix $\mathbf X = [\mathbf x_{11},...,\mathbf x_{1T},...,\mathbf x_{Q1},...,\mathbf x_{QT}] \in \mathcal C_x $ is generated in each transmission block. At the RIS, we assign the first $N_{\rm P} \ll N$ rows of $\mathbf S = [\mathbf s_1, ... , \mathbf s_Q] \in \mathbb C^{N \times Q}$ as pilots known by the Rx, denoted by $\mathbf S_{\rm P} \in \mathbb C^{N_{\rm P} \times Q}$. Correspondingly, denote by $\mathbf S_{\rm D} \in \mathcal C_s$ the coded data matrix consisting of the remaining $N-N_{\rm P}$ rows of $\mathbf S$. Clearly, $ \mathbf S = [\mathbf S_{\rm P}^T,\mathbf S_{\rm D}^T]^T$. 
	{In practice, the data delivered by the RIS can be provided by connected internet-of-things (IoT) devices \cite{Wenjing_JSAC}. Alternatively, the data of the RIS can be split from the Tx source \cite{Karasik}.
	} 
	% {\color{red} 
	% Passive beamforming, PBIT, and SAPIT can all improve the achievable rate of the RIS-aided MIMO system. The difference is that we consider the achievable rate of the Tx in passive beamforming, and the rate gain is from the channel gain obtained by optimizing $\mathbf s_q$ in \eqref{model} as a function of channel matrices $(\mathbf G,\mathbf F,\mathbf H)$. In PBIT, we consider the achievable sum rate of the Tx and the Rx. $\mathbf s_q$ is typically re-expressed as $\mathbf s_q = \mathbf s_{q1} \odot \mathbf s_{q2}$, where $\mathbf s_{q1}$ is passive beamforming vector and $\mathbf s_{q2}$ is information transfer vector. Note that each entry of $\mathbf s_{q2}$ cannot be randomly drawn from $\mathcal S$, since this makes each entry of $\mathbf s_{q1} \odot \mathbf s_{q2}$ become an independent random variable and destroy the function of passive beamforming. Therefore, the randomness of $ \mathbf s_{q2}$ should be reduced in PBIT, e.g., the RIS elements can be divided into several groups and all elements of one group keep the same phase angles in $\mathbf s_{q2}$. There is a trade-off between the ability to passive beamforming and the ability to information transfer. In our work, we aim to fully exploit the potential of RIS for information transfer and thus allow each element of $\mathbf s_q$ to be randomly drawn from $\mathcal{S}$.}
	At the Rx, our goal is to simultaneously recover $\mathbf X$ and $\mathbf S_{\rm D}$ from the noisy observations $\mathbf Y = [\mathbf y_{11},...,\mathbf y_{1T},...,\mathbf y_{Q1},...,\mathbf y_{QT}]$ with the knowledge of $\mathbf S_{\rm P}$. This task can be formulated as a Bayesian inference problem, as detailed in the next section.

	\section{Message-Passing Receiver Design}
	
	In this section, we convert the receiver design problem into a Bayesian inference problem, and then develop a message-passing receiver algorithm. 
	%Finally, we summarize the proposed algorithm and analyze its computational complexity.
	\subsection{Bayesian Inference Framework}
	By introducing auxiliary random variables, system model \eqref{model} can be rewritten as
	\begin{subequations}
	\label{hybrid}
	\begin{align}
		& \mathbf y_{qt} = \mathbf G \mathbf u_{qt} + \mathbf H \mathbf x_{qt} + \mathbf w_{qt}, \label{hybrid_a} \\
		& \mathbf c_{qt} = \mathbf F \mathbf x_{qt}, \label{hybrid_b} \\
		& \mathbf u_{qt} = {\rm diag}(\mathbf s_q) \mathbf c_{qt}. \label{hybrid_c} 
	\end{align}	
	\end{subequations}
	%\eqref{hybrid} is seen as a hybrid linear-bilinear model. Specifically, \eqref{hybrid_a} is a linear model of $\mathbf X_q$ and $\mathbf U_q$ with linear mixing matrices $\mathbf G$ and $\mathbf H$, respectively; \eqref{hybrid_c} is a linear model of $\mathbf X_q$ with linear mixing matrix $\mathbf F$; \eqref{hybrid_b} is a bilinear model of $\mathbf s_q$ and $\mathbf C_q$ in the form of point-wise multiplication, which is the simplest one in the family of bilinear models. Hence, we refer to the receiver design problem as \emph{hybrid linear-bilinear detection problem}. In the sequel, we show that with \eqref{hybrid}, the algorithm design and analysis are facilitated. 
	It is worth noting that in the equivalent system model \eqref{hybrid}, inferring $(\mathbf u_{qt},\mathbf x_{qt})$ in \eqref{hybrid_a} and $\mathbf x_{qt}$ in \eqref{hybrid_b} are both linear inverse problems; the entry-by-entry multiplication of $s_{qn}$ and $c_{qtn}$ in \eqref{hybrid_c} is the simplest bilinear model. These properties facilitate the design of message-passing algorithm for the considered inference problem. {From \eqref{hybrid}, we have $p( y_{qtm}|\mathbf u_{qt},\mathbf x_{qt}) = \mathcal{CN}( y_{qtm} ; \sum_n g_{mn} u_{qtn} + \sum_k h_{mk} x_{qtk} , \sigma_w^2) $, $p(c_{qtn}|\mathbf x_{qt}) = \delta( c_{qtn} - \sum_k f_{nk} x_{qtk} )$, and $p( u_{qtn} | c_{qtn},s_{qn}) = \delta( u_{qtn} \!-\! s_{qn} c_{qtn})$. Define $\mathbf U \! = \! [\mathbf u_{11},...,\mathbf u_{1T},...,\mathbf u_{Q1},...,\mathbf u_{QT}]$ and $\mathbf C \!=\!$ $[\mathbf c_{11},...,\mathbf c_{1T},...,\mathbf c_{Q1},...,\mathbf c_{QT}]$. Then, the posterior distribution of $(\mathbf S_{\rm D},\mathbf X, \mathbf U, \mathbf C)$ given $\mathbf Y$ and $\mathbf S_{\rm P}$ is expressed as
	\begin{align}
		\label{p_post}
		& p(\mathbf S_{\rm D}, \mathbf X, \mathbf U, \mathbf C|  \mathbf Y,\mathbf S_{\rm P}) \notag \\
		& \quad \propto  \Big( \prod_{q,t,m,n} p(y_{qtm}|\mathbf u_{qt},\mathbf{x}_{qt}) p( u_{qtn}|  c_{qtn} , s_{qn} )p( c_{qtn} | \mathbf x_{qt}) \Big)   \notag \\
		& \hspace{1.3cm} \times p(\mathbf S_{\rm D}) p(\mathbf X) p(\mathbf S_{\rm P}),
	\end{align}
	}where $p(\mathbf S_{\rm D})$ is a uniform distribution over the RIS codebook $\mathcal C_s$ and $p(\mathbf X)$ is a uniform distribution over the Tx codebook $\mathcal C_x$. 
	
	We now represent \eqref{p_post} with the factor graph in Fig. \ref{FG}, where the conditional probabilities in \eqref{p_post} are represented by rectangular factor nodes and the random variables in \eqref{p_post} are represented by circular variable nodes. Modules A, B, and C in Fig. \ref{FG} correspond to models \eqref{hybrid_a}, \eqref{hybrid_b}, and \eqref{hybrid_c}, respectively. Based on the factor graph, we next develop a low-complexity message-passing algorithm. For notational convenience, denote by $\Delta_{a \to b}(b)$ the message from factor node $a$ to variable node $b$, and by $\Delta_{a \leftarrow b}(b)$ the message from variable node $b$ to factor node $a$. Denote by $\Delta_{b}(b)$ the message at node $b$ that combines all messages from edges connected to $b$.

	\begin{figure}[h] 
		\vspace{-0.2 cm}
		\centering
		\includegraphics[width = 3.4in]{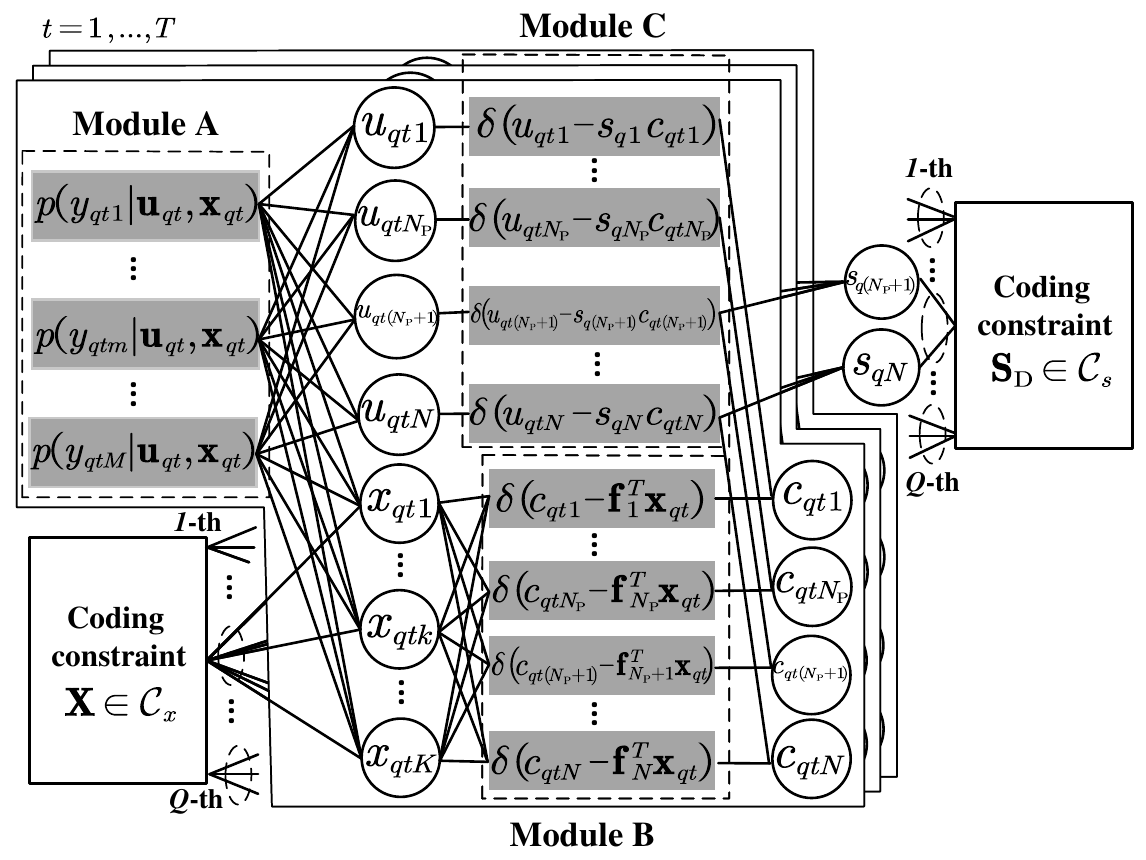}
		\vspace{-0.1 cm}
		\caption{Factor graph representation of \eqref{p_post}.}
		\label{FG}
	\end{figure}
	\subsection{Message-Passing Design}
	
	%Due to the linear mixing by IRS-Tx channel matrix $\mathbf G$, each element of $\mathbf u$ has correlation with all elements of the observation $\mathbf y$, leading to a high complexity of the message passing on $\mathbf u$. This problem also exists in the message passing on $\mathbf x$ due to the linear mixing by the Tx-IRS channel matrix $\mathbf F$. To resolve this problem, we draw on the idea of GAMP to simplify the message passing on $\mathbf u$ and $\mathbf x$.

	%Note that  the message passing details are exactly the same in each sub-block $q$. To simplify the notation, we omit the subscript $q$ in the following derivations (except those related to the decoders).
	
	%With normalized power constraint of the constellations $\mathcal{X}$ and $\mathcal{S}$, we have $\mathbb{E}[|x_{tk}|^2]$ and $\mathbb{E}[|s_{qn}|^2]$ scale as $\mathcal{O}(1)$.
	%Without loss of generation, we assume $\mathbb{E}[|h_{mk}|^2]$ and $\mathbb{E}[|g_{mn}|^2]$ scale as $\mathcal{O}(1/M)$, and $\mathbb{E}[|f_{nk}|^2]$ scales as $\mathcal{O}(1/N)$. 
	% with $\xi$ controlling the path-loss gap between the direct path and the transmitter-RIS-receiver path. 

	\subsubsection{Messages related to \{$u_{qtn}$\}}
	Define $ z_{qtm} = \mathbf g_m^T\mathbf u_{qt} + \mathbf h_m^T \mathbf x_{qt}$. From the sum-product rule \cite{sum-product}, the message from factor node $p(y_{qtm}|\mathbf u_{qt},\mathbf x_{qt})$ to variable node $u_{qtn}$ is
	\begin{align}
		\label{d_yu}
		\Delta_{y_{qtm} \! \to  u_{qtn}}(u_{qtn}) \! = \! \int_{z_{qtm}} \! \! \! \mathcal{CN}(y_{qtm};z_{qtm},\sigma_w^2) p(z_{qtm} | u_{qtn}),
	\end{align}  
	where $p(z_{qtm} | u_{qtn}) \!=\! \int_{\mathbf x_{qt},\mathbf{u}_t/u_{qtn}} \! \prod_k \! \Delta_{ y_{qtm} \leftarrow x_{qtk}}(x_{qtk}) \prod_{j \neq n}$ $\Delta_{y_{qtm} \leftarrow u_{qtj} }(u_{qtj}) p(z_{qtm} | \mathbf u_{qt}, \mathbf x_{qt}) \! $ with $\mathbf{u}_{qt}/u_{qtn}$ denoting the elements of $\mathbf u_{qt}$ except $u_{qtn}$ and subscript $y_{qtm}$ representing factor node $p(y_{qtm}|\mathbf u_{qt},\mathbf x_{qt})$. Note that $z_{qtm}$ is a weighted sum of random variables $u_{qtn},\forall n$ and $x_{qtk},\forall k$. From the central limit theorem, $p(z_{qtm} | u_{qtn})$ can be treated as a Gaussian distribution. Specifically, we use $\mu_{y_{qtm} \leftarrow x_{qtk}}$ and $v_{ y_{qtm} \leftarrow x_{qtk} }$ to  denote the mean and the variance of $ \Delta_{ y_{qtm} \leftarrow x_{qtk}  } (x_{qtk})$, and use $\mu_{ y_{qtm} \leftarrow u_{qtj} }$ and $v_{ y_{qtm} \leftarrow u_{qtj} }$ to denote the mean and the variance of $ \Delta_{ y_{qtm} \leftarrow u_{qtj}} (u_{qtj})$. Then, we obtain 
	\begin{align}
	p(z_{qtm} | u_{qtn}) \! = \! \mathcal{CN}(z_{qtm}; g_{mn}u_{qtn} \! + \! b_{ y_{qtm} \to u_{qtn} } , \tau_b{_{y_{qtm} \to u_{qtn}}} \! ), \label{pzu1}
	\end{align}
	with
	%\begin{subequations}
	%	\label{b_tau_b}
	\begin{equation}
		\begin{aligned}
			% \label{b1}
			& b_{y_{qtm} \to u_{qtn}}  \! = \! \sum_{j \neq n} g_{mj} \mu_{ y_{qtm} \leftarrow u_{qtj} } \! + \! \sum_k h_{mk} \mu_{y_{qtm} \leftarrow x_{qtk}}, \\
			% \label{taub1}
			& \tau_b{_{y_{qtm} \to u_{qtn}}} \! = \! \sum_{j \neq n} |g_{mj}|^2 v_{ y_{qtm} \leftarrow u_{qtj} } \! + \! \sum_k |h_{mk}|^2 v_{ y_{qtm} \leftarrow x_{qtk} }. \notag
		\end{aligned}
	\end{equation}
	%\end{subequations}
	Both $\{ b_{ y_{qtm} \to u_{qtn} } \}$ and $\{\tau_b{_{y_{qtm} \to u_{qtn}}}\}$ above have $NQTM$ elements to be updated in each iteration. For simplification, we further introduce quantities invariant to index $n$ as
	\begin{equation}
	\label{btaub}
	\begin{aligned}
		%\label{b2}
		b_{qtm} & =\sum_{ n} g_{mn} \mu_{ y_{qtm} \leftarrow u_{qtn} } + \sum_k h_{mk} \mu_{y_{qtm} \leftarrow x_{qtk}}, \\
		%\label{taub2}
		\tau_{b_{qtm}} & = \sum_{n} |g_{mn}|^2 v_{ y_{qtm} \leftarrow u_{qtn} } + \sum_k |h_{mk}|^2 v_{ y_{qtm} \leftarrow x_{qtk} }.
	\end{aligned}
	\end{equation}
	Then, we have $	\tau_b{_{y_{qtm} \to u_{qtn}}}  = \tau_{b_{qtm}} -  |g_{mn}|^2 v_{ y_{qtm} \leftarrow u_{qtn} } \approx \tau_{b_{qtm}}$, where the approximation holds for a relatively large $N$. Denote by $\mu_{ u_{qtn} }$ the mean of message $\Delta_{u_{qtn}} (u_{qtn})$, and let $b_{ y_{qtm} \to u_{qtn} } = b_{qtm} - g_{mn} \mu_{ y_{qtm} \leftarrow u_{qtn} }\approx b_{qtm} - g_{mn} \mu_{ u_{qtn} }$. Substituting these approximations into \eqref{pzu1}, we obtain
	\begin{align}
		\label{pzu2}
		p(z_{qtm} | u_{qtn}) \approx \mathcal{CN}(z_{qtm}; g_{mn}(u_{qtn} - \mu_{ u_{qtn} }) + b_{qtm} , \tau_{b_{qtm}} ),
	\end{align}
	where we only need to update $\{ b_{qtm} \}$ and $\{ \tau_{b_{qtm}} \}$, each with $QTM$ elements. Substituting \eqref{pzu2} into \eqref{d_yu}, we obtain
	\begin{align}
		\label{d_yu2}
		\Delta_{y_{qtm} \to u_{qtn}}(u_{qtn}) \approx  \mathcal{CN} \big( & u_{qtn} ; \mu_{ u_{qtn} } + g_{mn}^{-1}(y_{qtm}-b_{qtm})\notag \\ & ~ |g_{mn}|^{-2}(\sigma_w^2+\tau_{b_{qtm}}) \big),
	\end{align}
	by following the Gaussian message combining property\footnote{$\mathcal{CN}(x ; a, A) \mathcal{CN}(x ; b, B) \propto \mathcal{CN}(x ; c, C)$ with $C=\left(A^{-1}+B^{-1}\right)^{-1}$ and $c=C(a / A+b / B)$.}.
	
	The message from variable node $u_{qtn}$ to factor node $p(y_{qtm}|\mathbf u_{qt},\mathbf x_{qt})$ is given by
	\begin{align}
		\label{dny}
		\Delta_{y_{qtm} \leftarrow u_{qtn} } \! (u_{qtn}) \! = \! \prod_{i \neq m} \! \Delta_{y_{qti} \to u_{qtn}}(u_{qtn}) \Delta_{\delta_{u_{qtn}} \to u_{qtn}} (u_{qtn})  .
	\end{align}
	With \eqref{d_yu2}, $\prod_{i \neq m} \Delta_{y_{qti} \to u_{qtn}}(u_{qtn})$ in \eqref{dny} is $\mathcal{CN}(u_{qtn};$ $d_{y_{qtm} \leftarrow u_{qtn}}, \tau_{d_{y_{qtm} \leftarrow u_{qtn}}}) $ with 
	\begin{equation}
		\label{d_taud_left}
	\begin{aligned}
		% \label{tauq}
		& \tau_{d_{y_{qtm} \leftarrow u_{qtn}}}  = \Big(\sum_{i \neq m} {|g_{in}|^2}({\sigma_w^2+\tau_{b_{qti}}})^{-1} \Big)^{-1}, \\
		% \label{qny} 
		& d_{y_{qtm} \leftarrow u_{qtn}}  = \mu_{ u_{qtn} } \! +  \! \tau_{d_{y_{qtm} \leftarrow u_{qtn}}} \sum_{i \neq m} g_{in}^* \frac{y_{qti} \! - \! {b}_{qti}}{\sigma_w^2\!+\!\tau_{b_{qti}}}.
	\end{aligned}
	\end{equation}
	$\Delta_{\delta_{u_{qtn}} \to u_{qtn}}(u_{qtn})$ in \eqref{dny} denotes the message from factor node $\delta(u_{qtn} - s_{qn}  c_{qtn})$ (abbreviated as $\delta_{u_{qtn}}$) to variable node $ u_{qtn}$. We next derive $\Delta_{\delta_{u_{qtn}} \to u_{qtn}}(u_{qtn}), n=N_{\rm P}+1,...,N$. Specifically, we have $
		\Delta_{\delta_{u_{qtn}} \to u_{qtn}}(u_{qtn}) \propto \int_{c_{qtn},s_{qn}} \! \delta(u_{qtn} \! - \! s_{qn}c_{qtn})  \Delta_{ \delta_{u_{qtn}} \!  \leftarrow s_{qn} } \! (s_{qn}) \Delta_{\delta u_{qtn} \leftarrow c_{qtn} } (c_{qtn}), n \! = \! N_{\rm P}+1,...,N$ with
	\begin{align}
		& \Delta_{ \delta_{u_{qtn}} \leftarrow s_{qn}} (s_{qn}) = \sum_{s \in \mathcal S} \pi_{qtn}(s) \delta(s_{qn} - s), \label{pri_s} \\
		& \Delta_{\delta_{u_{qtn}} \leftarrow c_{qtn} } (c_{qtn}) = \mathcal{CN}(c_{qtn} ; p_{qtn} , \tau_{p_{qtn}} ), \label{pri_c}
	\end{align}
	where $\pi_{qtn}(s)$ in $\Delta_{ \delta_{u_{qtn}} \leftarrow s_{qn}} (s_{qn})$ represents the probability of $s_{qn}=s$ for $s \in \mathcal{S}$. The detailed calculations of $\pi_{qtn}(s),n=N_{\rm P}+1,...,N$ are given later in \eqref{pi_s}. By noting $c_{qtn} = \sum_k f_{nk} x_{qtk}$ and the central limit theorem, $\Delta_{\delta u_{qtn} \leftarrow c_{qtn} } (c_{qtn})$ in \eqref{pri_c} is treated as a Gaussian distribution with mean $p_{qtn}$ and variance $\tau_{p_{qtn}}$. The detailed calculations of $p_{qtn}$ and $\tau_{p_{qtn}}$ are given later in \eqref{p_aprox}. With \eqref{pri_s} and \eqref{pri_c}, $\Delta_{\delta_{u_{qtn}} \to u_{qtn}}(u_{qtn}), n=N_{\rm P}+1,...,N$ is obtained as
	\begin{align}
		\label{ddu2}
		\Delta_{\delta_{u_{qtn}} \! \to u_{qtn}} \! (u_{qtn}) \! = \! \sum_{s \in \mathcal S} \pi_{qtn} \! (s) \mathcal{CN}(u_{qtn} ; p_{qtn} s, \tau_{p_{qtn}} \!).
	\end{align}
	Recall that $s_{qn},n=1,...,N_{\rm P}$ are pilots. Then, we obtain $\Delta_{\delta_{u_{qtn}} \to u_{qtn}}(u_{qtn}), n=1,...,N_{\rm P}$ by setting $\pi_{qtn}(s) = 1 $ for constellation point $s = s_{qn}$ in \eqref{ddu2}. Substituting \eqref{d_taud_left} and \eqref{ddu2} into \eqref{dny}, we obtain $\Delta_{y_{qtm} \leftarrow u_{qtn} }(u_{qtn}) = (\sum_{s \in \mathcal S} \pi_{qtn}  (s) \mathcal{CN}(u_{qtn} ; p_{qtn} s, \tau_{p_{qtn}} )) \mathcal{CN}(u_{qtn}; d_{y_{qtm} \leftarrow u_{qtn}}, $ $\tau_{d_{y_{qtm} \leftarrow u_{qtn}}}) $. We next  express the mean and the variance of $\Delta_{y_{qtm} \leftarrow u_{qtn} }(u_{qtn})$ as
	\vspace{-0.3cm}
	\begin{subequations}
	\label{uv_u}
	\begin{align}
		\label{u_hat}
		 \mu_{ y_{qtm} \! \leftarrow u_{qtn} }  &  \! = \! \int_{u_{qtn}} \! u_{qtn} \Delta_{y_{qtm} \leftarrow u_{qtn} }(u_{qtn}) \notag \\
		 & \! = \!  f_{qtm} ( d_{y_{qtm} \leftarrow u_{qtn}} \! ; \! \tau_{d_{y_{qtm} \leftarrow u_{qtn}}} ), \\
		% \label{v_u_hat} \notag
		v_{ y_{qtm} \! \leftarrow u_{qtn} } & \! = \! \int_{u_{qtn}} \! |u_{qtn} \! - \!\mu_{ y_{qtm} \leftarrow u_{qtn} } |^2  \Delta_{y_{qtm} \leftarrow u_{qtn} }(u_{qtn}) \notag \\
		& \! = \! 
		 \tau_{d_{y_{qtm} \! \leftarrow u_{qtn}}} \! f^{'}_{qtm} \! ( d_{y_{qtm} \! \leftarrow u_{qtn}} \! ; \! \tau_{d_{y_{qtm} \! \leftarrow u_{qtn}}} \! ), \label{v_u_hat}
	\end{align}
	\end{subequations}
	where $\mu_{ y_{qtm} \leftarrow u_{qtn} }$ in \eqref{u_hat} is regarded as a function of $d_{y_{qtm} \leftarrow u_{qtn}}$, and the derivations involved in \eqref{v_u_hat} are similar to those in \cite[eq. (43)-(44)]{Parker_1}. In \eqref{uv_u}, $\{ d_{y_{qtm} \leftarrow u_{qtn}} \}$ and $\{ \tau_{d_{y_{qtm} \leftarrow u_{qtn}}} \}$ have $MNTQ$ elements to be updated. We introduce quantities invariant to index $m$, i.e., the mean and variance of $\prod_{m} \Delta_{y_{qtm} \to u_{qtn}}(u_{qtn})$ as
	\begin{subequations}
	\label{y_to_u}
	\begin{align}
		\label{tau_qnt}
		& \tau_{d_{qtn}} = \Big(\sum_{m}  |g_{mn}|^2 (\sigma_w^2+\tau_{b_{qtm}})^{-1} \Big)^{-1}, \\
		% \label{q_nt}
		& d_{qtn} = \mu_{ u_{qtn} } + \tau_{d_{qtn}} \sum_{m} g_{mn}^* (\sigma_w^2+\tau_{b_{qtm}})^{-1} (y_{qtm}-b_{qtm}), \notag
	\end{align}
	\end{subequations}
	with $\tau_{d_{y_{qtm} \leftarrow u_{qtn}}} \! \approx \! \tau_{d_{qtn}}$ and $d_{y_{qtm} \leftarrow u_{qtn}} \! \approx \! d_{qtn} - \tau_{d_{qtn}}  g_{mn}^* \times $ $ (\sigma_w^2+\tau_{b_{qtm}})^{-1} (y_{qtm}-b_{qtm})$. Substituting these approximations into \eqref{u_hat} and applying the first-order Taylor series expansion at point $d_{qtn}$, we obtain
	\begin{align}
		\label{u_ext}
		\mu_{ y_{qtm} \leftarrow u_{qtn} } & \approx f_{qtm}(d_{qtn} ; \tau_{d_{qtn}}) \notag \\ & 
		~ \quad - f'_{qtm}( d_{qtn} ; \tau_{d_{qtn}} ) \frac{\tau_{d_{qtn}} g_{mn}^* (y_{qtm}-b_{qtm})}{\sigma_w^2+\tau_{b_{qtm}}}. 
	\end{align}
	From the definitions of $f_{qtm}(\cdot)$ and $f'_{qtm}(\cdot)$ in \eqref{uv_u} and the sum-product rule, we obtain $ f_{qtm}(d_{qtn} ; \tau_{d_{qtn}})$ and $f'_{qtm}( d_{qtn} ; \tau_{d_{qtn}} ) \tau_{d_{qtn}}$ as the mean and the variance of $\Delta_{u_{qtn}} (u_{qtn}) = \prod_{m} \Delta_{y_{qtm} \to u_{qtn}} (u_{qtn}) \Delta_{\delta_{u_{qtn}} \to u_{qtn}} (u_{qtn})  $, denoted by $\mu_{u_{qtn}}$ and $v_{u_{qtn}}$, respectively. We express $\mu_{ u_{qtn} }$ and $v_{u_{qtn}}, n = N_{\rm P}+1,...,N$ as
	\begin{subequations}
	\begin{align}
		\mu_{ u_{qtn} } & = \mathbb{E}[u_{qtn}|d_{qtn},p_{qtn}, s_{qn} \sim \pi_{qtn};\tau_{d_{qtn}},\tau_{p_{qtn}}], \label{mu_u} \\  % = \sum_{s \in \mathcal{S}} \lambda_{qtn}(s) \frac{\tau_{d_{qtn}}  p_{qtn} s+\tau_{p_{qtn}}d_{qtn} }{\tau_{d_{qtn}} + \tau_{p_{qtn}} } \label{mu_u} \\
		v_{u_{qtn}} & = {\rm var}(u_{qtn}|d_{qtn},p_{qtn},s_{qn} \sim \pi_{qtn};\tau_{d_{qtn}},\tau_{p_{qtn}}), \label{v_u_qtn} % \notag \\
		% & = \sum_{s \in \mathcal{S}}\lambda_{qtn}(s) \left| \frac{\tau_{d_{qtn}} p_{qtn} s+\tau_{p_{qtn}}d_{qtn} }{\tau_{d_{qtn}} + \tau_{p_{qtn}}}\right|^2 - \left| \sum_{s \in \mathcal{S}} \lambda_{qtn}(s) \frac{\tau_{d_{qtn}} p_{qtn} s+\tau_{p_{qtn}}d_{qtn} }{\tau_{d_{qtn}} + \tau_{p_{qtn}}} \right|^2 + \frac{\tau_{d_{qtn}} \tau_{p_{qtn}}}{\tau_{d_{qtn}} + \tau_{p_{qtn}}} \label{v_u_qtn}
	\end{align} 
	\end{subequations}
	% with
	% \begin{align}
	% 	\lambda_{qtn}(s) = \frac{ \pi_{qtn}(s) \mathcal{CN}(0;d_{qtn}-p_{qtn} s , \tau_{d_{qtn}} + \tau_{p_{qtn}}) }
	% 	{\sum_{s \in \mathcal{S}} \pi_{qtn}(s)\mathcal{CN}(0;d_{qtn}-p_{qtn} s  , \tau_{d_{qtn}} + \tau_{p_{qtn}}) }. \label{lam}
	% \end{align}
	where $ \mathbb{E}[\cdot]$ and ${\rm var}(\cdot)$ are with respect to $p(u_{qtn}|d_{qtn},p_{qtn}, s_{qn} \sim \pi_{qtn};\tau_{d_{qtn}}, \tau_{p_{qtn}}) \propto \big( \sum_{s \in \mathcal S} \pi_{qtn}(s)$ $\mathcal{CN}(u_{qtn} ; p_{qtn} s, \tau_{p_{qtn}}) \big) \mathcal{CN}(u_{qtn};d_{qtn},\tau_{d_{qtn}})$ with $s_{qn} \sim \pi_{qtn}$ being the abbreviation of $s_{qn} \sim \sum_{s \in \mathcal S} \pi_{qtn}(s) \delta(s_{qn} - s)$. $\mu_{ u_{qtn} }$ and $v_{u_{qtn}}, n = 1,...,N$ are obtained by setting $\pi_{qtn}(s) = 1 $ for constellation point $s = s_{qn}$ in \eqref{mu_u} and \eqref{v_u_qtn}, respectively.

	\subsubsection{Messages related to \{$s_{qn}$\}} 
	Given \eqref{pri_c} and \eqref{y_to_u} with $u_{qtn}=s_{qn}c_{qtn}$ in \eqref{hybrid_c}, we use the sum-product rule to obtain 
	\begin{align}
	\label{utos}
		\Delta_{\delta {u_{qtn} \to s_{qn}}} (s_{qn}) = \mathcal{CN} \left(s_{qn} ; \frac{d_{qtn}}{p_{qtn}} , \frac{(\tau_{d_{qtn}} + \tau_{p_{qtn}})}{|p_{qtn}|^{2}} \right),
	\end{align}
	for $ n=N_{\rm P}+1,...,N$. Then the input of the decoder of $\mathbf S_{\rm D} $ is $\Delta_{ \text{DEC}_s \leftarrow s_{qn} }(s_{qn}) = \prod_{t=1}^T \Delta_{\delta {u_{qtn} \to s_{qn}}} (s_{qn})$. With $\Delta_{ \text{DEC}_s \leftarrow s_{qn} }$ $(s_{qn})$ and codebook $\mathcal C_s$, we obtain 
	%The decoding of $\{\mathbf s_q\}$ is operated  at bit level. Denote by $(b_1(s),...,b_{{\rm log_2}|\mathcal S|}(s))$ the bit mapping of constellation point $s \in S$, where we use $b_i(s) = +1$ or $-1$ to represent bit $1$ and $0$, respectively. By exploiting the coding constraint of $\{\mathbf s_q\}$, we obtain the \emph{extrinsic} log-likelihood ratios as
	%\begin{align}
	%	L_{s_{qn},i} = \ln \frac{ p \left(b_i(s)=+1|\{\Delta_{ \text{DEC}_s \leftarrow s_{qj} }(s_{qj})\}_{j\neq n},p(\mathbf s_{1},...,\mathbf s_{Q}) \right)}{ p \left(b_i(s)=-1|\{\Delta_{ \text{DEC}_s \leftarrow s_{qj} }(s_{qj})\}_{j\neq n},p(\mathbf s_{1},...,\mathbf s_{Q}) \right)}, i = 1,...,\log_2|\mathcal S|. \label{log_s}
	%\end{align}
	%With \eqref{log_s},
	the \emph{extrinsic} message of the decoder of $\mathbf S_{\rm D}$ as 
	\begin{align}
		\label{dec_s}
		\Delta_{ \text{DEC}_s \to s_{qn} }(s_{qn}) = \sum_{s \in \mathcal{S}} \alpha_{qn}(s) \delta(s_{qn} - s),
	\end{align}
	where $\alpha_{qn}(s)$ represents the probability of $s_{qn} = s$ for $s \in \mathcal S$. Combining $\Delta_{ \text{DEC}_s \to s_{qn} }(s_{qn})$ and $\Delta_{\delta {u_{qtn} \to s_{qn}}} (s_{qn})$, we obtain 
	\begin{align}
	\label{deltau_left_s}
	\Delta_{ \delta_{u_{qtn}} \leftarrow s_{qn}} (s_{qn}) & = \prod_{j \neq t} \Delta_{\delta {u_{qjn} \to s_{qn}}} (s_{qn}) \Delta_{ \text{DEC}_s \to s_{qn} }(s_{qn}) \notag \\
	& = \sum_{s \in \mathcal{S}} \pi_{qtn}(s) \delta(s_{qn} - s) 
	\end{align}
	with
	\begin{align}
		\label{pi_s}
		\pi_{qtn}(s) = \frac{\alpha_{qn}(s)   \prod_{j \neq t} \mathcal{CN}(s ; \frac{{d}_{qjn}}{{p}_{qjn}} , \frac{\tau_{d_{qjn}} + \tau_{p_{qjn}}}{|{p}_{qjn}|^{2}}) }
		{ \sum_{s \in \mathcal{S}} \alpha_{qn}(s)   \prod_{j \neq t} \mathcal{CN}(s ; \frac{{d}_{qjn}}{{p}_{qjn}} , \frac{\tau_{d_{qjn}} + \tau_{p_{qjn}}}{|{p}_{qjn}|^{2}}) }.
	\end{align}
	With \eqref{utos} and \eqref{dec_s}, the mean and the variance of $\Delta_{s_{qn}}(s_{qn})\! =\!  \prod_{t} \Delta_{\delta {u_{qtn} \to s_{qn}}} (s_{qn}) $ $\Delta_{ \text{DEC}_s \to s_{qn} }(s_{qn}),n=N_{\rm P}+1,...,N$ are
	\begin{equation}
	\begin{aligned}
		& \mu_{s_{qn}}  = \mathbb{E}[s_{qn}|\{{p}_{qtn} , {q}_{qtn}\}_{t=1}^T, s_{qn} \sim {\alpha}_{qn}; \tau_{p_{qtn}},\tau_{d_{qtn}}],   \\
		&v_{s_{qn}} = {\rm var}(s_{qn}|\{{p}_{qtn},{q}_{qtn}\}_{t=1}^T, s_{qn} \sim {\alpha}_{qn}; \tau_{p_{qtn}},\tau_{d_{qtn}}),  \label{v_s_qn}
	\end{aligned}
	\end{equation} 
	where $ p(s_{qn}|\{{p}_{qtn}, {q}_{qtn}\}_{t=1}^T, s_{qn} \! \sim \! {\alpha}_{qn}; \tau_{p_{qtn}},\tau_{d_{qtn}}) \propto \prod_t \mathcal{CN}(s_{qn} ;$ 
	$ \frac{d_{qtn}}{p_{qtn}} , \frac{\tau_{d_{qtn}} + \tau_{p_{qtn}}}{|p_{qtn}|^2}) ( \sum_{s \in \mathcal{S}} \alpha_{qn}(s) \delta(s_{qn} - s) ) $ with $s_{qn} \sim {\alpha}_{qn}$ being the abbreviation of $s_{qn} \sim ( \sum_{s \in \mathcal{S}} \alpha_{qn}(s) \delta(s_{qn} - s) )$.
	
	%With \eqref{utos}, the log likelihood ratio (LLR) is calculated as
	%\begin{align}
	%	\widehat{\lambda}_n^{in} & = \text{ln} \frac{\Delta_{\delta_{u_n} \to s_{qn}} (s_{qn} = - 1)}{\Delta_{\delta_{u_n} \to s_{qn}} (s_{qn} = +1)} \notag \\
	%	& = - \frac{4}{\tau_d_n + \tau_p_n } \text{Re}(\widehat{q}_n \widehat{p}_n^*), \quad n = N_{\rm P}+1,...,N. 
	%\end{align}
	%Then the decoder (including demodulator) combines $\widehat{\lambda}_n^{in}$ and the parity information to output the extrinsic log LLR $\widehat{\lambda}_n^{out}$. The message from factor node $p(s_{qn})$ to variable node $s_{qn}$ is expressed as in \eqref{pri_s} with
	%\begin{equation}
	%	\pi_n = \left(1 + e^{\widehat{\lambda}_n^{out}} \right)^{-1}.
	%\end{equation}
	%The message from variable node $s_{qn}$ to factor node $\delta(u_{qtn} - s_{qn}  c_{qtn})$ is 
	%\begin{align}
	%	\Delta_{n \to \delta u_{qtn}}(s_{qn}) = 	\Delta_{\text{DEC}_s \to n} (s_{qn}) + \sum_{i \neq n} \Delta_{\delta u_{it} \to n}(s_{qn}) 
	%\end{align}
	
	% Note that when the message passing converges, the decoder outputs the posterior message of $\mathbf s$.
	
	\subsubsection{Messages related to \{$c_{qtn}$\} and \{$x_{qtk}$\}} Similarly to \eqref{ddu2}, we have $\Delta_{\delta_{c_{qtn}} \leftarrow c_{qtn}} (c_{qtn}) = \Delta_{\delta_{u_{qtn}} \to c_{qtn}} (c_{qtn}) = \sum_{s \in \mathcal{S}} \pi_{qtn}(s) \mathcal{CN}(c_{qtn} ; d_{qtn}/s , \tau_{d_{qtn}})$ with factor node $\delta(c_{qtn}-\mathbf{f}_n^T \mathbf x_{qt})$ abbreviated as $\delta_{c_{qtn}}$. Then, we consider
	\begin{align}
		\label{D_dctox}
		\Delta_{\delta c_{qtn} \! \to x_{qtk} }\! (x_{qtk}) \! = \! \int_{c_{qtn}} \! \! \!  \Delta_{\delta_{c_{qtn}} \!  \leftarrow c_{qtn}} \! (c_{qtn}) p(c_{qtn} | x_{qtk}),
	\end{align}
	with $p(c_{qtn} | x_{qtk}) = \int_{\mathbf{x}_t/x_{qtk}} p(c_{qtn} | \mathbf x_{qt}) \prod_{j \neq k} \Delta_{x_{qtj} \to \delta c_{qtn}}$ $(x_{qtj})$. Similarly to the treatment of $p(z_{qtm} | u_{qtn})$ in \eqref{pzu2}, $p(c_{qtn}|x_{qtk})$ is treated as a Gaussian distribution. Specifically, define
	\begin{equation}
		\begin{aligned}
			\label{p_nt}
			& p_{qtn} = \sum_k f_{nk} \mu_{\delta c_{qtn} \leftarrow x_{qtk}},  \\ 
			& \tau_{p_{qtn}} = \sum_k |f_{nk}|^2 v_{\delta c_{qtn} \leftarrow x_{qtk}},
		\end{aligned}	
	\end{equation}
	where $\mu_{\delta c_{qtn} \leftarrow x_{qtk}}$ and $v_{\delta c_{qtn} \leftarrow x_{qtk}}$ are the mean and the variance of $\Delta_{x_{qtk} \to \delta c_{qtn}}(x_{qtk})$, respectively. Then, we have
	\begin{align}
		\label{p_cx}
		p( c_{qtn} | x_{qtk}) \approx \mathcal{CN}(c_{qtn}; f_{nk}(x_{qtk} - \mu_{x_{qtk}}) + p_{qtn}, \tau_{p_{qtn}} ).
	\end{align}
	Substituting \eqref{p_cx} into \eqref{D_dctox}, $\Delta_{\delta c_{qtn} \to x_{qtk} }(x_{qtk})$ is a Gaussian mixture with $|\mathcal{S}|$ components. Then $\prod_n \Delta_{\delta c_{qtn} \to x_{qtk} }(x_{qtk})$ passed to variable node $x_{qtk}$ becomes a Gaussian mixture with $|\mathcal{S}|^{N}$ components, resulting in a prohibitively high computational complexity for a relatively large $N$. Analogously to \cite[Sec. II-D]{Parker_1}, we approximate the logarithm of $\Delta_{\delta c_{qtn} \to x_{qtk} }(x_{qtk})$ in \eqref{D_dctox} by its second-order Taylor series expansion, which implies that we use a Gaussian distribution to approximate $\Delta_{\delta c_{qtn} \to x_{qtk} }(x_{qtk})$. In specific, define	 $g_{qtn}( a ) = \text{log} \int_{c_{qtn}} ( \sum_i \pi_{qtn}(s) \mathcal{CN}(c_{qtn} ;$ $ d_{qtn}/s , \tau_{d_{qtn}}) )  \mathcal{CN}(c_{qtn}; a , \tau_{p_{qtn}} )$. Then, the second-order Taylor expansion of $g_{qtn}(\cdot), n=N
	_{\rm P}+1,...,N$ at point $p_{qtn}$ yields
	\begin{align}
		\Delta_{\delta c_{qtn} \to x_{qtk} }(x_{qtk}) \approx \mathcal{CN} \Big(x_{qtk} ; & \mu_{x_{qtk}} \! +  \ \! \frac{\tau_{p_{qtn}} (\mu_{c_{qtn}}\! - \! p_{qtn}) }{f_{nk}(\tau_{p_{qtn}}\! - \! v_{c_{qtn}})}  \notag \\
		& \frac{\tau_{p_{qtn}}^2}{|f_{nk}|^2(\tau_{p_{qtn}} \!-\! v_{c_{qtn}})} \Big), \label{42}
	\end{align}
	with
	\begin{equation}
	\label{27}
	\begin{aligned}
		& {\mu}_{c_{qtn}} = \mathbb{E}[c_{qtn}|d_{qtn},p_{qtn}, s_{qn} \sim {\pi}_{qtn};\tau_{d_{qtn}},\tau_{p_{qtn}}],
		% \label{mu_c} 
		\\
		& v_{c_{qtn}} = {\rm var}(c_{qtn}|d_{qtn},p_{qtn},s_{qn} \sim {\pi}_{qtn};\tau_{d_{qtn}},\tau_{p_{qtn}}),
		%\label{v_c_qtn}
	\end{aligned}
	\end{equation}
	where $p(c_{qtn}|d_{qtn},p_{qtn},s_{qn} \sim {\pi}_{qtn};\tau_{d_{qtn}},\tau_{p_{qtn}}) \propto ( \sum_i \pi_{qtn}(s) $ $ \mathcal{CN}(c_{qtn} ;  d_{qtn}/s , \tau_{d_{qtn}}) ) \mathcal{CN}(c_{qtn};{p}_{qtn},\tau_{p_{qtn}})$. We obtain $\Delta_{\delta c_{qtn} \! \to x_{qtk} }(x_{qtk}), n \! = \! 1,...,N_{\rm P}$ by setting $\pi_{qtn}(s)=1$ for constellation point $s = s_{qn}$ in \eqref{42}. With \eqref{27}, the variance and mean of $\prod_{n=1}^N \Delta_{\delta c_{qtn} \to x_{qtk} }(x_{qtk})$ are  
	\begin{subequations}
	\label{30}
		\begin{align}
		& \tau_{o_{qtk}} = \Big( \sum_n |f_{nk}|^2  (\tau_{p_{qtn}} -v_{c_{qtn}})\tau_{p_{qtn}}^{-2} \Big)^{
		-1} \label{tau_o_qtk}, \\
		&{o}_{qtk} = \mu_{x_{qtk}} + \tau_{o_{qtk}}  \sum_n f_{nk}^* \tau_{p_{qtn}}^{-1}( {\mu}_{c_{qtn}} - p_{qtn} ).
	\end{align}
	\end{subequations}
	
	Similarly to the approximation of $\Delta_{y_{qtm} \to u_{qtk}}(u_{qtk})$ in \eqref{d_yu2}, we obtain $ \Delta_{y_{qtm} \to x_{qtk}}(x_{qtk}) \approx \mathcal{CN}( x_{qtk} ; \mu_{x_{qtk}} +  \frac{1}{h_{mk}} (y_{qtm}- b_{qtm}) , \frac{1}{|h_{mk}|^2}(\sigma_w^2 + \tau_{b_{qtm}}) )$. Then, we obtain $\prod_{m=1}^{M}\Delta_{y_{qtm} \to x_{qtk}} (x_{qtk}) \! = \! \mathcal{CN}(x_{qtk}; r_{qtk},\tau_{r_{qtk}})$ with 
	\begin{subequations}
		\begin{align}
		&\tau_{r_{qtk}} = \Big(\sum_m |h_{mk}|^2 (\tau_{b_{qtm}} + \sigma_w^2)^{-1} \Big)^{-1},  \label{tau_r_qtk} \\
		& {r}_{qtk} = \mu_{x_{qtk}} + \tau_{r_{qtk}} \sum_m h_{mk}^*(\tau_{b_{qtm}} + \sigma_w^2)^{-1} (y_{qtm}- {b}_{qtm}).
		\label{r_qtk}
	\end{align}
	\end{subequations}
	The inputs of the decoder of $\mathbf X$ are $ \Delta_{ \text{DEC}_x \leftarrow x_{qtk} }(x_{qtn}) = \mathcal{CN}(x_{qtk}; r_{qtk},\tau_{r_{qtk}}) \mathcal{CN}(x_{qtk}; o_{qtk},\tau_{o_{qtk}}),\forall q,t,k$. With $\mathcal C_x$ and $\Delta_{ \text{DEC}_x \leftarrow x_{qtk} }(x_{qtn})$, the \emph{extrinsic} message of the decoder of $\mathbf X$ is
	\begin{align}
		\label{dec_x}
		\Delta_{ \text{DEC}_x \to x_{qtk} }(x_{qtk}) = \sum_{x \in \mathcal{X}} \beta_{qtk}(x) \delta(x_{qtk} - x).
	\end{align}
	%where $\beta_{qtk}(x) = \prod_{i=1}^{{\rm log_2}|\mathcal X|} (1+{\rm exp}(-b_i(x)L_{x_{qtk},i}))^{-1}$.
	Then, the mean and the variance of $\Delta_{x_{qtk}}(x_{qtk})=\Delta_{ \text{DEC}_x \leftarrow x_{qtk} }(x_{qtk})$ $\Delta_{ \text{DEC}_x \to x_{qtk} }(x_{qtk})$ are 
	\begin{equation}
	\begin{aligned}
		\mu_{x_{qtk}} &= \mathbb{E}[x_{qtk}|r_{qtk},o_{qtk}, x_{qtk}\sim {\beta}_{qtk}; \tau_{r_{qtk}},\tau_{o_{qtk}}], 
		% \label{mu_x} 
		\\
		v_{x_{qtk}} & = {\rm var}(x_{qtk}|r_{qtk},o_{qtk}, x_{qtk}\sim {\beta}_{qtk}; \tau_{r_{qtk}},\tau_{o_{qtk}}),
		\label{v_x_qtk}  
	\end{aligned} 
	\end{equation}
	where $p(x_{qtk}|r_{qtk},o_{qtk} , x_{qtk} \! \sim \! {\beta}_{qtk};\tau_{r_{qtk}} ,\tau_{o_{qtk}}) \! \propto \! \mathcal{CN}(x_{qtk} ;$  $r_{qtk} , \tau_{r_{qtk}} ) \mathcal{CN}(x_{qtk} ; o_{qtk} , \tau_{o_{qtk}} ) ( \sum_{x \in \mathcal{X}} \beta_{qtk}(x) \delta(x_{qtk}- x) ) $.
	% with $x_{qtk}\sim {\beta}_{qtk}$ being the abbreviation of $x_{qtk}\sim \sum_{x \in \mathcal{X}} \beta_{qtk}(x) \delta(x_{qtk}-x) $.
	
	To reduce complexity, we approximate $\mu_{y_{qtm} \leftarrow x_{qtk}}$ and $\mu_{c_{qtn} \leftarrow x_{qtk}}$ by following \eqref{u_ext} as
	\begin{align}
		& \mu_{y_{qtm} \leftarrow x_{qtk}}  \approx \mu_{x_{qtk}} - v_{x_{qtk}}  h_{mk}^* (\sigma_w^2+\tau_{b_{qtm}})^{-1}  (y_{qtm}-b_{qtm}), \label{x_to_y}
			\\
		& \mu_{c_{qtn} \leftarrow x_{qtk}} \approx \mu_{x_{qtk}} - v_{x_{qtk}} f_{nk}^* (\tau_{p_{qtn}})^{-1} (\mu_{c_{qtn}}-p_{qtn}) \label{x_to_c}.
	\end{align}
	Substituting \eqref{x_to_c} and $v_{x_{qtk}} \approx v_{\delta c_{qtn} \leftarrow x_{qtk}}$ into \eqref{p_nt}, we obtain
	\begin{equation}
		\label{p_aprox}
		\begin{aligned}
			& p_{qtn} \approx  \sum_k f_{nk} \mu_{x_{qtk}} - \tau_{p_{qtn}}(\tau_{p_{qtn}}^{\rm p})^{-1}({\mu}_{c_{qtn}} - p_{qtn}^{\rm p}), \\
			& \tau_{p_{qtn}} \approx \sum_k |f_{nk}|^2 v_{x_{qtk}},
		\end{aligned}	
	\end{equation}
	where superscript ``$\rm p$" is used to indicate that $p_{qtn}^{\rm p}$ and $\tau_{p_{qtn}}^{\rm p}$ are obtained from the previous iteration. We then substitute \eqref{u_ext} and \eqref{x_to_y} into \eqref{btaub} to obtain
	\begin{subequations}
	\begin{align}
		& b_{qtm} \approx \sum_{ n} g_{mn} \mu_{ u_{qtn} } + \sum_k h_{mk} \mu_{x_{qtk}}  \notag \\
		& \hspace{1.2cm} - \tau_{b_{qtm}}(\sigma_w^2+\tau_{b_{qtm}}^{\rm p})^{-1} (y_{qtm}-b_{qtm}^{\rm p}), \label{b_qtm} \\
		& \tau_{b_{qtm}}\approx \sum_{n} |g_{mn}|^2 v_{u_{qtn}} + \sum_k |h_{mk}|^2 v_{x_{qtk}}. \label{tau_b_qtm}
	\end{align}	
	\end{subequations}
	
	\vspace{-0.2cm}
	\subsubsection{Scalar variances}
	% why目的：significantly 降复杂度 
	% what：update message的方差 
	% what: one type of message variance. replace by a scalar.
	 We further show that the variances of messages can be approximated by scalers. We first define
	\begin{align}
		&v_u = \frac{1}{QTN} \sum_{q,t,n} v_{u_{qtn}}, ~ v_s = \frac{1}{QN} \sum_{q,n} v_{s_{qn}}, \notag \\
		& v_c = \frac{1}{QTN} \sum_{q,t,n} v_{c_{qtn}}, ~ {\rm and} ~ v_x = \frac{1}{QTK} \sum_{q,t,k} v_{x_{qtk}}, \label{v_four}
	\end{align}
	in place of $v_{u_{qtn}}$, $v_{s_{qn}}$, $v_{c_{qtn}}$, and $v_{x_{qtk}}$ respectively.
	
	We next approximate $\tau_{b_{qtm}}$ in \eqref{tau_b_qtm}. Note that message passings of $x_{qtk}$ at different sub-block $q$ and time-slot $t$  are identical, leading to $\frac{1}{K} \sum_k v_{x_{qtk}} \approx v_x, \forall q,t$. Similarly, we obtain $\frac{1}{N} \sum_n v_{u_{qtn}} \approx v_u, \forall q,t$. The powers of rows of $\mathbf G$ (or $\mathbf H$) commonly tend to be equal as matrix size increases, i.e., $\sum_n |g_{mn}|^2 \approx \frac{1}{M} \| \mathbf G\|_F^2, \forall m $ (or $\sum_k |h_{mk}|^2 \approx \frac{1}{M} \| \mathbf H\|_F^2, \forall m$),  known as \emph{channel hardening} \cite{tse2005fundamentals}. Substituting these approximations into \eqref{tau_b_qtm}, we obtain 
	\begin{align}
		\tau_{b_{qtm}} \approx \frac{1}{M} \| \mathbf G\|_F^2 v_u +  \frac{1}{M} \| \mathbf H\|_F^2 v_x  \triangleq \tau_b. \label{apr_b}
	\end{align}
	Similarly to \eqref{apr_b}, $\tau_{d_{qtn}}$, $\tau_{o_{qtk}}$, $\tau_{r_{qtk}}$, and $\tau_{p_{qtn}}$ in \eqref{tau_qnt}, \eqref{tau_o_qtk}, \eqref{tau_r_qtk}, and \eqref{p_aprox}  are approximated as
	\begin{align}
		 & \tau_{d_{qtn}} \! \approx \! \frac{N(\sigma_w^2+\tau_b)}{\| \mathbf G\|_F^2}  \triangleq \tau_d, ~
		\tau_{o_{qtk}} \! \approx \! \frac{K \tau_p^{2}}{\|\mathbf F\|_F^2 ( \tau_p - v_c )} \triangleq \tau_o, \notag \\
		& \tau_{r_{qtk}} \! \approx \! \frac{K(\sigma_w^2 + \tau_b)}{\|\mathbf H\|_F^2}  \triangleq \tau_r, ~ {\rm and} ~ 
		 \tau_{p_{qtn}} \! \approx \! \frac{\| \mathbf F \|_F^2 v_x}{N}  \triangleq \tau_p. \label{apr}
	\end{align}

	\vspace{-0.2cm}
	\subsection{Overall Algorithm}

	The overall algorithm, as summarized in Algorithm 1, involves mean and variance computations of messages. To start with, module A outputs mean-variance pair $(d_{qtn},\tau_d)$ of $\prod_m \Delta_{y_{qtm} \to u_{qtn}}(u_{qtn})$ in step 1. Then, module B outputs $(p_{qtn},\tau_p)$ of $\Delta_{\delta_{c_{qtn}} \to c_{qtn}}(c_{qtn})$ in step 2, which is also the input of Module C. Combining $(d_{qtn},\tau_d)$ and $(p_{qtn},\tau_p)$ together with $\pi_{qtn}(s)$ (in step 9), we update the mean-variance pairs of $\Delta_{u_{qtn}}(u_{qtn})$ and $\Delta_{c_{qtn}}(c_{qtn})$ in steps 3 and 4, respectively. In steps 5-6, we update the mean-variance pairs of $\prod_m \Delta_{y_{qtm} \to x_{qtk}}(x_{qtk})$ and $ \prod_n \Delta_{\delta_{c_{qtn}} \to x_{qtk}}$ $(x_{qtk})$. Combining these two messages together with $\beta_{qtk}(x)$ in step 7, we update the mean-variance pair of $\Delta_{x_{qtk}}(x_{qtk})$ in step 8. Similarly, by using $(d_{qtn},\tau_d)$ in step 1, $(p_{qtn},\tau_p)$ in step 2, and $\alpha_{qn}(s)$ in step 9, we update the mean-variance pairs of $\Delta_{s_{qn}}(s_{qn})$ in step 10. Step 11 updates the auxiliary variables for the subsequent iterations. Algorithm 1 is executed until convergence, where $\mu_{x_{qtk}}$ and $\mu_{s_{qn}}$ are the estimates of $x_{qtk}$ and $s_{qn}$, respectively.

	Algorithm 1 can be applied to some special cases with minor modifications. For example, for an uncoded system, we fix $\alpha_{qn}(s) = 1 / |\mathcal{S}|, n=N_{\rm P}+1,...,N$ and $\beta_{qtk}(x) = 1 / |\mathcal{X}|$ in steps 7 and 9. For a coded system with separate detection and decoding, we can fix $\beta_{qtk}(x) $ and $\alpha_{qn}(s)$ during the iterative process, and only update them at the final iteration. For the case of a blocked direct link, i.e., $\mathbf H = \mathbf 0$, we delete the messages between variable node $x_{qtk}$ and factor node $p(y_{qtm}|\mathbf u_{qt}, \mathbf x_{qt})$. Specifically, we delete the computation of $\tau_r$ and $r_{qtk}$ in step 5, and replace steps 8 and 11 with $ \mu_{x_{qtk}} = \mathbb{E}[x_{qtk}|o_{qtk}, x_{qtk}\sim {\beta}_{qtk};\tau_{o}] $, $ v_x = \frac{1}{KTQ} \sum_{k,t,q} {\rm var}(x_{qtk}|o_{qtk}, x_{qtk}\sim {\beta}_{qtk};\tau_{o}) $, $\tau_b=\frac{1}{M} \| \mathbf G\|_F^2 v_u$, and $b_{qtm} = \sum_{ n} g_{mn} \mu_{ u_{qtn} } - \tau_b (\sigma_w^2+\tau_b^{\rm p})^{-1} (y_{qtm}-b_{qtm}^{\rm p}) $, respectively.
	
	The algorithm complexity is dominated by the matrix and vector multiplications in steps 1, 2, 5, 6, and 11 with complexity $\mathcal{O}\big(QT(KN+KM+NM+NK) \big)$.
	Steps 3-4, 8, and 10 only involve point-wise operations. The decoding complexity in steps 7 and 9 is determined by the specific codes employed in the system. For commonly used codes such as convolutional codes and low-density parity-check (LDPC) codes, the decoding complexity is linear to the code length. To summarize, the algorithm complexity is linear to the system parameters $N$, $M$, $K$, $Q$, and $T$.
	
	{We now compare the complexity of our SAPIT method with the methods in \cite{Wenjing_JSAC} and \cite{Shuaishuai}. Assume that the methods in \cite{Wenjing_JSAC} and \cite{Shuaishuai} have the same transmission rate as our SAPIT method, i.e., $K \log_2 |\mathcal X|$ and $(N-N_{\rm p}) \log_2|\mathcal S|$ bits (per channel use) at the Tx and the RIS, respectively. Besides, we ignore the complexity of passive beamforming design in \cite{Wenjing_JSAC}. The main computational computation in \cite{Wenjing_JSAC} is from the TMP algorithm, which is $\mathcal{O}(QTKNM + QTM(N+K))$ in each TMP iteration. The computational complexity in \cite{Shuaishuai} is mainly from two parts, i.e., the reflection pattern optimization and the maximum likelihood (ML) detection. The complexity involved in pattern optimization is $\mathcal{O}\big( (N-N_{\rm p})^3 (\log_2|\mathcal S|)^3 K^3 (\log_2 |\mathcal X|)^3 N^2(M+K)+ (N-N_{\rm p})^4 (\log_2|\mathcal S|)^4 K^4 (\log_2 |\mathcal X|)^4 M \big)$. The complexity of the ML detection is roughly $2^{K \log_2 |\mathcal X| + (N-N_{\rm p}) \log_2|\mathcal S|}$. Therefore, the computational complexities involved in \cite{Wenjing_JSAC} and \cite{Shuaishuai} are both much higher than that of our proposed SAPIT method.
	}
	
	\begin{algorithm}[h]
		\small
	   \caption{\label{alg1}The Proposed Algorithm}
	   \begin{algorithmic}
		
		   \REQUIRE $\mathbf{Y}$, $\mathbf G$, $\mathbf F$, $\mathbf H$, $\sigma_w^2$
		   
		   \ Initialization: $\mu_{x_{qtk}} \! = \!\mu_{c_{qtn}} \! = \! \mu_{u_{qtn}} \! = \! p_{qtn} \! = \! b_{qtm} \! = \! 0$, $v_x \! = \! 1$,
		   
		   ~  $\tau_p \! = \! v_u \! = \! v_c \! = \frac{\| \mathbf F \|_F^2}{N}$, $\tau_b \! = \! \frac{\| \mathbf G\|_F^2}{M}  v_u  \! + \!  \frac{\| \mathbf H\|_F^2}{M} v_x$, $\pi_{qtn}(s)\! = \! \frac{1}{|\mathcal S|}$
		   
		   \	\textbf{while} the stopping criterion is not met  \textbf{do}
	
		   ~~~ \% Mean-variance pairs from module A to nodes $\{u_{qtn}\}$
		   
		   \quad \ 1: $ \tau_d =  N  (\| \mathbf G\|_F^2)^{-1}   (\sigma_w^2+\tau_b)$ 
		   
		   \quad ~~~ $ d_{qtn} = \mu_{ u_{qtn} } + N  (\| \mathbf G\|_F^2)^{-1}  \sum_{m} g_{mn}^* (y_{qtm}-b_{qtm}) $	
		   \vspace{0.05cm}
	
		   ~~~ \% Mean-variance pairs from module B to nodes $\{c_{qtn} \}$
	
		   \quad \ 2: $\tau_p  =  \frac{1}{N} \| \mathbf F \|_F^2 v_x $ 
	
		   \quad ~~~ $p_{qtn} =   \sum_k f_{nk} \mu_{x_{qtk}} - \tau_{p}(\tau_{p}^{\rm p})^{-1}({\mu}_{c_{qtn}} - p_{qtn}^{\rm p}) $
	
		   \vspace{0.05cm}
		   
		   ~~~ \% Mean-variance pairs at variable nodes $\{u_{qtn} \}$
	
		   \quad \ 3: $\mu_{ u_{qtn} } \! = \! \mathbb{E}[u_{qtn}|d_{qtn},p_{qtn},s_{qn}\sim {\pi}_{qtn};\tau_{d},\tau_{p}] $ 
		   
		   \quad ~~~ $ v_u \! = \! \frac{1}{NTQ} \sum_{n,t,q} {\rm var}(u_{qtn}|d_{qtn},p_{qtn},s_{qn}\sim {\pi}_{qtn};\tau_{d},\tau_{p})$
	   
		   \vspace{0.05cm}
	
		   ~~~ \% Mean-variance pairs at variable nodes $\{c_{qtn} \}$
	
		   \quad \ 4: $ \mu_{ c_{qtn} } = \mathbb{E}[c_{qtn}|d_{qtn},p_{qtn},s_{qn}\sim {\pi}_{qtn};\tau_{d},\tau_{p}] $ 
		   
		   \quad ~~~ $ v_c = \frac{1}{NTQ} \sum_{n,t,q} {\rm var}(c_{qtn}|d_{qtn},p_{qtn},s_{qn}\sim {\pi}_{qtn};\tau_{d},\tau_{p})$
	
		   \vspace{0.05cm}
	
		   ~~~ \% Mean-variance pairs from modules A and B to nodes $\{x_{qtk} \}$
	
		   \quad \ 5: $\tau_r = K (\|\mathbf H\|_F^2)^{-1} (\sigma_w^2 + \tau_b)$ 
		   
		   \quad ~~~ ${r}_{qtk} = \mu_{x_{qtk}} + K (\|\mathbf H\|_F^2)^{-1} \sum_m h_{mk}^* (y_{qtm}-{b}_{qtm})$
	
		   \vspace{0.05cm}
	
		   \quad \ 6: $\tau_o = K (\|\mathbf F\|_F^2)^{-1} \tau_p^{2}/( \tau_p - v_c )$ 
		   
		   \quad ~~~ ${o}_{qtk} = \mu_{x_{qtk}} + K (\|\mathbf F\|_F^2)^{-1} \frac{\tau_p}{\tau_p -v_c}\sum_n f_{nk}^*( {\mu}_{c_{qtn}} \! - \! p_{qtn} ) $
	
		   \vspace{0.05cm}
	
		   \quad \ 7: Update $\beta_{qtk}(x)$ in (30) from $\mathbf X$-decoder
		   
		   \vspace{0.05cm}
	
		   ~~~ \% Mean-variance pairs at variable nodes $\{ x_{qtk} \}$
	
		   \quad \ 8: $ \mu_{x_{qtk}}  =  \mathbb{E}[x_{qtk}|r_{qtk},o_{qtk}, x_{qtk}\sim {\beta}_{qtk};\tau_{r},\tau_{o}] $ 
		   
		   \quad ~~~ $ v_x  =  \frac{1}{KTQ} \sum_{k,t,q} {\rm var}(x_{qtk}|r_{qtk},o_{qtk}, x_{qtk}\sim {\beta}_{qtk};\tau_{r},\tau_{o}) $ % in \eqref{mu_x}.
	
		   \vspace{0.05cm}
	
		   \quad \ 9: Update $\alpha_{qn}(s)$ in (19) and $\pi_{qtn}(s)$ in (21) from $\mathbf S_{\rm D}$-decoder
	
		   ~~~ \% Mean-variance pairs at variable nodes $\{ s_{qn} \}$
	
		   \quad \  10: $\mu_{s_{qn}} \! = \! \mathbb{E}[s_{qn}|\{{p}_{qtn} , d_{qtn}\}_t, s_{qn} \! \sim \! {\alpha}_{qn}; \tau_{p},\tau_{d}] $ 
		   
		   \quad ~~~~~ $ v_s \! = \! \frac{1}{NQ} \sum_{n,q} {\rm var}(s_{qn}|\{{p}_{qtn} , d_{qtn}\}_t, s_{qn} \! \sim \! {\alpha}_{qn}; \tau_{p},\tau_{d})$ 
		   
		   \vspace{0.05cm}
	
		   ~~~ \% Quantities for subsequent iterations
	
		   \quad \ 11: $\tau_b = \frac{1}{M} \| \mathbf G\|_F^2 v_u +  \frac{1}{M} \| \mathbf H\|_F^2 v_x$ 
		   
		   \quad ~~~~~ $b_{qtm} = \sum_{ n} g_{mn} \mu_{ u_{qtn} } \! + \! \sum_k h_{mk} \mu_{x_{qtk}} - \frac{\tau_b (y_{qtm}-b_{qtm}^{\rm p})}{\sigma_w^2+\tau_b^{\rm p}} $
	
		   \textbf{end }
		   
		   \ENSURE $\mu_{x_{qtk}} $, $ \mu_{s_{qn}}$
		
	   \end{algorithmic}
	\end{algorithm}
	
	\section{State Evolution Analysis}
	In this section, we characterize the behavior of Algorithm 1 in the large-system limit. We first present a heuristic description of the SE in Sec. IV-A. We then prove the SE in Sec. IV-B by borrowing the techniques developed in \cite{sundeep} and \cite{AMP_SE}. Based on the SE, we analyze the achievable rate region of the SAPIT-MIMO system in Sec. IV-C.
	
	\subsection{SE Description}

	% \vspace{-0.4 cm}
	\begin{figure*}[h] 
		\centering
		\includegraphics[width = 6 in]{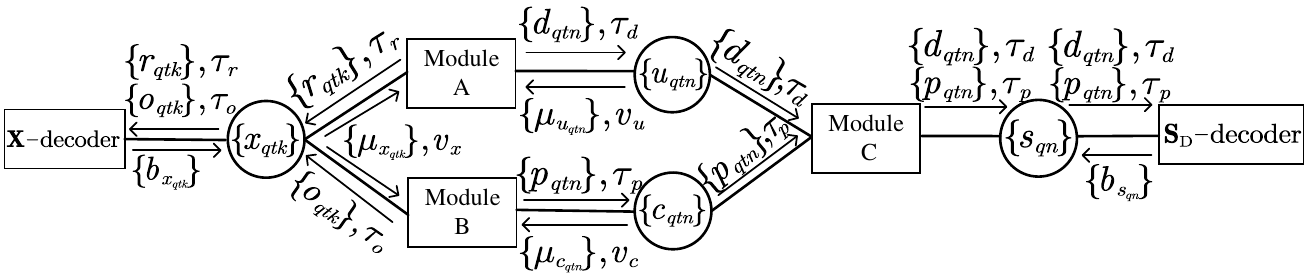}
		\vspace{-0.1cm}
		\caption{{A simplified version of the factor graph in Fig. 2. Modules A, B, and C represent factor nodes $\{p(y_{qtm}|\mathbf u_{qt},\mathbf x_{qt})\}$, $\{\delta(c_{qtn}-\mathbf f_n^T \mathbf x_{qt})\}$, and $\{ \delta(u_{qtn}-s_{qn}c_{qtn})\}$, respectively; the four circles are super variable nodes of $\{x_{qtk}\}$, $\{u_{qtn}\}$, $\{c_{qtn}\}$, and $\{s_{qn}\}$; the variables above (or below) an arrow represent the input/output of the corresponding module.}}
	%	The first part (at the top) consists of steps 3-8, 11-12, and 16-18 and the second part (at the bottom) consists of steps 1-2, 9-13, and 14-15, where the blue rectangles represent the messages exchanged between two parts.} 
		\label{App}
	\end{figure*}

	For convenience of analysis, we assume $||\mathbf G||_F^2/N =1$, $||\mathbf H||_F^2/K =1$, and $||\mathbf F||_F^2/K = \zeta$. This assumption does not lose generality. To see this, we construct an equivalent system model $\frac{\mathbf{Y}_q}{a} = \left( \frac{ \mathbf G}{b} \text{diag}(\mathbf s_q) \frac{b \mathbf F}{a} + \frac{\mathbf H}{a} \right) \mathbf X_{q} + \frac{\mathbf{W}_q}{a}$ with $a=\sqrt{||\mathbf H||_F^2/K},b=\sqrt{||\mathbf G||_F^2/N}$. Then we define $\tilde{\mathbf G} = \frac{ \mathbf G}{b}$, $\tilde{\mathbf H} = \frac{ \mathbf H}{a}$, and $\tilde{\mathbf F} = \frac{b \mathbf F}{a}$ as the equivalent channel matrices. It is easy to verify that $\tilde{\mathbf G}$, $\tilde{\mathbf H}$, and $\tilde{\mathbf F}$ satisfies the above power assumption by setting $\zeta = \frac{||\mathbf G||_F^2 ||\mathbf F||_F^2}{N ||\mathbf H||_F^2}$. 
	
	We regard the means of messages as the estimates of the corresponding random variables. We show in the next subsection that the message variances $(v_x,v_u,v_c,v_s,\tau_d,\tau_r,\tau_p,\tau_o)$ converge to the MSEs of the corresponding random variables in the large-system limit. For the convenience of discussion here, we use $(v_x,v_u,v_c,v_s,\tau_d,\tau_r,\tau_p,\tau_o)$ to represent the corresponding MSEs with some abuse of notation. We refer to $(v_x,v_u,v_c,v_s,\tau_d,\tau_r,\tau_p,\tau_o)$ as the state variables in the SE. For illustration, Fig. \ref{App} is a simplified version of Fig. \ref{FG}.
	%,where each circle represents a super variable node, each square represents a module, and variables marked in blue represent inputs/outputs of modules. 
	The following subsections show how the state variables are associated at each module or super variable node.
	
	% (Some inputs/outputs are not presented on Fig. \ref{App} for simplification. We will show that they can be characterized from the existing ones in Fig. \ref{App}.)
	
	% 变量 couple就是SE
	% We will show how these states couple. More interestingly, these coupled equations are exactly the SE equations which describes the performance of Algorithm 1 over iterations. To facilitate the SE description, the factor graph in Fig. \ref{FG} is simplified into two subgraphs in Fig. \ref{App}. Note that the computations of $v_x$, $v_u$, $v_c$, and $v_s$ at variable nodes require to combine the inputs of the connected parts and decoders in Fig. \ref{App}. The key point is to characterize the input-output relationship of each part and decoder.
	% In the subsequent descriptions, means and variances of messages are seen as estimates and MSE of corresponding variables.
	 \subsubsection{Transfer function of module A}
	 Module A infers $\{x_{qtk}\}$ and $\{u_{qtn}\}$ based on \eqref{hybrid_a}, where the input estimate and the corresponding MSE of $x_{qtk}$ (or $u_{qtn}$) are respectively $\mu_{ x_{qtk}}$ (or $\mu_{ u_{qtn}}$) and $v_x$ (or $v_u$), and the output estimate and the corresponding MSE of $x_{qtk}$ (or $u_{qtn}$) are respectively $r_{qtk}$ (or $d_{qtn}$) and $\tau_r$ (or $\tau_d$). Out goal is to describe how $\tau_r$ and $\tau_d$ vary as a function of $v_x$ and $v_u$, termed as the transfer function of module A. A key observation is that due to the linear mixing in \eqref{hybrid_a}, the output means $\{r_{qtk}\}$ (or $\{d_{qtn}\}$) can be well approximated by random samples over an AWGN channel. This approximation can be made rigorous in the large-system limit, as detailed in the next subsection. To be specific, we can model the outputs of module A as 
	\begin{align}
		& r_{qtk} = x_{qtk} + w_{r_{qtk}}, \label{R1} \\
		& d_{qtn} = u_{qtn} + w_{d_{qtn}}, \label{D}
	\end{align}
	where $\{w_{r_{qtk}}\}$ are i.i.d. from $\mathcal{CN}(0,\tau_{r})$, $\{x_{qtk}\}$ are i.i.d. with $x_{qtk} \sim \sum_{x \in \mathcal X} \frac{1}{|\mathcal{X}|} \delta(x_{qtk}-x)$, and $w_{r_{qtk}}$ is independent of $x_{qtk}$; $\{w_{d_{qtn}}\}$ are i.i.d. from $\mathcal{CN}(0,\tau_{d})$, $\{u_{qtn}\}$ are i.i.d. with $u_{qtn} \sim \sum_{s \in \mathcal S} \frac{1}{|\mathcal{S}|} \mathcal{CN}(u_{qtn};0,\zeta \frac{K}{N})$, and $w_{d_{qtn}}$ is independent of $u_{qtn}$.
	%\footnote{I.i.d. assumption on entries of $\mathbf X$ does not conflict with coding constraint $\mathbf X \in \mathcal X$, e.g., a random masking can be used for binary codes.}
	Substituting $||\mathbf G||_F^2/N = ||\mathbf H||_F^2/K =1$ and \eqref{apr_b} into \eqref{apr}, we obtain the transfer function of module A as
	\begin{align}
	\tau_d = \tau_{r} = \frac{N}{M}v_u+\frac{K}{M}v_x+\sigma_w^2. 
	\label{tau_r}
	\end{align}
	We next give a heuristic explanation of $r_{qtk}$ in \eqref{R1} and $\tau_r$ in \eqref{tau_r}. By substituting \eqref{hybrid_a}, \eqref{b_qtm},  \eqref{apr_b}, and \eqref{apr} into \eqref{r_qtk}, we obtain $r_{qtk} = x_{qtk} + w_{r_{qtk}}$ where $w_{r_{qtk}} = \sum_{m,j\neq k} h_{mk}^* h_{mj} (x_{qtj}-\mu_{ x_{qtj}}) + \sum_{m,n} h_{mk}^* g_{mn} (u_{qtn}-\mu_{ u_{qtn}}) + \sum_m h_{mk}^* w_{qtm} + \Delta_{qtk}$ with $\Delta_{qtk} = \sum_m h_{mk}^* \tau_b(\sigma_w^2+\tau_b^{\rm p})^{-1}(y_{qtm}-b_{qtm}^{\rm p})$.  $\Delta_{qtk}$ is an ``Onsager reaction term" \cite{AMP_2009} to ensure the independence between $w_{r_{qtk}}$ and $x_{qtk}$. From the randomness of $\{h_{mk},x_{qtj}-\mu_{x_{qtj}},g_{mn},u_{qtn} - \mu_{ u_{qtn}}\}$, $w_{r_{qtk}}$ are approximately Gaussian in a system of a large size. Note that $\mu_{u_{qtn}}$ and $\mu_{x_{qtk}}$ are estimates of $u_{qtn}$ and $x_{qtk}$ with MSEs $v_u$ and $v_x$, respectively. Then, we obtain $\tau_r = \frac{K-1}{M}v_x + \frac{N}{M} v_u + \sigma_w^2 \approx \frac{K}{M}v_x + \frac{N}{M} v_u + \sigma_w^2$.\footnote{Here we ignore the contribution of the Onsager reaction term $\Delta_{qtk}$, and leave the rigorous analysis to Subsection B.} The above discussions can be applied to $d_{qtn}$ in \eqref{D} and $\tau_d$ in \eqref{tau_r} similarly.

	\subsubsection{Transfer function of module B}
	Module B infers $\{x_{qtk}\}$ and $\{c_{qtn}\}$ based on \eqref{hybrid_b}, where the input estimate and the corresponding MSE of $x_{qtk}$ (or $c_{qtn}$) are respectively $\mu_{x_{qtk}}$ (or $\mu_{c_{qtn}}$) and $v_x$ (or $v_u$), and the output estimate and the corresponding MSE of $x_{qtk}$ (or $c_{qtn}$) are respectively $o_{qtk}$ (or $p_{qtn}$) and $\tau_o$ (or $\tau_p$). We model the output of module B by
	\begin{align}
		& o_{qtk} = x_{qtk} + w_{o_{qtk}},   \label{R2} \\
		& c_{qtn} = p_{qtn} + w_{c_{qtn}},	\label{C}
	\end{align}
	where $\{w_{o_{qtk}}\}$ are i.i.d. from $\mathcal{CN}(0,\tau_{o})$ and $w_{o_{qtk}}$ is independent of $x_{qtk}$; $\{p_{qtn}\}$ are i.i.d. from $\mathcal{CN}(0,\zeta \frac{K}{N} - {\tau}_{p})$, $\{w_{c_{qtn}}\}$ are i.i.d. from $\mathcal{CN}(0, \tau_p)$, and $p_{qtn}$ is independent of $w_{c_{qtm}}$. 
	% Note that model \eqref{C} is equivalent to $p(c_{qtn}|p_{qtn}) = \mathcal{CN}(c_{qtn};p_{qtn},\tau_p)$ plus $p_{qtn} \sim \mathcal{CN}(p_{qtn},0,\frac{||\mathbf H||_F^2}{N} - {\tau}_{p})$. However, the corresponding derivation of this is not intuitive.
	Substituting $||\mathbf F||_F^2/K = \zeta$ into \eqref{apr}, the transfer function of module B is
	\begin{align}
	\quad \tau_{o} = \frac{{\tau}_p^2}{\zeta({\tau}_p -{v}_c)} \quad { \rm and } \quad \tau_p = \zeta \frac{K}{N} v_x \label{tau_po}.
	\end{align}
	The reasoning behind $o_{qtk}$ in \eqref{R2} and $\tau_o$ in \eqref{tau_po} are similar to that of $r_{qtk}$ in \eqref{R1} and $\tau_r$ in \eqref{tau_r}, respectively. We now give an explanation of $c_{qtn}$ in \eqref{C} and $\tau_p$ in \eqref{tau_po}. From the form of $p_{qtn}$ in \eqref{p_aprox} and $\mathbf C_q \! = \! \mathbf F \mathbf X_q $ in \eqref{hybrid_b}, we obtain $c_{qtn} \! = \! p_{qtn}  \! + \! w_{c_{qtn}}$ where $p_{qtn}\! = \! \sum_k f_{nk} \mu_{x_{qtk}} \! + \! \tilde{\Delta}_{qtn}$ and $w_{c_{qtn}} \! = \! \sum_k f_{nk} (x_{qtk} \! - \! \mu_{x_{qtk}}) \! - \! \tilde{\Delta}_{qtn}$ with $\tilde{\Delta}_{qtn} \! = \! \tau_{p_{qtn}}(\tau_{p_{qtn}}^{\rm p})^{-1}({\mu}_{c_{qtn}} \!-  p_{qtn}^{\rm p})$. Note that $\tilde{\Delta}_{qtn}$ is similar to the Onsager term $\Delta_{qtk}$ above, and is ignored here for a heuristic discussion. From the randomness of $\{f_{nk},x_{qtk},\mu_{x_{qtk}} \}$, $p_{qtn}$, and $w_{c_{qtn}}$ are approximated as Gaussian random variables, of which the independence is equivalent to uncorrelation. Given that $\{x_{qtk}\}$ (or $\{ \mu_{x_{qtk}} \}$) are i.i.d., the uncorrelation of $p_{qtn}$ and $w_{c_{qtn}}$ is ensured if $\mu_{x_{qtk}}$ and $x_{qtk}-\mu_{x_{qtk}}$ are uncorrelated. From the discussions later in \eqref{v_x}, we show that $\mu_{x_{qtk}}$ can be expressed as the conditional expectation of $x_{qtk}$ given $r_{qtk}$ in \eqref{R1}, $o_{qtk}$ in \eqref{R2}, and $b_{x_{qtk}}$ from $\mathbf X$-decoder, i.e., $\mu_{x_{qtk}} \! = \! \mathbb{E} [x_{qtk}|r_{qtk},o_{qtk}, b_{x_{qtk}} ]$. From the orthogonal principle \cite{Kay}, $\mu_{x_{qtk}}$  and $x_{qtk}-\mu_{x_{qtk}}$ are uncorrelated. $\tau_p$ in \eqref{tau_po} can be obtained from $w_{c_{qtn}} = \sum_k f_{nk} (x_{qtk}-\mu_{x_{qtk}})$ together with the fact that $v_x$ is the MSE of $\mu_{{x}_{qtk}}$.
	
	\subsubsection{Transfer function of the decoder of $\mathbf X$}
	We need to characterize the $\mathbf X$-decoder soft outputs $\{ \Delta_{ \text{DEC}_x \to x_{qtk} }(x_{qtk}) = \sum_{x \in \mathcal{X}} \beta_{qtk}(x) \delta(x_{qtk} - x) \}$, where the inputs of the decoder are the AWGN observations $\{r_{qtk}\}$ in \eqref{R1} and $\{o_{qtk}\}$ in \eqref{R2} with noise variances $\tau_r$ and $\tau_o$, respectively. To this end, we introduce random variables $\{b_{x_{qtk}}\}$ as the auxiliary outputs that carry information from the decoder of $\mathbf X$. That is, given $\{b_{x_{qtk}}\}$, the decoder soft outputs can be expressed as $\Delta_{ \text{DEC}_x \to x_{qtk} }(x_{qtk}) = p(x_{qtk}|b_{x_{qtk}}, x_{qtk} \in \mathcal{X}) = \sum_{x \in \mathcal{X}} \beta_{qtk}(x) \delta(x_{qtk} - x)$, where $\beta_{qtk}(x)$ is a function of $b_{x_{qtk}}$. Note that the forms of $\{b_{x_{qtk}}\}$ rely on specific coding schemes, e.g., $b_{x_{qtk}}$ is a log-likelihood ratio when binary phase-shift keying (BPSK) modulation is applied. We can obtain $\{b_{x_{qtk}}\}$ by simulating the $\mathbf X$-decoder over the AWGN channels in \eqref{R1} and \eqref{R2}.
	
	\subsubsection{State equation at super variable node $\{x_{qtk}\}$}
	The operation at super variable node $\{x_{qtk}\}$ is to generate an output estimate $\mu_{x_{qtk}}$ for each $x_{qtk}$ and the corresponding MSE $v_{x}$ by combining the AWGN observations $r_{qtk}$ and $o_{qtk}$, and the decoder soft output $p(x_{qtk}|b_{x_{qtk}} , x \in \mathcal{X})$. From $\{r_{qtk}\}$ in \eqref{R1}, $\{o_{qtk}\}$ in \eqref{R2}, and $\{b_{x_{qtk}}\}$ mentioned above, we have  
	\begin{align}
	 {v}_x = \mathbb{E} [ {\rm var}(x|r,o, x \sim p(x|b_{x}, x \in \mathcal{X} ) ; \tau_{r},{\tau}_{o}) ], \label{v_x}
	\end{align}
	where $\mathbb{E}[\cdot]$ is taken over $r_{qtk}$, $o_{qtk}$, and $b_{x_{qtk}}$. Note that the subscripts $q,t,k$ of $\{x_{qtk}, r_{qtk}, o_{qtk}, b_{x_{qtk}}\}$ are omitted in \eqref{v_x} since $v_x$ is invariant to these subscripts.
	
	 % For examples, $p(b_{x_{qtk}}|x_{qtk})$ can be approximately Gaussian when superposition coded modulation (SCM) is adopted with a relatively large number of SCM layers. When we adopt LDPC, convolutional, or turbo codes with binary phase-shift keying (BPSK) modulation or quadrature phase-shift keying (QPSK) modulation (with Gray mapping), $p(b_{x_{qtk}}|x_{qtk})$ can be obtained from the model on the \emph{extrinsic} log-likelihood ratios {\cite[eq. (9)]{brink}, \cite{richardson2008modern}.
	
	\subsubsection{State equations related to module C} 
	
	What remain are the characterizations of super nodes $\{s_{qn}\}$, $\{u_{qtn}\}$, and $\{c_{qtn}\}$, decoder of $\mathbf S_{\rm D}$, and module C. We start from the outputs of module C for $\{s_{qn}\}$. Note that $\{s_{qn}\}_{n=1}^{N_{\rm P}}$ are known pilots. Thus, we only consider the model of the outputs $\{s_{qn}\}_{n=N_{\rm P}+1}^{N}$. 
	% The entry-by-entry bilinear model \eqref{hybrid_c} used at module C suggests that we can use the input models of $u_{qtn}$ and $c_{qtn}$ to model the output for $s_{qn}$. 
	Specifically, substituting \eqref{hybrid_c} and \eqref{C} into \eqref{D}, we obtain
	\begin{align}
		d_{qtn} = p_{qtn} s_{qn} + w_{dc_{qtn}}, ~ t = 1,...,T, \label{S}
	\end{align}
	where $s_{qn} \! \sim \! \sum_{s \in \mathcal S} \frac{1}{|\mathcal{S}|} \delta(s_{qn}-s)$; $w_{dc_{qtn}} = s_{qn} w_{c_{qtn}} + w_{d_{qtn}} \sim \mathcal{CN}(0,\tau_d + \tau_p)$ is independent of $s_{qn}$. 
	% since the phase of circularly-symmetric Gaussian $w_{c_{qtn}}$, rotated by $s_{qn}=e^{j \theta_i}$ or not, is always uniformly distributed in $[0,2\pi]$.
	 Note that \eqref{S} is a Rayleigh fading model of $s_{qn}$.
	
	 We next characterize the soft outputs of the $\mathbf S_{\rm D}$-decoder. Similarly to the $\mathbf X$-decoder, we introduce $\{b_{s_{qn}}\}$ as the auxiliary outputs. Then the soft outputs are expressed as $\Delta_{ \text{DEC}_s \to s_{qn} }(s_{qn}) = p(s_{qn}|b_{s_{qn}},s_{qn}\in \mathcal{S})=\sum_{s \in \mathcal S} \alpha_{qn}(s) \delta(s_{qn}-s) $ with $\alpha_{qn}(s)$ treated as a function of $b_{s_{qn}}$. Combining \eqref{S} and $p(s_{qn}|b_{s_{qn}},s_{qn}\in \mathcal{S})$, we obtain the state equation at super variable node $\{s_{qn}\}$ as
	\begin{align}
	 {v}_s = \mathbb{E} [ {\rm var}(s|\{{p}_{t} , d_{t}\}_{t=1}^T, s \sim p(s| b_{s},s\in \mathcal{S}); \tau_{p},\tau_{d}) ], \label{v_s} 
	\end{align}
	where $\mathbb{E}[\cdot]$ is taken over $\{p_{qtk},d_{qtk}\}_{t=1}^T$ and $b_{s_{qn}}$.
	
	Consider the soft inputs of module C from super variable node $\{s_{qn}\}$. Using \eqref{S} and $p(s_{qn}|b_{s_{qn}},s_{qn}\in \mathcal{S})$, we express $\{\Delta_{\delta_{u_{qtn}} \leftarrow s_{qn}}(s_{qn})\}_{n=N_{\rm P}+1}^N$ in \eqref{deltau_left_s} as $ \Delta_{\delta_{u_{qtn}} \leftarrow s_{qn}}(s_{qn}) \propto \prod_{j \neq t} p(s_{qn}|d_{qjn},p_{qjn};\tau_d,\tau_p) p(s_{qn}|b_{s_{qn}},s_{qn}\in \mathcal{S}) = \sum_{s \in \mathcal S}\pi_{qtn}(s)\delta(s-s_{qn})$, where $\pi_{qtn}(s) $ is treated as a function of $\{{d}_{qjn}\}_{j \neq t}$, $\{p_{qjn}\}_{j \neq t}$, and $\alpha_{qn}(s)$. 
	
	Consider the soft outputs of module C for $\{u_{qtn}\}$. By substituting \eqref{C} into \eqref{hybrid_c}, we obtain $u_{qtn} = p_{qtn} s_{qn} + w_{c_{qtn}} s_{qn}$ with $p(u_{qtn}|p_{qtn},s_{qn};\tau_p) = \mathcal{CN}(u_{qtn};p_{qtn}s_{qn},\tau_p)$. Then, given $p(u_{qtn}|p_{qtn},s_{qn};\tau_p)$ and $\Delta_{\delta_{u_{qtn}} \leftarrow s_{qn}}(s_{qn}), n=N_{P}+1,...,N$, we express soft outputs $\{\Delta_{\delta_{u_{qtn}} \to u_{qtn}}(u_{qtn})\}_{n=N_{\rm p}+1}^{N}$ as $ \int_{s_{qn}} p(u_{qtn}|p_{qtn},s_{qn};\tau_p)$ $\Delta_{\delta_{u_{qtn}} \leftarrow s_{qn}}(s_{qn}) \! = \! \sum_{s \in \mathcal S} \pi_{qtn}(s)$ $\mathcal{CN}(u_{qtn};p_{qtn} s,\tau_p)$. Given $ p(u_{qtn}|  p_{qtn},s_{qn};\tau_p)$ with $\{s_{qn}\}_{n=1}^{N_{\rm P}}$ being pilots, we have $\Delta_{\delta_{u_{qtn}} \to u_{qtn}}(u_{qtn}) = \mathcal{CN}(u_{qtn} ; p_{qtn} s_{qn},\tau_p), n=1,...,N_{\rm p}$. Combining $\Delta_{\delta_{u_{qtn}} \to u_{qtn}}(u_{qtn})$ and \eqref{D}, the state equation at super variable node $\{ u_{qtn} \}$ is expressed as
	\begin{align}
	 {v}_u = & \frac{N-N_{\rm P}}{N} \mathbb{E} [{\rm var}(u_t|d_{t},p_{t},s \sim \pi;\tau_{d},\tau_{p}) ] \notag \\ 
	  & + \frac{N_{\rm P}}{N} \mathbb{E} [{\rm var}(u_t|d_{t},p_{t};\tau_{d},\tau_{p}) ]  \label{v_u} 
	\end{align}
	which is the weighted sum of the estimation MSE of $\{u_{qtn}\}$, where the first term is the estimation MSE of $\{u_{qtn}\}_{n=N_{\rm P}+1}^{N}$ with $\{s_{qn}\}_{n=N_{\rm P}+1}^{N}$ being random variables, and the second term is that of $\{u_{qtn}\}_{n=1}^{N_{\rm P}}$ with $\{s_{qn}\}_{n=1}^{N_{\rm P}}$ being pilots.

	State variable $v_c$ at super variable node $\{ c_{qtn} \}$ is obtained similarly to \eqref{v_u}. By substituting \eqref{D} into \eqref{hybrid_c}, we obtain $c_{qtn}= d_{qtn}/s_{qn} + w_{d_{qtn}}/s_{qn}$ with $p(c_{qtn}|d_{qtn},s_{qn};\tau_d)=\mathcal{CN}(c_{qtn};d_{qtn}/s_{qn},\tau_d)$. Given $p(c_{qtn}|d_{qtn},s_{qn};\tau_d)$ and $\Delta_{\delta_{u_{qtn}} \leftarrow s_{qn}}(s_{qn})$, we have $\Delta_{\delta_{u_{qtn}} \to c_{qtn}}(c_{qtn}) = \sum_{s \in \mathcal S} \pi_{qtn}(s) \mathcal{CN} (c_{qtn};$ $d_{qtn}/s,\tau_d), n=N_{\rm P}+1,...,N$. Given  $p(c_{qtn}|d_{qtn},s_{qn};\tau_d)$ with $\{s_{qn}\}_{n=1}^{N_{\rm P}}$ being pilots, we have $\Delta_{\delta_{u_{qtn}} \to c_{qtn}}(c_{qtn}) = \mathcal{CN}(c_{qtn};d_{qtn}/s_{qn},\tau_d), n=1,...,N_{\rm p} $. Combining $\Delta_{\delta_{u_{qtn}} \to c_{qtn}}(c_{qtn})$ and \eqref{C}, the state equation at super variable node $\{ c_{qtn} \}$ is
	\begin{align}
	 {v}_c = & \frac{N-N_{\rm P}}{N} \mathbb{E} [{\rm var}(c_t|d_{t},p_{t},s \sim \pi;\tau_{d},\tau_{p}) ] \notag
			\\
		 & + \frac{N_{\rm P}}{N} \mathbb{E}	[{\rm var}(c_{t}|d_{t},p_{t};\tau_d,\tau_p) ].
			\label{v_c} 
	\end{align}

	To summarize, \eqref{tau_r}, \eqref{tau_po}, \eqref{v_x}, and \eqref{v_s}-\eqref{v_c} are the SE equations of Algorithm 1. 
	%Note that for a specific code, the integrations over $b_x$ and $b_s$ in SE can be numerically calculated, e.g., the \emph{extrinsic} outputs of the decoder of $\mathbf X$ can be obtained under the AWGN models \eqref{R1} and \eqref{R2}. We can build two look-up tables for $v_x$ and $v_s$ under specific system configuration. 
	With initialization ${v}_x={v}_s=1$ and ${v}_c={v}_u=K/N \zeta$, the fixed point $({v}_x^*,{v}_s^*)$ of the SE equations gives the estimation MSEs of $\mathbf X$ and $\mathbf S_{\rm D}$, respectively.

	\subsection{Asymptotic Analysis}
	%\emph{Definition 1:} The special case of the original system is described below. At the Tx, $T=1$ is set. At the Rx, Algorithm 1 except steps 13 and 16 is applied with $\beta_{qtk}(x)=1/|\mathcal{X}|$ and $\alpha_{qn}(s)=1/|\mathcal{S}|$. After the algorithm converges, run the steps 13 and 16 once.
	
	We now give a more rigorous description of the SE. Consider the large-system limit, i.e., $N$, $M$, and $K$ go to infinity with the ratios $N/M$ and $N/K$ fixed, simply denoted by $N \to \infty$. 
	% 考虑标量下的定义，向量下定义在appendix用到，易拓展
	%Denote by $\hat{p}_{x_0}$ the empirical distribution of vector $\mathbf x_0 \in \mathbb{C}^N$. We say that the empirical distribution of $\mathbf x_0$ converge weakly to a probability density function $p(X)$ if $\lim_{N \to \infty} \mathbb E_{\hat{p}_{x_0}}[\psi(x)] = E_{{p}_{X}}[\psi(x)] $ for any bounded continuous function $\psi(\cdot)$. 
	Following \cite{AMP_SE}, we use the notation $ x \stackrel{\mathrm{d}}{=} y$ to represent that $\mathbb E[\phi( x)  z] = \mathbb E[\phi(y) z]$ for any integrable function $\phi(\cdot)$ and any random variable $z$.

	\begin{theorem}
	%Consider the proposed RIS-aided MIMO system under the definitions and assumptions in Sec. V-B. Let $N$, $M$, and $K$ go to infinity with the ratios $M/N$ and $M/K$ fixed. Additionally suppose $\mathbf G$ has i.i.d. entries $g_{mn} \sim \mathcal{CN}(g_{mn}$ $;0,1/M)$, $\mathbf H$ has i.i.d. entries $h_{mk} \sim \mathcal{CN}(h_{mk};0,1/M)$, and $\mathbf F$ has i.i.d. entries $f_{nk} \sim \mathcal{CN}(f_{nk};0,\zeta/N)$. Then, we have $\lim_{K \rightarrow \infty} \left(r_{qtk}, x_{qtk}\right)\stackrel{PL(2)}{=} (R, X)$ in \eqref{R_1}, $\lim_{N_{\rm P} \rightarrow \infty} (d_{qtn},u_{qtn},c_{qtn},p_{qtn}) \stackrel{PL(2)}{=} (D_t, U_t, C_t, P_t), n=1,...,N_{\rm P}$ in \eqref{C_t}-\eqref{D_t} with $S = e^{j \theta_1}$, and $\lim_{N-N_{\rm P} \rightarrow \infty} (d_{qtn},u_{qtn},c_{qtn},p_{qtn},s_{qn})$ $ \stackrel{PL(2)}{=} (D_t, U_t, C_t,$ $P_t, S), n=N_{\rm P}+1,...,N$ in \eqref{C_t}-\eqref{D_t}. In addition, SE equations \eqref{v_u}-\eqref{bar_all} hold almost surely.
	
	Consider Algorithm 1 for the SAPIT-MIMO system. Assume
	\begin{enumerate}
		\item The entries in $\mathbf G$ are i.i.d. with $g_{mn} \sim \mathcal{CN}(g_{mn};0,1/M)$, the entries in $\mathbf H$ are i.i.d. with $h_{mk} \sim \mathcal{CN}(h_{mk};0,1/M)$, and the entries in $\mathbf F$ are i.i.d. with $f_{nk} \sim \mathcal{CN}(f_{nk};0,\zeta/N)$;
		\item $b_{x_{qtk}}$ is independent of $r_{qtk}-x_{qtk}$ and $o_{qtk}-x_{qtn}$, and $b_{s_{qn}}$ is independent of $\{d_{qtn}-p_{qtn}s_{qn} \}_t$ and $\{p_{qtn}\}_t$; $\{b_{x_{qtk}}\}$ are i.i.d. and $\{b_{s_{qn}}\}$ are i.i.d..
	%		The empirical distributions of $\{L_{x_{qtk},i}\}$ and $\{L_{s_{qn},i}\}$ respectively converge weakly to $p(L_{X}(j))$ in \eqref{L_x} and $p(L_{S}(j))$ in \eqref{L_s} with bounded second moments. $\lim_{K \to \infty } \mathbb E_{\hat{p}(L_{x_{qtk},i})}(L_{X}(j)^2) = \mathbb E_{p(L_{X}(j))}(L_{X}(j)^2) < \infty$ and $\lim_{N \to \infty } \mathbb E_{\hat{p}(L_{s_{qn},i})}(L_{S}(j)^2) = \mathbb E_{p(L_{S}(j))}(L_{S}(j)^2) < \infty$.	
	\end{enumerate}
	Then, we have
	\begin{enumerate}
		\item $r_{qtk}-x_{qtk}  \stackrel{d}{=} w_{r_{qtk}} + o(1)$ in \eqref{R1};
		\item $d_{qtn} - u_{qtn} \stackrel{d}{=} w_{d_{qtn}} + o(1) $ in \eqref{D};
		\item $o_{qtk}-x_{qtk}  \stackrel{d}{=} w_{o_{qtk}} + o(1) $ in \eqref{R2};
		\item $c_{qtn}-p_{qtn} \stackrel{d}{=} w_{c_{qtn}} + o(1) $ in \eqref{C}
	\end{enumerate}
	where $o(1)$ is a scalar that converges to $0$ almost surely as $N\to \infty$. Furthermore, SE equations \eqref{tau_r}, \eqref{tau_po}, \eqref{v_x}, and \eqref{v_s}-\eqref{v_c} hold almost surely as $N \to \infty$.
	\end{theorem}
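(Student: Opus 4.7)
The plan is to proceed by induction on the message-passing iteration index $k$, establishing at each iteration that the empirical distributions of the residuals $\{r_{qtk}-x_{qtk}\}$, $\{d_{qtn}-u_{qtn}\}$, $\{o_{qtk}-x_{qtk}\}$, and $\{c_{qtn}-p_{qtn}\}$ converge almost surely to the prescribed i.i.d.\ Gaussian laws with variances $\tau_r,\tau_d,\tau_o,\tau_p$ governed by the recursions \eqref{tau_r}--\eqref{tau_po}, while the empirical MSEs defined in \eqref{v_four} converge to the deterministic scalars given by \eqref{v_x} and \eqref{v_s}--\eqref{v_c}. The base case follows from the initialization prescribed in Algorithm~1 together with the i.i.d.\ Gaussian assumption on $\mathbf F,\mathbf G,\mathbf H$.

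For the inductive step I would decouple the three modules and treat A and B, which are linear mixings with i.i.d.\ Gaussian matrices, by a direct adaptation of the conditioning technique of Bayati--Montanari \cite{AMP_SE} and Rangan \cite{sundeep}. Concretely, I would condition on the $\sigma$-algebra $\mathcal F_k$ generated by all messages produced through iteration $k{-}1$ and exploit the fact that, conditional on $\mathcal F_k$, the matrices $\mathbf H,\mathbf G,\mathbf F$ decompose into a component determined by their previously revealed linear functionals plus an independent isotropic Gaussian component. The residual $r_{qtk}-x_{qtk}$ can then be rewritten as a weighted sum of the inductive-hypothesis Gaussian variables plus the past-dependent bias; the Onsager correction term $\Delta_{qtk}$ appearing in \eqref{b_qtm} is exactly what cancels this bias, so the leading component becomes independent of $x_{qtk}$ and converges to $\mathcal{CN}(0,\tau_r)$ with $\tau_r$ as in \eqref{tau_r}. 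The identical argument applied to the $u_{qtn}$--output of module~A and to both outputs of module~B produces the statements for $d_{qtn}$, $o_{qtk}$, $c_{qtn}$ and the variance equation \eqref{tau_po}.

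Module~C is much simpler because it involves no large random matrix: the entry-by-entry bilinear constraint $u_{qtn}=s_{qn}c_{qtn}$ combined with the Gaussianity of the incoming messages (established in the previous paragraph by the induction hypothesis at iteration $k$) immediately yields the Rayleigh-fading model \eqref{S}. The state equations \eqref{v_x}, \eqref{v_s}, \eqref{v_u}, \eqref{v_c} at the four super variable nodes are then obtained by applying the law of large numbers to the empirical averages in \eqref{v_four}, with the per-symbol conditional variances computed as scalar MMSEs against the posteriors specified in steps 3, 4, 8, 10 of Algorithm~1. Assumption~2) is used at exactly this point: it guarantees that the extrinsic decoder outputs $b_{x_{qtk}}$ and $b_{s_{qn}}$ are independent of the effective channel noise, which is what allows the decoders to be inserted as black boxes without destroying the Gaussian-output structure required by the induction.

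The main obstacle, I expect, will not be any individual module but the coupling created by the shared variable nodes $\{x_{qtk}\}$, which are inferred jointly from modules~A and~B. Standard single-module GAMP analyses condition on histories that involve only one measurement matrix; here the conditioning $\sigma$-algebra $\mathcal F_k$ must jointly record the prior histories of both $\mathbf H$ (through the $r_{qtk}$-branch) and $\mathbf F$ (through the $o_{qtk}$-branch), and the decomposition of the two matrices into past-dependent and independent parts has to be carried out simultaneously so that the Onsager corrections in \eqref{b_qtm} and \eqref{p_aprox} cancel the correct bias on each branch. Once this joint conditioning is in place, the remaining work is a bookkeeping exercise matching conditional Gaussian moments against the recursions \eqref{tau_r}, \eqref{tau_po}, and \eqref{v_x}--\eqref{v_c}.
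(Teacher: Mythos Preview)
Your proposal is essentially correct and rests on the same Bayati--Montanari/Rangan conditioning machinery as the paper, but the way you organise the coupling between modules differs from the paper's proof in a way worth noting.

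You anticipate the hard part as a \emph{joint} conditioning on the histories of $\mathbf H$ and $\mathbf F$ simultaneously, decomposing both matrices at once so that the two Onsager corrections cancel the right biases on their respective branches. The paper sidesteps this entirely. It first states a single generic recursion lemma (a mild extension of \cite[Lemma~3]{sundeep} that allows the side-information vectors $\boldsymbol\theta_{i_n}(l),\boldsymbol\varphi_{j_n}(l)$ to vary with the iteration index and to be partitioned into blocks), and then instantiates this lemma \emph{twice}, once per linear module, with no joint $\sigma$-algebra in sight. For module~A it takes $\mathbf A=[\mathbf G,\mathbf H]$ as a single concatenated Gaussian matrix, so that $\mathbf u_{qt}$ and $\mathbf x_{qt}$ are recovered by one recursion and parts 1)--2) together with \eqref{tau_r} drop out; the module-B quantities $p_{qtn},o_{qtk}$ enter only through the parameter vectors $\boldsymbol\varphi$. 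For module~B it takes $\mathbf A=\mathbf F$ alone and feeds the module-A outputs $d_{qtn},r_{qtk}$ in through $\boldsymbol\theta,\boldsymbol\varphi$, yielding parts 3)--4) and \eqref{tau_po}. This decoupling is legitimate precisely because $\mathbf F$ is independent of $[\mathbf G,\mathbf H]$, so the cross-module outputs can be treated as exogenous side information whose empirical law is pinned down by the inductive hypothesis. What your approach buys is conceptual unity (one $\sigma$-algebra, one induction); what the paper's buys is modularity and a much shorter argument, since each instantiation is literally a special case of an already-proved lemma and no simultaneous matrix decomposition is needed. The concatenation trick $\mathbf A=[\mathbf G,\mathbf H]$ for module~A is the structural observation you are missing; once you see it, the ``main obstacle'' you flag disappears.
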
 
	
	\vspace{-0.4cm}
	\begin{proof}
		See Appendix A. %in \cite{timmurphy.org}.
	\end{proof}
	
	\vspace{-0.3cm}
	\begin{remark} 
	Theorem 1 suggests that the asymptotic performances on $\{x_{qtk}\}$ and $\{s_{qn}\}$, e.g., BER performance, can be obtained from the analysis under the AWGN channel models in \eqref{R1} and \eqref{R2} and the Rayleigh fading model in \eqref{S}. We note that Assumption 2) of Theorem 1 is made on the decoder output, and is thus not required when applying Theorem 1 to an uncoded system. For a coded system, the first part of Assumption 2) is  the independence assumption between the decoder output $b_{x_{qtk}}$ (or $b_{s_{qn}}$) 
	and the inputs $r_{qtk}-x_{qtk}$ and $o_{qtk}-x_{qtn}$ (or $\{d_{qtn}-p_{qtn}s_{qn} \}_t$ and $\{p_{qtn}\}_t$); the second part is the i.i.d assumption of $\{b_{x_{qtk}}\}$ (or $\{b_{s_{qn}}\}$). The first part can be ensured since the decoder outputs are \emph{extrinsic} messages which exclude the corresponding input messages \cite{brink}. The second part can be ensured by using random coding (such as in LDPC codes) and random interleaving techniques (such as in turbo detection \cite{Xiaojun_TIT}). From the numerical results in Sec. V, we see that SE provides a tight lower bound of the MSE performance for a coded system, which justifies the validity of Assumption 2). 
	\end{remark}

	\subsection{Achievable Rate Analysis}
	
	\begin{figure}[h]
		\centering
		\begin{minipage}{0.95\linewidth}
			\centering
			\includegraphics[width=1.0\linewidth]{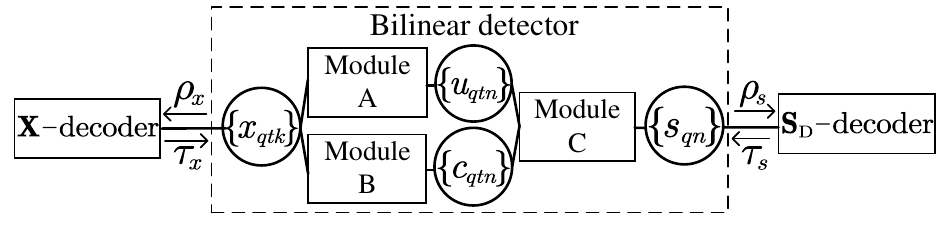}
			% \caption{chutian1}
			% \label{chutian1}%文中引用该图片代号
		\end{minipage}
		%\qquad
		%让图片换行，	
		\begin{minipage}{0.95\linewidth}
			\centering
			\includegraphics[width=1.0\linewidth]{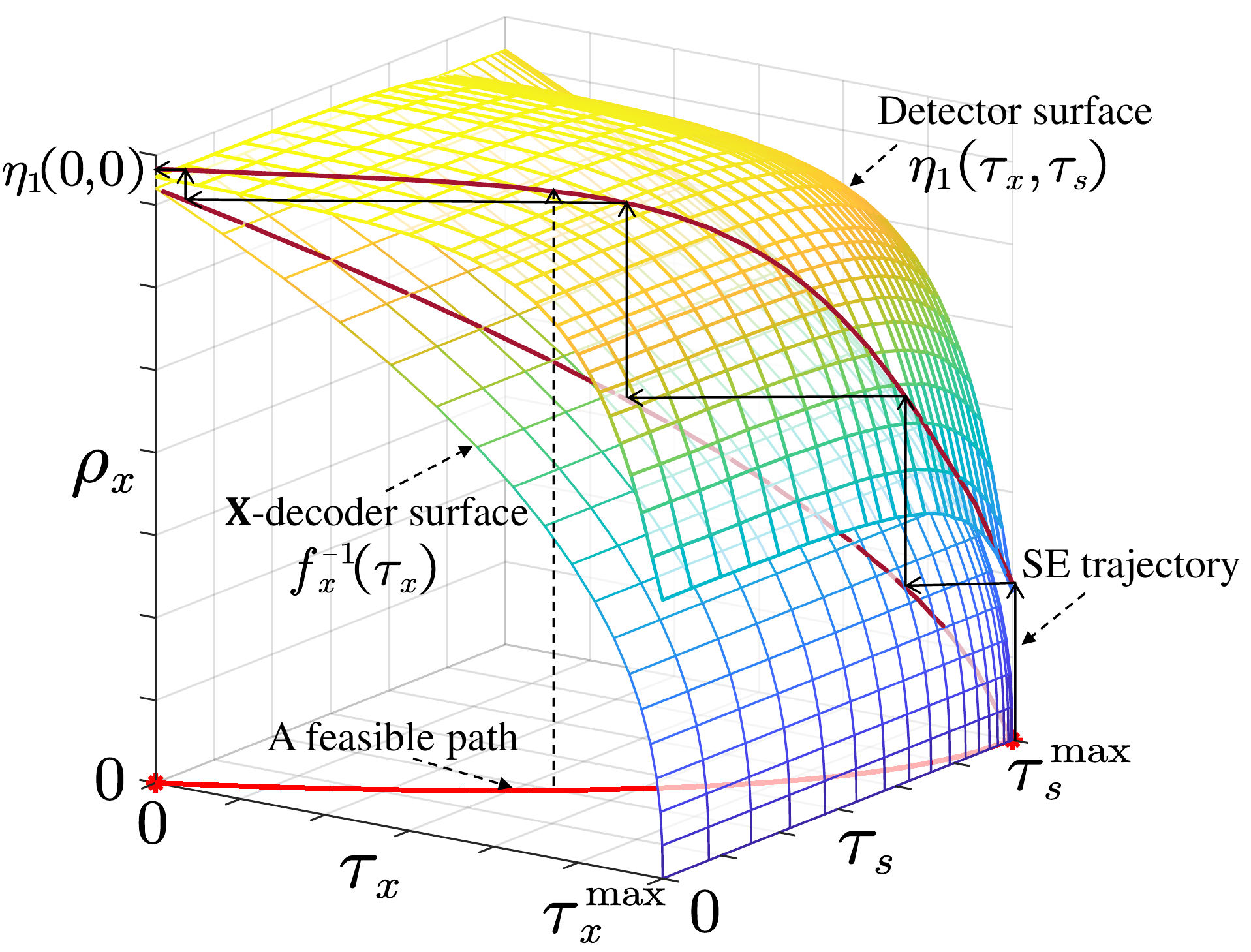}		
			% \label{chutian2}%文中引用该图片代号
		\end{minipage}
		\caption{Top: an equivalent graph of Fig. \ref{App} for the achievable rate analysis. Bottom: a graphical illustration of the SE between the bilinear detector and the $\mathbf X$-decoder.} 
		\label{SE2}
	\end{figure}
	% \begin{figure}
	
	To facilitate the rate analysis, we combine modules A, B, and C, and all super variable nodes to form a bilinear detector of $\mathbf X$ and $\mathbf S_{\rm D}$, as illustrated in the upper plot of Fig. \ref{SE2}. Then we describe the state evolution between the bilinear detector and the two decoders as follows. At the $\mathbf X$-decoder, the combination of two input AWGN observations with variances $\tau_r$ and $\tau_o$ can be considered as one AWGN observation with SNR $\rho_x=1/\tau_r+1/\tau_o$. Furthermore, we assume that the decoder outputs $\{b_{x_{qtk}}\}$ can be characterized by a probability model with a single model parameter $\tau_x \in (0,\tau_x^{\rm max})$. $\tau_x \to 0$ represents that $\mathbf X$ is correctly detected and $\tau_x\to \tau_x^{\rm max}$ represents no information from the $\mathbf X$-decoder, i.e., $\{x_{qtk}\}$ are i.i.d. with $x_{qtk} \sim \frac{1}{|\mathcal X|} \sum_x \delta(x_{qtk}-x)$. For example, considering BPSK modulation, $b_{x_{qtk}}$ is a log-likelihood ratio with model $b_{x_{qtk}} = \frac{1}{2\tau_x}x_{qtk} + n_{x_{qtk}}, ~ n_{x_{qtk}} \sim \mathcal{CN}(0,\tau_x^{-1})$ \cite{brink}. Through $\tau_x$, we establish a single-letter characterization of the $\mathbf X$-decoder outputs. We then express the transfer function of the $\mathbf X$-decoder as
	\begin{align}
		\tau_x = f_x(\rho_x) \label{psi_x}.
	\end{align} 
	
	% We assume $\tau_x\in (0,\tau_b^{\max})$ with $\tau_x \to 0$ representing that $\mathbf X$ is correctly detected and $\tau_x\to \tau_b^{\rm max}$ representing no information from the decoder, i.e., $\mathbf X$ is uniformly distributed on $\mathcal C_x$. This assumption does not lose generality, since we can always convert $\tau_x \in (a,b), \forall a < b $ into $\tilde{\tau}_{b_x} \in (0,\tau_b^{\rm max})$ through linear map $\tilde{\tau}_{b_x} = \tau_x - a $ (when $\tau_x=a$ corresponds to correct detection) or $\tilde{\tau}_{b_x} = - \tau_x + b $ (when $\tau_x=b$ corresponds to correct detection).    
	At the $\mathbf S_{\rm D}$-decoder, the inputs are the Rayleigh fading observations $\{d_{qtn}\}$ in \eqref{S}, where the Rayleigh model can be characterized by a single model parameter, i.e., SNR $\rho_s = \frac{\zeta K/N - \tau_p}{\tau_p+\tau_d}$. Similarly to \eqref{psi_x}, we introduce state $\tau_s \in (0 , \tau_s^{\rm max})$ for the establishment of the single-letter characterization of the $\mathbf S_{\rm D}$-decoder, and express the transfer function as
	\begin{align}
		\tau_s = f_{s}(\rho_s). \label{psi_s}
	\end{align}

	At the bilinear detector, the output state is $(\rho_x,\rho_s)$ and the input state is $( \tau_{{b}_s} , \tau_{{b}_s}) $. The corresponding input-output relationship is expressed as 
	\begin{align}
	(\rho_x,\rho_s) = \eta( \tau_x,\tau_s ),  \label{eta}
	\end{align}
	where $\eta(\cdot)$ can be determined by noting $\rho_x=\frac{1}{\tau_r}+\frac{1}{\tau_o}$ and $\rho_s = \frac{\zeta K/N - \tau_p}{\tau_p+\tau_d}$ with $(\tau_r,\tau_o,\tau_d,\tau_p)$ given by the state equations and transfer functions involved in the bilinear detector, i.e., the iteration of \eqref{tau_r}, \eqref{tau_po}, \eqref{v_x}, and \eqref{v_s}-\eqref{v_c} with ${v}_c={v}_u=K/N \zeta$. Note that $\eta(\cdot)$ is determined by ratios $N/M$, $K/M$, and $\zeta = \frac{||\mathbf G||_F^2 ||\mathbf F||_F^2}{N ||\mathbf H||_F^2}$, noise variance $\sigma_w^2$, and constraints $\mathcal X$ and $\mathcal S$. $f_x(\cdot)$ and $f_s(\cdot)$ can be adjusted through different codebooks $\mathcal C_x$ and $\mathcal C_s$. We can choose appropriate codebooks to ensure the SE convergence to $\tau_x=\tau_s=0$, i.e., to achieve the error-free communication.

	The lower plot of Fig. \ref{SE2} gives an intuitive illustration of the condition for error-free communication. Specifically, it presents the SE between the detector and the $\mathbf X$-decoder, where $\eta_i(\cdot)$ represents the $i$-th entry of vector $\eta(\cdot)$ and $f_x^{-1}(\cdot)$ is the inverse function of $f_x(\cdot)$. Note that the SE between the detector and the $\mathbf S_{\rm D}$-decoder is similar and thus is omitted. The SE trajectory in the right plot of Fig. \ref{SE2} describes how $\rho_x$ and $(\tau_x,\tau_s)$ evolve according to \eqref{psi_x}-\eqref{eta}. We see that when the decoder surface $f_x^{-1}(\cdot)$ is below the detector surface $\eta_1(\cdot)$, the SE trajectory ends at $\tau_x\to 0$ and $\tau_s\to 0$. 
	
	\begin{theorem}
		Consider the SE \eqref{psi_x}-\eqref{eta}. Suppose $f_x(\rho_x)$ and $f_s(\rho_s)$ are monotonically decreasing functions with $\tau_x^{\rm max} = f_x(0)$ and $\tau_s^{\rm max} = f_s(0)$, respectively; further suppose $\rho_x$ and $\rho_s$ both monotonically decrease with respect to $\tau_x$ or $\tau_s$ in $\eta(\tau_x,\tau_s)$. Then, the SE \eqref{psi_x}-\eqref{eta} converges. In particular, a sufficient and necessary condition for the SE convergence $ (\tau_x,\tau_s) \to (0,0)$ is that there exists a 2-dimensional monotonic path $\mathcal L$ starting from $(\tau_x,\tau_s)=(\tau_x^{\rm max},\tau_s^{\rm max})$ and ending at $(0,0)$, such that $(f_x^{-1}(\tau_x),f_s^{-1}(\tau_s)) \prec \eta(\tau_x,\tau_s) $ for any $(\tau_x,\tau_s) \in \mathcal{L}$.
	\end{theorem}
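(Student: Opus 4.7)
My plan is to recast Algorithm 1's state evolution as the iteration of a single isotone map on a compact rectangle, deduce monotone convergence, and then characterize convergence to the origin by a path-tracking argument.

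Define $T(\tau_x,\tau_s) \triangleq (f_x(\eta_1(\tau_x,\tau_s)),\, f_s(\eta_2(\tau_x,\tau_s)))$ on $\mathcal R \triangleq [0,\tau_x^{\rm max}] \times [0,\tau_s^{\rm max}]$. Under the hypotheses, namely that $f_x, f_s$ are decreasing and that $\eta_1, \eta_2$ are decreasing in each coordinate, $T$ is coordinatewise non-decreasing. Because $f_x(\rho)\le f_x(0)=\tau_x^{\rm max}$ and likewise for $f_s$, we have $T(\tau_x^{\rm max},\tau_s^{\rm max}) \preceq (\tau_x^{\rm max},\tau_s^{\rm max})$. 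Initializing the SE at the upper corner therefore yields, by induction through isotonicity, a coordinatewise non-increasing sequence in $\mathcal R$; this sequence must converge to a fixed point $(\tau_x^*,\tau_s^*)$, which is the greatest fixed point of $T$ below the top. This delivers the first conclusion of the theorem.

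For \emph{sufficiency}, I would assume the monotonic path $\mathcal L$ exists and parameterize it continuously as $\mathcal L(\lambda)$, $\lambda\in[0,1]$, with $\mathcal L(0)=(\tau_x^{\rm max},\tau_s^{\rm max})$, $\mathcal L(1)=(0,0)$, and both coordinates non-increasing in $\lambda$. The inequality $(f_x^{-1}(\tau_x),f_s^{-1}(\tau_s))\prec\eta(\tau_x,\tau_s)$ combined with the monotonicity of $f_x,f_s$ rearranges to $T(\mathcal L(\lambda))\prec\mathcal L(\lambda)$ for every $\lambda$. I would then set $\lambda_n\triangleq\sup\{\lambda:(\tau_x^{(n)},\tau_s^{(n)})\preceq\mathcal L(\lambda)\}$ and argue inductively that $\lambda_n$ strictly increases whenever $\lambda_n<1$: isotonicity gives $(\tau_x^{(n+1)},\tau_s^{(n+1)})=T(\tau_x^{(n)},\tau_s^{(n)})\preceq T(\mathcal L(\lambda_n))\prec\mathcal L(\lambda_n)$, and continuity of $\mathcal L$ permits sliding slightly past $\lambda_n$ while preserving domination. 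Letting $\lambda_n\uparrow\lambda^*$, we obtain $(\tau_x^*,\tau_s^*)\preceq\mathcal L(\lambda^*)$; if $\lambda^*<1$ then $(\tau_x^*,\tau_s^*)=T(\tau_x^*,\tau_s^*)\preceq T(\mathcal L(\lambda^*))\prec\mathcal L(\lambda^*)$, and continuity of $\mathcal L$ would permit further advancing $\lambda^*$, a contradiction. Hence $\lambda^*=1$ and the limit is $(0,0)$.

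For \emph{necessity}, I would build the path directly from the trajectory. If the iterates converge to $(0,0)$, they already form a monotone chain from top to origin, and $T(\tau_x^{(t)},\tau_s^{(t)})\preceq(\tau_x^{(t)},\tau_s^{(t)})$ combined with the decreasing nature of $f_x^{-1},f_s^{-1}$ yields $\eta_i(\tau_x^{(t)},\tau_s^{(t)})\ge f_i^{-1}(\tau_x^{(t)})$, strict at every step where both coordinates strictly decrease (degenerate steps can be collapsed). Joining consecutive iterates by axis-aligned monotone segments produces a continuous monotone curve, and the strict inequality $\eta\succ(f_x^{-1},f_s^{-1})$ extends along each segment by the monotonicity of $\eta_i$ and $f_i^{-1}$. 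The main obstacle is the sufficiency half: maintaining strict coordinatewise separation $T(\mathcal L(\lambda))\prec\mathcal L(\lambda)$ under the limit $\lambda_n\to\lambda^*$ requires continuity of both $\mathcal L$ and $T$, together with a careful neighborhood argument that rules out a non-trivial fixed point lying immediately below the path.
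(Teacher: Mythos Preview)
Your convergence argument (isotone map on a rectangle, monotone iterates from the top corner) and your necessity argument (take the SE trajectory itself and interpolate) match the paper's proof in spirit and level of detail.

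Where you diverge is in \emph{sufficiency}. The paper does not track the trajectory along $\mathcal L$ at all; it argues statically. It assumes a nonzero fixed point $(\tau_x^*,\tau_s^*)$ with $(f_x^{-1}(\tau_x^*),f_s^{-1}(\tau_s^*))=\eta(\tau_x^*,\tau_s^*)$, then picks a point $(\tilde\tau_x,\tilde\tau_s)\in\mathcal L$ with $\tilde\tau_s=\tau_s^*$ and runs a three-case comparison on $\tau_x^*$ versus $\tilde\tau_x$. Each case yields a contradiction either directly with the strict inequality on $\mathcal L$ or with the monotonicity of $\mathcal L$ (the third case introduces a second reference point with $\bar\tau_x=\tau_x^*$ and shows the two reference points cannot both lie on a monotone curve). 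This is a pure fixed-point versus path-geometry argument, never invoking the iterates after convergence is established.

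Your approach is the dynamical counterpart: you introduce $\lambda_n=\sup\{\lambda:(\tau_x^{(n)},\tau_s^{(n)})\preceq\mathcal L(\lambda)\}$ and show it climbs to $1$. This is correct, and arguably more intuitive, but as you yourself flag, it leans on continuity of $\mathcal L$ and of $T$ to push $\lambda_n$ strictly past $\lambda^*$ in the limiting step. The paper's case analysis avoids that limiting subtlety entirely: it only needs that $\mathcal L$ hits every horizontal and vertical level between the endpoints (which is automatic for a monotone path from the top corner to the origin), plus coordinatewise monotonicity of $\eta$. So the paper's route is slightly more robust to the exact regularity assumed on $\mathcal L$, while yours gives a clearer picture of \emph{why} the trajectory funnels down along the path.
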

	\begin{proof}
		Consider the following iterative steps of SE: $(\rho_x^0,\rho_s^0) \! = \! \eta(\tau_x^0,\tau_s^0) \rightarrow \tau_x^1 \! = \! f_x(\rho_x^0), \tau_s^1 = f_s(\rho_x^1) \rightarrow (\rho_x^1,\rho_s^1) \! = \! \eta(\tau_x^1,\tau_s^1) \rightarrow \cdots $, where superscripts indicate iteration number and $(\tau_x^0,\tau_s^0) = (\tau_x^{\rm max},\tau_s^{\rm max})$. In step 1, we have $(0,0) \prec (\rho_x^0,\rho_s^0) $ by noting $\rho_x=\frac{1}{\tau_r}+\frac{1}{\tau_o}$ and $\rho_s = \frac{\zeta K/N - \tau_p}{\tau_p+\tau_d}$ with $(\tau_r,\tau_o,\tau_d,\tau_p)$ given by the iteration of \eqref{tau_r}, \eqref{tau_po}, \eqref{v_x}, and \eqref{v_s}-\eqref{v_c}. In step 2, given $ (0,0)\prec (\rho_x^0,\rho_s^0) $ and the monotonicity of $f_x(\cdot)$ and $f_s(\cdot)$, we have $(\tau_x^1,\tau_s^1) = (f_x(\rho_x^0),f_s(\rho_s^0)) \prec (f_x(0),f_s(0))=(\tau_x^0,\tau_s^0)$. In step 3, given $(\tau_x^1,\tau_s^1) \prec (\tau_x^0,\tau_s^0)$ and the monotonicity of $\eta(\cdot)$, we have $(\rho_x^0,\rho_s^0) \prec (\rho_x^1,\rho_s^1)$. We now prove $(\tau_x^1,\tau_s^1) \prec (\tau_x^0,\tau_s^0)$ and $(\rho_x^0,\rho_s^0) \prec (\rho_x^1,\rho_s^1)$. The subsequent processes are similar, and the SE converges to a fixed point $(\tau_x^*,\tau_s^*,\rho_x^*,\rho_s^*)$ satisfying $\eta(\tau_x^*,\tau_s^*) = (\rho_x^*,\rho_s^*) = (f_x^{-1}(\tau_x^*),f_s^{-1}(\tau_s^*))$. 
	
		We first prove the necessity part. The SE convergence to $ (\tau_x,\tau_s) \to (0,0)$ means that there is a corresponding  SE trajectory starting from $(\tau_x^{\rm max},\tau_s^{\rm max})$ and ending at $(0,0)$. Along the SE trajectory, the $\mathbf X$-decoder surface and the $\mathbf S_{\rm D}$-decoder surface are both below the detector surface. Then, there exists one monotonic path $\mathcal L$ which contains all points $(\tau_x,\tau_s)$ on this SE trajectory, such that $(f_x^{-1}(\tau_x),f_s^{-1}(\tau_s)) \prec \eta(\tau_x,\tau_s) $ for any $(\tau_x,\tau_s) \in \mathcal{L}$.

		We now prove the sufficiency part by contradiction. Suppose that the SE converges to a fixed point with $(f_x^{-1}(\tau_x^*),f_s^{-1}(\tau_s^*)) = \eta(\tau_x^*,\tau_s^*)$ and $(0,0)\prec(\tau_x^*,\tau_s^*)$. Given one monotonic path $\mathcal L$ starting from $( \tau_x^{\rm max},\tau_s^{\rm max} )$ and ending at $(0,0)$, we select point $(\tilde{\tau}_x,\tilde{\tau}_s) \in \mathcal L $ which satisfies $\tilde{\tau}_s = \tau_s^*$. We then obtain $\tau_x^* = \tilde{\tau}_x $, $\tau_x^* < \tilde{\tau}_x $, or $\tau_x^* > \tilde{\tau}_x $. If $\tau_x^* = \tilde{\tau}_x$, then $(\tau_x^*,\tau_s^*) \in \mathcal L$ and $(f_x^{-1}(\tau_x^*),f_s^{-1}(\tau_s^*)) \prec \eta(\tau_x^*,\tau_s^*)$, which is in contradiction to the fixed point hypothesis; If $\tau_x^* < \tilde{\tau}_x $, then $f^{-1}(\tau_s^*) = f^{-1}(\tilde{\tau}_s) < \eta_2(\tilde{\tau}_x,\tilde{\tau}_s) < \eta_2(\tau_x^*,\tau_s^*)$, where the first inequality is from $(\tilde{\tau}_x,\tilde{\tau}_s) \in \mathcal L $ and the second inequality is from the monotonicity of $\eta(\cdot)$; If $\tau_x^* > \tilde{\tau}_x $, we consider another point $(\bar{\tau}_x,\bar{\tau}_s) \in \mathcal L $ which satisfies $\bar{\tau}_x = \tau_x^*$. Given $(f_x^{-1}(\tau_x^*),f_s^{-1}(\tau_s^*)) = \eta(\tau_x^*,\tau_s^*)$ and the monotonicity of $\eta(\cdot)$, we have $\bar{\tau}_s < \tau_s^*$, since otherwise $(f_x^{-1}(\bar{\tau}_x),f_s^{-1}(\bar{\tau}_s)) < \eta(\bar{\tau}_x,\bar{\tau}_s)$ is not fulfilled. Then, two points $(\bar{\tau}_x,\bar{\tau}_s) $ and $(\tilde{\tau}_x,\tilde{\tau}_s)$ belongs to $\mathcal L$ with $\bar{\tau}_x > \tilde{\tau}_x$ and $\bar{\tau}_s <\tilde{\tau}_s$, which is in contradiction to the monotonic requirement of $\mathcal L$.
	\end{proof}
	
	\vspace{-0.2cm}
	\begin{remark}
		Generally speaking, $f_x(\rho_x)$ (or $f_s(\rho_s)$) is a monotonically decreasing function \cite{Dongning3}, which means that a higher SNR $\rho_x$ (or $\rho_s$) leads to a lower estimation error represented by $\tau_x$ (or $\tau_s$). The monotonicity of $\eta(\tau_x,\tau_s)$ is from the similar fact. It is noteworthy that the number of possible 2-dimensional paths $\mathcal L$ is infinite, and error-free communication requires only one path to satisfy the above-mentioned condition.
	\end{remark}
	
	\vspace{-0.2cm}
	\begin{definition}
		We say that the transfer functions $\eta(\cdot)$, $\psi_x(\cdot)$, and $\psi_s(\cdot)$ satisfy the matching condition under $\mathcal L$ if along a $2$-dimensional monotonic path $\mathcal L$ whose coordinates start from $(\tau_x^{\rm max},\tau_s^{\rm max})$ and end at $(0,0)$, such that
		\begin{align}
			(f_x^{-1}(\tau_x),f_s^{-1}(\tau_s)) = \eta(\tau_x,\tau_s).	
		\end{align} 
	\end{definition}
	\vspace{-0.2cm}
	The matching condition is the critical condition for error-free communication. In the matching condition under $\mathcal L$, all points in $\mathcal L$ are fixed points; If $(f_x^{-1}(\tau_x),f_s^{-1}(\tau_s))\! =\! \eta(\tau_x,\tau_s) - (\Delta_x,\Delta_s)$ with $(0,0) \prec (\Delta_x,\Delta_s) $, we achieve error-free communication; If $(\Delta_x,\Delta_s) \prec (0,0)$, the SE stops at the initial point. For $f_x(\cdot)$ and $f_s(\cdot)$, the matching condition under $\mathcal L$ gives the maxima of input SNRs $\rho_x$ and $\rho_s$ given outputs $\tau_x$ and $\tau_s$, which corresponds to the worst cases of the decoder transfer functions.
	
	We now establish the rate analysis. Denote by $R_T$ and $R_R$ the information rates at the Tx and the RIS, respectively. Similarly to \eqref{psi_x}, we express $v_x$ in \eqref{v_x} as a function of SNR $\rho_x$, denoted by $v_x = \psi_x(\rho_x)$. Let ${v}_{ps}$ be the normalized estimation MSE of $\sum_t p_t {s_{qn}}$, i.e., ${v}_{ps}$ can be expressed as ${v}_{ps} = \mathbb{E}_{ \{{d}_t\}, \{{p}_t\},b_{s} } [ \sum_t \frac{|{p}_t|^2}{\zeta K/N -\tau_p}$ $\text{var}( s| \{ {d}_t \},\{ {p}_t\},s \sim p(s| b_{s} ;\tau_s) ; \tau_p,\tau_d ) ]$. Similarly to \eqref{psi_s}, we further express $v_{ps}$ as $v_{ps} = \psi_{s}(\rho_s)$. Let $\psi_{x,{\rm un}}(\rho_x)$ and $\psi_{s,{\rm un}}(\rho_s)$ be the special forms of $\psi_x(\rho_x)$ and $\psi_s(\rho_x)$ obtained for uncoded $\mathbf X$ and $\mathbf S_{\rm D}$. Note that $\psi_{x,{\rm un}}(\rho_x) \geq  \psi_x(\rho_x)$  and $\psi_{s,{\rm un}}(\rho_s)\geq  \psi_s(\rho_s)$ \cite{Lei}.
	\begin{theorem}
		For the proposed SAPIT-MIMO transceiver with SE \eqref{psi_x}-\eqref{eta}, the maximal achievable sum rate is
		\begin{align}
		\label{Rate_sum}
	%		& R_T \leq K \left( \int_{0}^{\eta_x(1,1)} \psi_{x,{\rm un}}(\cdot) d \rho_x + \int_{\eta_x(1,1)}^{\eta_x(0,0)} \min \left( \eta_{x,{\rm bound}}^{-1}(\cdot),\psi_{x,{\rm un}}(\cdot) \right) d \rho_x \right) \label{rate_T} \\
	%		& R_R  \leq \frac{N-N_{\rm P}}{T} \left( \int_{0}^{\eta_s{(1,1)}} \psi_{s,{\rm un}}(\cdot) d \rho_s + \int_{\eta_{s}{(1,1)}}^{\eta_s{(0,0)}} \min \left( \eta_{s,{\rm bound}}^{-1}(\cdot), \psi_{s,{\rm un}}(\cdot) \right) d \rho_s \right) \label{rate_R}  \\
		 R_T \! + \!  R_R & \! 
		 = \int_0^{\rho_x^0} \! K \psi_{x,{\rm un}} (\rho_x) d \rho_x
		   + \int_{0}^{\rho_s^0} \! \frac{N \! - \! N_{\rm P}}{T} \psi_{s,{\rm un}}(\rho_s) d \rho_s  \notag \\
		& \! + \! \sup_{\mathcal{L} \in \mathcal{C}} \int_{\mathcal L}
			\! \left( \! K \psi_{x} (\eta_1(\tau_x,\tau_s))  , \frac{N \! - \! N_{\rm P}}{T}  \psi_{s}(\eta_2(\tau_x,\tau_s)) \! \right) \notag \\
		& \hspace{4.8cm}	\boldsymbol{\cdot} d \eta(\tau_x,\tau_s), 
		\end{align}	
		where $(\rho_x^0,\rho_s^0)=\eta(\tau_x^{\rm max},\tau_s^{\rm max})$; $\sup$ denotes supremum operation and $\mathcal{C}$ is the set of all $2$-dimensional monotonic paths starting from $(\tau_x^{\rm max},\tau_s^{\rm max})$ and ending at $(0,0)$; $\int_{\mathcal L} \left[ K \psi_{x} (\eta_1(\cdot)) , \frac{N-N_{\rm P}}{T}  \psi_{s}(\eta_2(\cdot)) \right] \boldsymbol{\cdot} d \eta(\cdot)$ is a line integral obtained by using the matching condition under $\mathcal L$.
	\end{theorem}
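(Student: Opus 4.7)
The plan is to build on the I-MMSE identity of Guo, Shamai, and Verd\'u \cite{Dongning2} together with the fixed-point (matching) condition of Definition 1. First I would write the per-channel-use mutual informations contributed by the Tx and the RIS as integrals of MMSE functions over SNR. Invoking Theorem 1, in the large-system limit the equivalent channels seen by $\{x_{qtk}\}$ (combined from \eqref{R1} and \eqref{R2}) reduce to a scalar AWGN channel with SNR $\rho_x=1/\tau_r+1/\tau_o$, and the channel seen by $\{s_{qn}\}$ (from \eqref{S}) reduces to a scalar Rayleigh channel with SNR $\rho_s=(\zeta K/N-\tau_p)/(\tau_p+\tau_d)$. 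Hence $R_T$ and $R_R$ can be decomposed as $R_T=\int_0^\infty K\,\mathrm{mmse}_x(\rho_x)\,d\rho_x$ and $R_R=\int_0^\infty \frac{N-N_{\rm P}}{T}\,\mathrm{mmse}_s(\rho_s)\,d\rho_s$, with the MMSE functions determined by whether decoder side-information is being injected.

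Second, I would split each integral at the natural boundary $\rho_x^0$ (respectively $\rho_s^0$) where the iterative receiver is initialized with no extrinsic decoder feedback, i.e.\ $(\tau_x,\tau_s)=(\tau_x^{\max},\tau_s^{\max})$. Below this boundary the MMSE coincides with the uncoded MMSE $\psi_{x,\rm un}$ (resp.\ $\psi_{s,\rm un}$), because at initialization the priors are the uniform constellation priors and $\psi_x(\rho)$ degenerates to $\psi_{x,\rm un}(\rho)$; this accounts for the two single integrals appearing in \eqref{Rate_sum}. Above $(\rho_x^0,\rho_s^0)$, the effective input SNR to each decoder rises through the joint iteration of \eqref{psi_x}--\eqref{eta}, and the MMSE is the coded MMSE $\psi_x$ (resp.\ $\psi_s$) evaluated at the current detector output $\eta_1(\tau_x,\tau_s)$ (resp.\ $\eta_2(\tau_x,\tau_s)$).

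Third, to convert the remaining two scalar SNR integrals into the line integral in \eqref{Rate_sum}, I would parameterize the SE trajectory by a monotonic path $\mathcal{L}\in\mathcal{C}$ in the $(\tau_x,\tau_s)$-plane and use the chain rule together with Definition 1 (the matching condition). Specifically, along any $\mathcal{L}$ on which $(f_x^{-1}(\tau_x),f_s^{-1}(\tau_s))=\eta(\tau_x,\tau_s)$, the differential $d(\rho_x,\rho_s)$ is exactly $d\eta(\tau_x,\tau_s)$, so the two-coordinate increment of rate is $\bigl(K\psi_x(\eta_1),\tfrac{N-N_{\rm P}}{T}\psi_s(\eta_2)\bigr)\boldsymbol{\cdot}d\eta(\tau_x,\tau_s)$. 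Integrating $\mathcal{L}$ from $(\tau_x^{\max},\tau_s^{\max})$ down to $(0,0)$ reproduces the line-integral term. Taking the supremum over $\mathcal{L}\in\mathcal{C}$ is justified because by Theorem 2 any $\mathcal{L}$ such that the decoder surfaces stay below the detector surface yields a valid SE trajectory that drives the error probability to zero; different admissible $\mathcal{L}$'s correspond to different time-sharing/scheduling between the Tx and RIS decoders, and the maximum sum rate is attained by the best such path.

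The main obstacle I expect is rigorously justifying the exchange between the matched-decoder interpretation (Definition 1) and achievability in the information-theoretic sense: along the maximizing $\mathcal{L}^\star$ every point must be a fixed point of the SE, so the decoder transfer curves must be designed to coincide pointwise with the detector's inverse transfer along $\mathcal{L}^\star$. The converse---that no scheme whose SE converges to $(0,0)$ can achieve more than the supremum---follows because, by the I-MMSE identity, the per-user rate is upper-bounded by the integral of the (coded) MMSE along any admissible SE trajectory, and every admissible trajectory is dominated coordinate-wise by some $\mathcal{L}\in\mathcal{C}$ satisfying $(f_x^{-1},f_s^{-1})\prec\eta$, which is exactly the regime over which the supremum is taken.
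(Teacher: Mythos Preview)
Your proposal is correct and follows essentially the same route as the paper: apply the I-MMSE identity of \cite{Dongning2} to the scalar AWGN and Rayleigh models produced by Theorem~1, split each rate integral at the initialization SNR $(\rho_x^0,\rho_s^0)=\eta(\tau_x^{\max},\tau_s^{\max})$, use the uncoded MMSE below that point and the matched (Definition~1) coded MMSE above it, and then optimize over monotonic paths $\mathcal{L}$. The only minor slip is writing the upper limit of the I-MMSE integrals as $\infty$; in the paper the terminal SNR is $\eta(0,0)$, which is what matters since the matching condition drives $(\tau_x,\tau_s)\to(0,0)$.
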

	\vspace{-0.3cm}
	\begin{proof}
		Recall that the detector outputs can be treated as the AWGN and Rayleigh-fading observations of $x_{qtk}$ and $s_{qn}$. From \cite{Dongning2}, in the AWGN model of $x_{qtk}$, achievable $R_T$ equals the area under curve $K\psi_{x}(\rho_x)$ from $\rho_x=0$ to $\eta_1(0,0)$, i.e., $R_T = K \int_{0}^{\eta_1(0,0)} \psi_x(\rho_x) d \rho_x $. We extend the above relationship between $R_T$ and $\psi_x(\rho_x)$ to the Rayleigh fading model of $s_{qn}$.
		Specifically, by regarding $p_{qtn} s_{qn}, t=1,...,T$ as a single random vector, we have $R_R = \frac{N-N_{\rm P}}{T}\int_{0}^{\eta_2(0,0)} \psi_s(\rho_s) d \rho_s $. Then, we have $R_T+R_R = K \int_{0}^{\eta_1(0,0)} \psi_x(\rho_x) d \rho_x +\frac{N-N_{\rm P}}{T}\int_{0}^{\eta_2(0,0)} \psi_s(\rho_s) d \rho_s$.
	%  which starts at $(\rho_x,\rho_s)=(0,0)$ and ends at $(\rho_x,\rho_s)=\eta(0,0)$. 
		We then divide each of the two integrals into two parts, i.e., the one from $(0,0)$ to $\eta(\tau_x^{\rm max},\tau_s^{\rm max})$, and the other from $\eta(\tau_x^{\rm max},\tau_s^{\rm max})$ to $\eta(0,0)$. In the first part, the maximum of $K \int_{0}^{\eta_1(\tau_x^{\rm max},\tau_s^{\rm max})} \psi_x(\rho_x) d \rho_x +\frac{N-N_{\rm P}}{T}\int_{0}^{\eta_2(\tau_x^{\rm max},\tau_s^{\rm max})} \psi_s(\rho_s) d \rho_s$ is obtained if $\psi_x(\cdot) = \psi_{x,{\rm un}}(\cdot)$ and $\psi_s(\cdot) = \psi_{s,{\rm un}}(\cdot)$. Correspondingly, the decoders' transfer functions satisfy $\tau_x^{\rm max} = f_x(\rho_x)$ and $\tau_s^{\rm max} = f_s(\rho_s)$ from $(\rho_x,\rho_s)=(0,0)$ to $\eta(\tau_x^{\rm max},\tau_s^{\rm max})$. In the second part, we consider the maximum of $K \int_{\eta_1(\tau_x^{\rm max},\tau_s^{\rm max})}^{\eta_1(\tau_x^{0},\tau_s^{0})} \psi_x(\rho_x) d \rho_x +\frac{N-N_{\rm P}}{T}\int_{\eta_2(\tau_x^{\rm max},\tau_s^{\rm max})}^{\eta_2(\tau_x^{0},\tau_s^{0})} \psi_s(\rho_s) d \rho_s$. From the discussion below Definition 1, this maximum is obtained by using the matching condition and considering all possible paths $\mathcal L$, which concludes the proof.  
	\end{proof}	
	\vspace{-0.2cm}
	\begin{remark} 
		Theorem 3 suggests that the matching condition is the necessary condition to achieve the maximum of $R_T + R_R$. Although the choice of $\mathcal L \in \mathcal C$ is infinite, we find from simulations that the differences of $\int_{\mathcal L} \left[ K \psi_{x} (\eta_1(\cdot)) , \frac{N-N_{\rm P}}{T}  \psi_{s}(\eta_2(\cdot)) \right] \boldsymbol{\cdot} d \eta(\cdot)$ under different paths are small. In practice, we can randomly generate some paths belonging to the set $\mathcal C$ (e.g., a straight line from $(\tau_x^{\rm max},\tau_s^{\rm max})$ to $(0,0)$) and choose the one with maximal $\int_{\mathcal L} \left[ K \psi_{x} (\eta_1(\cdot)) , \frac{N-N_{\rm P}}{T}  \psi_{s}(\eta_2(\cdot)) \right] \boldsymbol{\cdot} d \eta(\cdot)$. Along the selected path, the codes at the Tx and the Rx are then designed to fulfill $(f_x^{-1}(\tau_x),f_s^{-1}(\tau_s)) = \eta(\tau_x,\tau_s)-(\Delta_x,\Delta_s)$ with $\Delta_x$ and $\Delta_s$ taking small values. Recall that $f_x(\cdot) $ and $f_s(\cdot)$ are evaluated under the AWGN model and the Rayleigh fading model, respectively. We can borrow the tools developed in \cite{Xiaojun_TIT} and \cite{Lei} for the design of curve-matching codes.
	% Specifically, it reveals that the maximum of $R_T$ (or $R_R$) equals the integration of $\psi_x(\rho)$ (or $\psi_s(\rho)$) from $\rho=0$ to $\rho_x^{\max}$ (or $\rho_s^{\max}$) under the matching condition and the condition that $\{\mathbf s_p\}$ (or $\{\mathbf X_p\}$) are known by the Rx. 
	
	% The maximum of achievable sum-rate $R_T +  R_R$ requires to traverse all SE trajectories under the matching condition, which is difficult to implement. In simulations, we randomly choose some SE trajectories. The numerical results indicate that the sum-rate differences under different SE trajectories are small.
	
	% % Note that Theorem 2 provides achievable rate region of $(R_T,R_R)$ in the large system limit. 
	% In practice, codes can be designed to fulfill the matching condition. We refer the reader to \cite{Lei} for details on code design.
	
	\end{remark}
	%A result in Appendix B is that an achievable rate $R_T\log_2|\mathcal X|$ (or $R_R$) equals the area under $\psi_x(\rho)$ (or $\psi_s(\rho)$) from $\rho=0$ to the fixed point $\rho^*$. 
	
	\vspace{-0.2cm}
	\subsection{Some Extensions}
	We now apply the SE equations to some special cases with minor modifications. Specifically, if the direct link is blocked, i.e., $\mathbf H = \mathbf 0$, we delete the computation of ${\tau}_r$, and modify ${\tau}_d=N/M {v}_u + \sigma_w^2$ in \eqref{tau_r} and ${v}_x = \mathbb{E}_{o, b_{x} } [ {\rm var}(x|o, x \sim p(x|b_{x};\tau_x) ; {\tau}_{o}) ]$ in \eqref{v_x}. In an uncoded system, we set $\alpha_{qn}(s)=1/|\mathcal S|$ and ${\beta}_{qtk}(x)=1/|\mathcal X|$ for the computation of $v_x$ in \eqref{v_x} and $v_s$ in \eqref{v_s}, respectively. For example, when BPSK modulation is adopted for $x_{qtk}$, \eqref{v_x} reduces to ${v}_x = 1 - \int_{-\infty}^{\infty} (\pi {\tau}_r)^{-1/2} \tanh (2 {\rm Re}(R)/{\tau}_r)$ ${\rm exp}(-({\rm Re}(R)-1)^2/{\tau}_r) d {\rm Re}(R)$ \cite{Lozano}. In a coded system with separate detection and decoding, we treat the system as uncoded in iterative detection, followed by conventional decoding operations. For the achievable rate calculation of the separate detection and decoding, we set $\psi_x(\cdot) = \psi_{x,{\rm un}}(\cdot)$ and $\psi_s(\cdot) = \psi_{s,{\rm un}}(\cdot)$. Then the fixed-point $(\rho_x^{*},\rho_s^{*})$ can be obtained through the SE \eqref{psi_x}-\eqref{eta} until convergence, and \eqref{Rate_sum} reduces to $R_T+R_R \leq K \int_0^{\rho_x^{*}} \psi_{x,{\rm un}}(\rho_x) + \frac{N-N_{\rm P}}{T} \int_0^{\rho_s^{*}} \psi_{s,{\rm un}}(\rho_s)  d \rho_s$. Correspondingly, the decoder transfer functions for maximal achievable sum rate should fulfill $f_x(\rho_x)=\tau_x^{\rm max}$ and $f_s(\rho_s)=\tau_s^{\rm max}$ from $(\rho_x,\rho_s)=(0,0)$ to $(\rho_x^*,\rho_s^*)$, and then sharply reduce to $f_x(\rho_x^*) \to 0$ and $f_s(\rho_s^*) \to 0$.

	\section{Numerical Results}
	\subsection{Preliminaries}
	In this section, we evaluate the performance of Algorithm 1. Consider a three-dimensional (3D) Cartesian coordinate where Tx, Rx and RIS are respectively located at $(0,0,1.5)$, $(0,500,11.5)$ and $(10, 490, 11.5)$ in meters. The large-scale fading model is $\beta_{\rm{loss}} = \beta_0 (d_{\rm {3D}} / d_0)^{-\bar{\alpha}}$, where $d_0$ denotes the reference distance, $\beta_0$ denotes the fading at the reference distance, $\bar{\alpha}$ denotes the fading exponent, and $d_{ \rm{3D}}$ denotes the distance. We set $d_0=1$ m, $\beta_0=-30$ dB, $\bar{\alpha}=2.2$ for the Tx-RIS and RIS-Rx links, and $\bar{\alpha}=3.5$ for the Tx-Rx link \cite{Wenjing_JSAC}. Then the fading $\beta_{\rm{loss}}^{\rm{Tx} \to \rm{RIS}}$, $\beta_{\rm{loss}}^{\rm{RIS} \to \rm{Rx}}$ and $\beta_{\rm{loss}}^{\rm{Tx} \to \rm{Rx}}$ respectively for the Tx-RIS, RIS-Rx, and Tx-Rx links can be calculated according to the above fading model and the 3D coordinates. We adopt the Rayleigh fading model as the small-scale fading for all channels, i.e., $g_{mn} \sim \mathcal{CN}(g_{mn} ; 0,\sqrt{\beta_{\rm{loss}}^{\rm{RIS} \to \rm{Rx}}})$,  $f_{nk} \sim \mathcal{CN}(f_{nk} ; 0,\sqrt{\beta_{\rm{loss}}^{\rm{Tx} \to \rm{RIS}}})$, and $h_{mk} \sim \mathcal{CN}(h_{mk} ; 0,\sqrt{\beta_{\rm{loss}}^{\rm{Tx} \to \rm{Rx}}})$ \cite{Shuowen}. We consider that the noise power spectrum is $-150$ dBm/Hz and the system bandwidth is $1$ MHz. 
	{Note that the RIS phase switching frequency of up to $1$ MHz can be realized by the existing RIS prototypes \cite{zhang2018space}.}
	{Unless otherwise specified, we set $N=512$, $M=512$, $K=64$, $N_{\rm P}=40$, $Q=10^3$, and $T \in\{1,2,4\}$}. 
	We consider the following baseline schemes in comparison:
	\begin{itemize}
		\item W/O RIS: No RIS is deployed, i.e., $\mathbf G = \mathbf F = \mathbf 0$ and system model \eqref{model} reduces to $\mathbf y_{qt} = \mathbf H \mathbf x_{qt} + \mathbf w_{qt}$. We use Turbo-LMMSE algorithm \cite{Xiaojun_TIT} to detect $\mathbf x_{qt}$. Similarly to Algorithm 1, Turbo-LMMSE algorithm consists of a detector and a decoder with transfer functions $\eta_{\rm turbo}(\cdot)$ and $\psi_{\rm turbo}(\cdot)$, respectively. We calculate the achievable rate of the Tx under matching condition  $\eta^{-1}_{\rm turbo}(\cdot) = \psi_{\rm turbo}(\cdot)$.
		\item Random phase: The phases of RIS elements are randomly chosen from the adjustable phase angles $\theta_{i}, i=1,...,|\mathcal{S}|$. Turbo-LMMSE is used based on system model $\mathbf y_{qt} = \mathbf B_q \mathbf x_{qt} + \mathbf w_{qt}$ where $\mathbf B_q = \mathbf G {\rm diag}(\mathbf s_q) \mathbf F + \mathbf H$ is assumed to be known by the Rx.
		\item Passive beamforming: The algorithm in \cite{Shuowen} is adopted to optimize the reflection phase, where the phase angles $\{ \theta_i \}$ are allowed to take values continuously in $[0, 2\pi]$.  Turbo-LMMSE is used at the Rx. 
		\item SAPIT with TMP: Consider the proposed SAPIT technique with TMP algorithm \cite{Wenjing_JSAC} used at the Rx. The TMP algorithm is modified to exploit the Tx codebook $\mathcal C_x$ and the RIS codebook $\mathcal C_s$.
	\end{itemize}
	\vspace{-0.3cm}
	\subsection{Performance Comparisons}
	{Fig. \ref{MSE_iter} shows the MSE performance against the iteration number of Algorithm 1, where the MSE of $\mathbf X$ (or $\mathbf S_{\rm D}$) is normalized by the total number of elements of $\mathbf X$ (or $\mathbf S_{\rm D}$).} BPSK and quadrature phase-shift keying (QPSK) are adopted at the RIS and the Tx, respectively. In the left plot of Fig. \ref{MSE_iter}, we assume that the direct link is blocked, i.e., $\mathbf H = \mathbf 0$, and that no channel encoding/decoding is used. We see that the SE results (dashed lines) coincide with the numerical simulations (solid lines), implying that the SE provides a good performance bound even for a system of moderate size. In the right plot of Fig. \ref{MSE_iter}, rate-$1/2$ convolutional codes with generator polynomials $(171,133)$ are adopted at both the Tx and the RIS, and random interleaving is applied after channel encoding. The trend in the right plot of Fig. \ref{MSE_iter} is similar to that in the left plot of Fig. \ref{MSE_iter}.
	
	\begin{figure}[h] 
		\centering
		\includegraphics[width = 3.5 in]{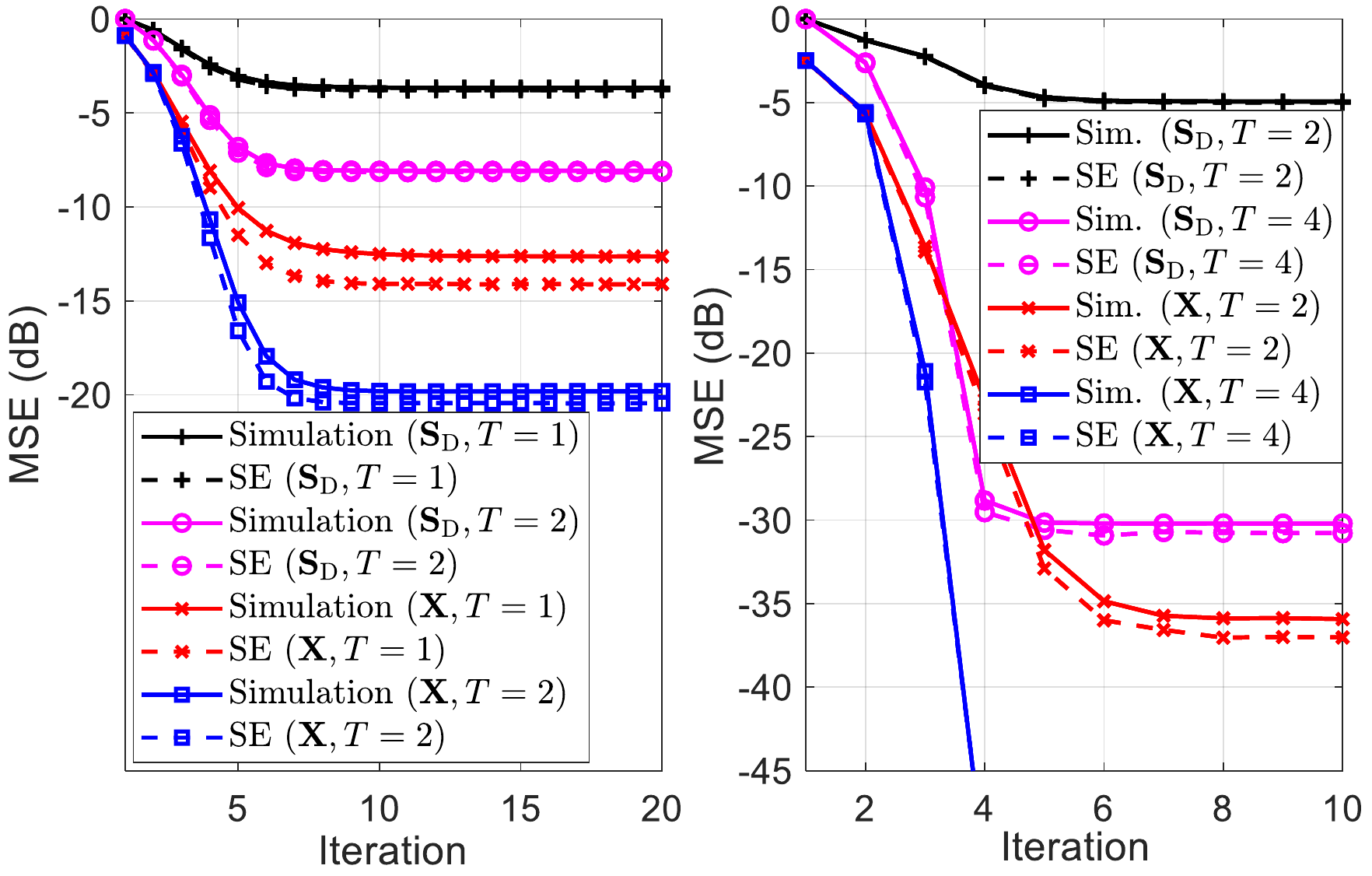}
		\vspace{-0.1cm}
		\caption{MSE versus iteration number. Left: uncoded system with transmit power $12$ dBm in the absence of direct link. Right: coded system with transmit power $6$ dBm  in the presence of direct link.}
		\label{MSE_iter}
	\end{figure}
	Fig. \ref{BER_P} shows the BER performance versus transmit power, where the simulation settings in the left and right respectively follow those in the left and right of Fig. \ref{MSE_iter}. The curve ``Oracle bound ($\mathbf X$)" means that $\mathbf S_{\rm D}$ is known by the Rx when detecting $\mathbf X$. The meaning of the curve ``Oracle bound ($\mathbf S_{\rm D}$)" is analogous. For the passive beamforming scheme, $512$ quadrature amplitude modulation (QAM) is adopted at the Tx for almost the same information rates as that of SAPIT. In the left plot of Fig. \ref{BER_P}, there is an evident gap between Algorithm 1 and TMP for the BER performances of $\mathbf X$ and $\mathbf S_{\rm D}$. Furthermore, Algorithm 1 approaches the lower bound as the Tx power increases. Passive beamforming has quite poor performance due to high-order modulation at the Tx, which confirms the superiority of SAPIT. In SAPIT, the estimates of RIS coefficients are treated as soft pilots for the estimates of the Tx signals, and vice versa. The trend in the right plot of Fig. \ref{BER_P} is similar. {We now evaluate the impact of imperfect CSI. Specifically, we consider channel estimates $\hat{\mathbf G}$, $\hat{\mathbf F}$, and $\hat{\mathbf H}$ with Gaussian errors $\Delta \mathbf G = \hat{\mathbf G} - \mathbf G$, $\Delta \mathbf F = \hat{\mathbf F} - \mathbf F$, and $\Delta \mathbf H = \hat{\mathbf H} - \mathbf H$, respectively. The normalized MSE of the channel estimation is set $-20$ dB \cite{zhenqing}, and Algorithm 1 is modified to use the channel estimates. From the right plot of Fig. \ref{BER_P}, we see that the imperfect CSI causes a slight performance loss. The reason is that it leads to the increase of the power of the equivalent noise. To see this, we rewrite (1) as $\mathbf{y}_{qt}  = \left( \hat{\mathbf G} \text{diag}(\mathbf s_q) \hat{\mathbf F} + \hat{\mathbf H} \right) \mathbf x_{qt} + \hat{\mathbf{w}}_{qt}$ with equivalent noise $\hat{\mathbf{w}}_{qt} = \mathbf w_{qt} - (\Delta \mathbf H+\hat{\mathbf G} \text{diag}(\mathbf s_q)\Delta \mathbf F + \Delta\mathbf G \text{diag}(\mathbf s_q) \mathbf F)\mathbf x_{qt} $.}
	 
	\begin{figure}[h] 
		\centering
		\includegraphics[width = 3.5in]{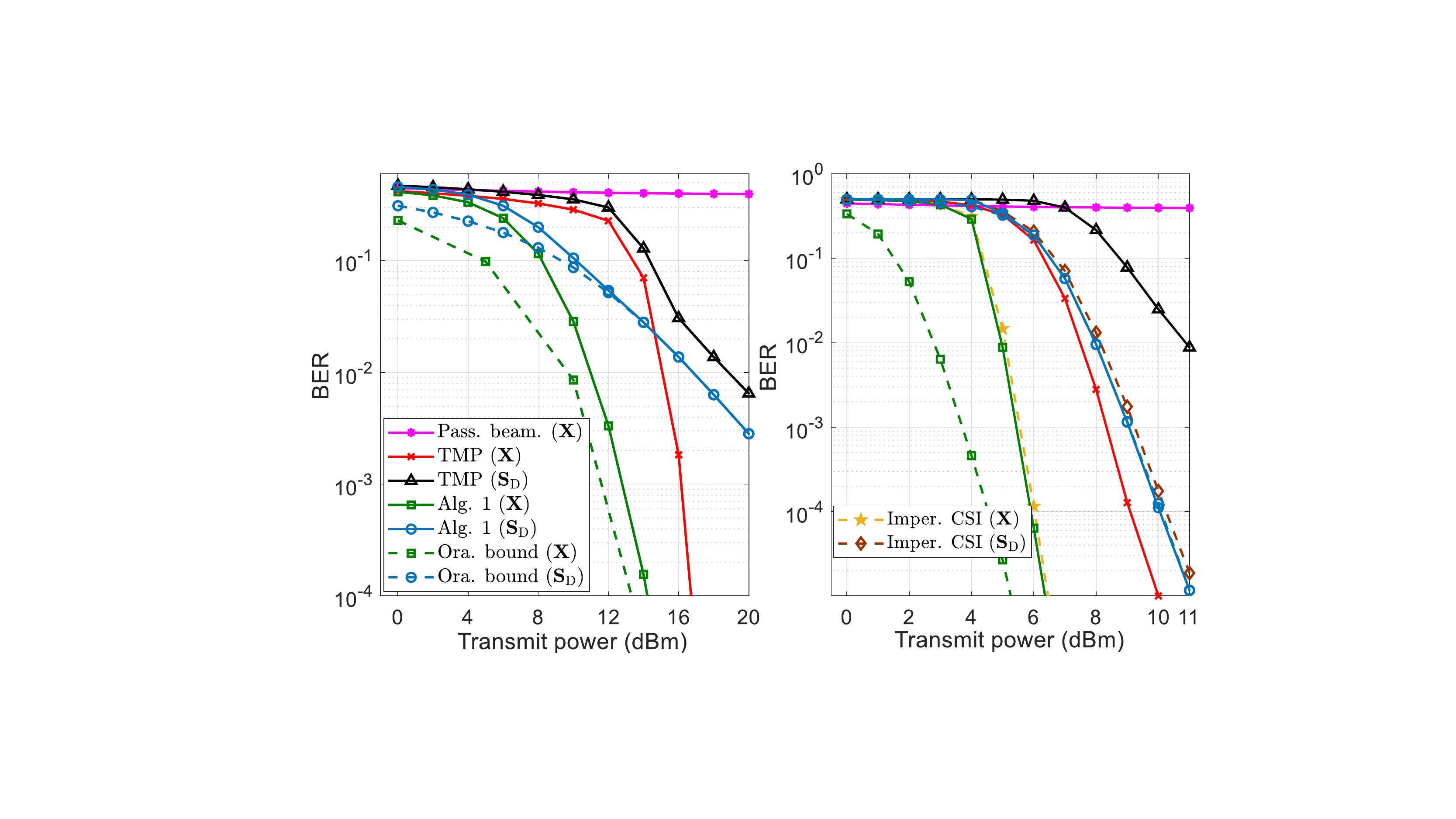}
		\vspace{-0.4cm}
		\caption{{BER versus transmit power. $T=2$. Left: uncoded system in the absence of direct link. Right: coded system with direct link.}}
		\label{BER_P}
	\end{figure}
		
	Fig. \ref{Rate_N} shows the achievable sum-rate comparisons. For the SAPIT scheme, we calculate $R_T+R_R$ by the right-hand side of \eqref{Rate_sum}, where we randomly choose $10$ curves $(\tau_x,\tau_s)$ with coordinates starting from $(\tau_x^{\rm max},\tau_s^{\rm max})$ and ending at $(0,0)$. In the left plot of Fig. \ref{Rate_N}, $16$ QAM is adopted at the Tx and $\theta \in \{ 0, \pi\}$ is adopted at the RIS. As the RIS element number increases, all schemes (except no RIS) achieves a rate improvement. Furthermore, the SAPIT scheme achieves considerable superiority over the baselines and doubles the achievable sum rate of the passing beamforming scheme at $N=800$. This superiority is also seen in the right plot of Fig. \ref{Rate_N}, where $64$ QAM and $\theta \in \{ 0, \frac{\pi}{2}, \pi, \frac{3\pi}{2} \}$ are used at the Tx and the RIS, respectively. {We note that the small rate gain by passive beamforming is due to the large sizes of the transceiver arrays, or more specifically, due to the fact that the beamforming gain by adjusting the RIS coefficients reduces as the number of spatial streams increases.}
	
	\begin{figure}[h] 
		\centering
		\includegraphics[width = 3.5in]{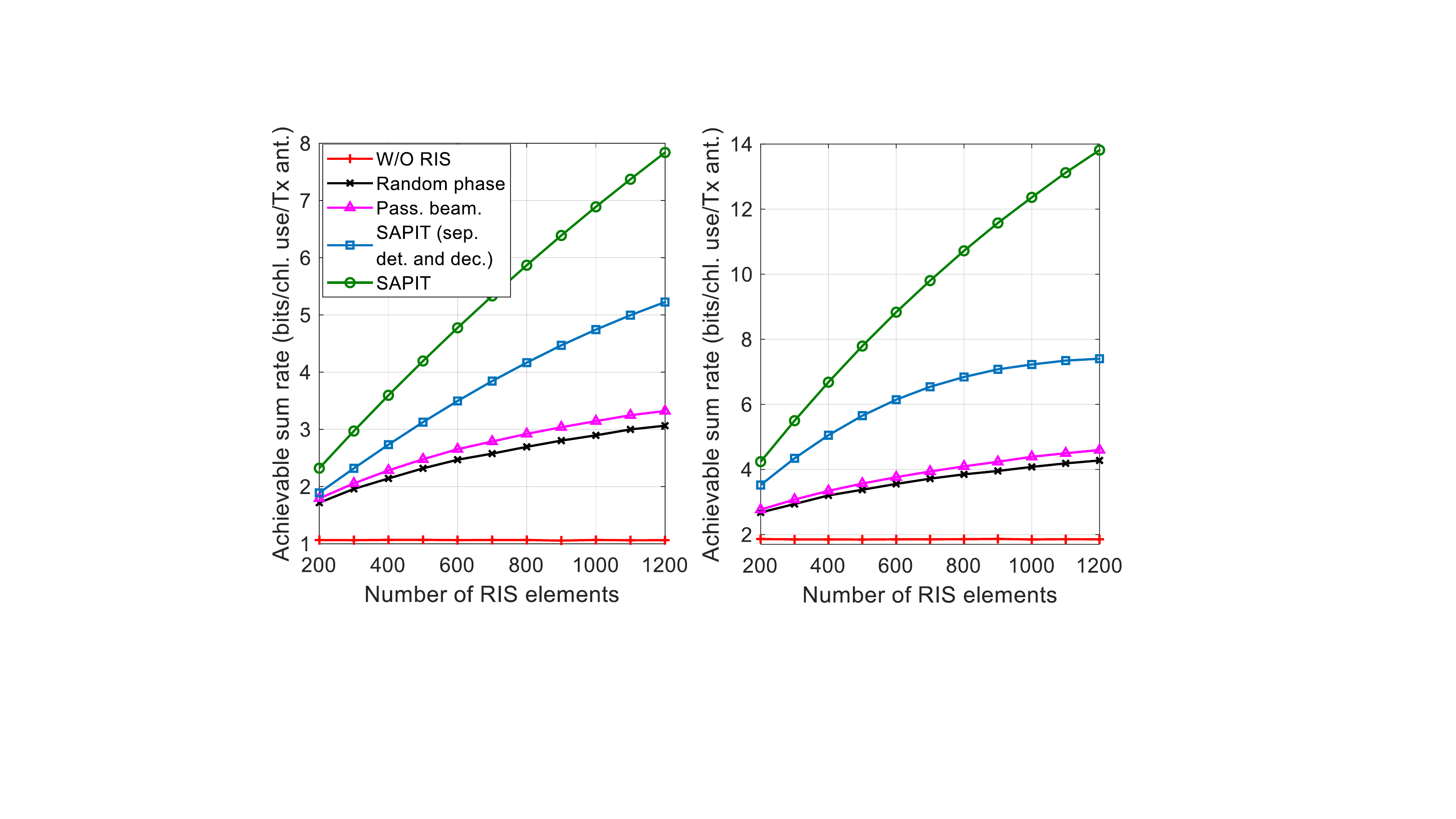}
		\caption{Achievable sum rate versus the number of RIS elements with direct link and $T=1$. Left: BPSK and 16QAM at the RIS and the Tx with transmit power $8$ dBm. Right: QPSK and 64QAM at the RIS and the Tx with transmit power $12$ dBm.}
		\label{Rate_N}
	\end{figure}

	\vspace{-0.2cm}
	\section{Conclusions}
	In this paper, we proposed a novel SAPIT transceiver in the coded RIS-aided MIMO system. We established an auxiliary system model and developed a message-passing algorithm to solve the bilinear detection problem. We further developed the SE to predict the algorithm performance. Based on the SE, we analyzed the achievable sum rate of the Tx and the RIS. Finally, we conducted simulations to validate the SE analysis and show the  superiority of the SAPIT scheme over the passive beamforming counterpart in achievable sum rate.
	
	\vspace{-0.1cm}
	\begin{appendices}

	\section{ }  
	\subsection{Preliminaries}
	
	We first introduce a general recursion as a minor modification of  \cite[eq. (83)-(85)]{sundeep}. 
	Consider an $I \times J$ random matrix $\mathbf{A} \in \mathbb{C}^{I \times J}$ with i.i.d. entries $a_{ij} \sim \mathcal{CN}(a_{ij} ;0,\zeta_a / I)$. 
	%Define set $\mathcal I=\{1,...,i,...,I\}$ and express it by the union of $n_I$ sets, i.e., $\mathcal{I}= \mathcal{I}_1\cup...\cup \mathcal{I}_{n_I}=\{1,...,|\mathcal I_1|\} \cup ... \cup  \{ I-|\mathcal I_{n_I}|+1,...I \} $. Define a sequence of vector $\boldsymbol{\theta}_{i_n}(l), l=1,...,L$ with $n=n_0$ if $i \in \mathcal{I}_{n_0}$. Similarly, let $\mathcal{J}=\{1,...,j,...,J\} =  \mathcal{J}_1\cup...\cup \mathcal{J}_{n_J} $ and  define a sequence of vector $\boldsymbol{\varphi}_{j_n}(l) $ with $n=n_0$ if $j \in \mathcal{J}_{n_0}$. 
	Define a set $ \{ \boldsymbol{\theta}_{i_n}(l) \in \mathbb{C}^{k_n} | n = 1,...,n_I. ~ i_n= \sum_{k=0}^{n-1} I_k + 1 ,..., \sum_{k=0}^{n} I_k.~ l=0,...,L-1. \} $ with $I_0=0$ and $\sum_{n=1}^{n_I} I_n = I$. 
	Similarly, define a set $ \{ \boldsymbol{\varphi}_{j_n}(l) \in \mathbb{C}^{q_n} | n =1,...,n_J. ~ j_n= \sum_{k=0}^{n-1} J_k + 1 ,..., \sum_{k=0}^{n} J_k.~ l=0,...,L-1. \} $ with $J_0=0$ and $\sum_{n=1}^{n_J} J_n = J$. 
	% Denote by $\{g_{l,n}(\cdot)\}_{n=1}^{n_I}$ and $\{f_{l,n}(\cdot)\}_{n=1}^{n_J}$ the sequences of functions.
	The recursion below involves the updates of $\mathbf q(l),\mathbf m_1(l),\mathbf m_2(l) \in \mathbb C^{J}$ and $\mathbf e_1(l),\mathbf e_2(l), \mathbf v(l) \in \mathbb C^{I}$. Specifically, given $\{\boldsymbol{\theta}_{i_n}(l)\}$ and $\{\boldsymbol{\varphi}_{j_n}(l)\}$, we have
	\begin{subequations}
	\label{recursion}
	\begin{align}
		\label{recursion1}
		& \mathbf q(l+1) = \mathbf A^H \mathbf v(l) - \left(\mathbf m_1(l),\mathbf m_2(l) \right) \boldsymbol{\xi}(l), \notag \\ 
		& \hspace{1pt} v_{i_n}(l) = g_{l,n} \left(e_{1i_n}(l), e_{2i_n}(l),\boldsymbol{\theta}_{i_n}(l) \right), \\
		\label{recursion2}
		& (\mathbf e_1(l), \mathbf e_2(l)) \!  = \!  \mathbf A \left(\mathbf m_1(l),\mathbf m_2(l) \right) \!\! - \!\! \mathbf v(l-1) \boldsymbol{\gamma}(l)^T, \notag \\
		& \left( m_{1j_n}(l) , m_{2j_n}(l) \right)  \!=\!  f_{l,n} \left(q_{j_n}(l), \boldsymbol{\varphi}_{j_n}(l) \right), 
	%	\\
	%	i \in \mathcal{I}_{n}, n =1,...,n_I, j \in \mathcal{J}_{n}, n=1,...,n_J
	\end{align}
	\end{subequations}
	where $\boldsymbol{\xi}(l) = (\frac{\zeta_a}{I} \sum_{n,i_n} \frac{\partial v_{i_n}(l)}{\partial e_{1i_n}(l)} , \frac{\zeta_a}{I} \sum_i \frac{\partial v_{i_n}(l)}{\partial e_{2i_n}(l)})^T \in \mathbb{R}^2$, $\boldsymbol{\gamma}(l) = (\frac{\zeta_a}{I} \sum_{n,j_n} \frac{\partial m_{1j_n}(l)}{\partial q_{j_n}(l)} , \frac{\zeta_a}{I} \frac{\partial m_{2j_n}(l)}{\partial q_{j_n}(l)})^T \in \mathbb{R}^2$, and $\mathbf{v}(-1) =\mathbf 0$. \eqref{recursion} reduces to \cite[eq. (83)-(85)]{sundeep} by letting $n_I=n_J=1$, and letting $\boldsymbol{\theta}_{i_n}(l)$ and $\boldsymbol{\varphi}_{j_n}(l) $ invariant to recursion number $l$. We note that in each iteration, both $\mathbf q(l)$ and $\mathbf (\mathbf e_1(l), \mathbf e_2(l))$ are updated through a linear mixing of $\mathbf v(l)$ (or $\mathbf (\mathbf m_1(l),\mathbf m_2(l))$) by random Gaussian matrix $\mathbf A$ (or $\mathbf A^H$) together with a point-wise subtraction. The linear mixing makes $\mathbf q(l)$ (or $\mathbf (\mathbf e_1(l), \mathbf e_2(l))$) distributed as Gaussian vectors (or matrix) in the large system limit and the point-wise subtraction removes the correlation of the components of $\mathbf q(l)$ (or the rows of $\mathbf (\mathbf e_1(l), \mathbf e_2(l))$). Furthermore, due to symmetry, the components of $\mathbf q(l)$ (or the rows of $\mathbf (\mathbf e_1(l), \mathbf e_2(l))$) have the same distribution.

	% 论文[x]阐明，该迭代的特征在于linear mixing + point-wise substraction
	% $\mathbf q(l)$ $\mathbf (\mathbf e_1(l), \mathbf e_2(l))$
	% 基于该规则的迭代，q、e渐进高斯，且能够刻画其方差，
	% { \color{red} decorrelation }
	
	%Assume $\lim _{I_n \rightarrow \infty} \boldsymbol{\theta}_{i_n}(l) \stackrel{PL(2)}{=}\tilde{\boldsymbol{\theta}}_n(l)$, $\lim _{J_n \rightarrow \infty} \boldsymbol{\varphi}_{j_n}(l)\stackrel{PL(2)}{=}\tilde{\boldsymbol{\varphi}}_n(l)$, and $ ( m_{1j_n}(0), m_{2j_n}(0) )\stackrel{PL(2)}{=} ( \tilde{m}_{1n}(0), \tilde{m}_{2n}(0) )$ for random variables $\{\tilde{\boldsymbol{\theta}}_n(l)\}_{n=1}^{n_I}$, $\{\tilde{\boldsymbol{\varphi}}_n(l)\}_{n=1}^{n_J}$, and $\{ \tilde{m}_{1n}(0), \tilde{m}_{2n}(0) \}_{n=1}^{n_J}$. 
	
	To formally describe the asymptotic properties of $\mathbf q(l+1)$ and $(\mathbf e_1(l), \mathbf e_2(l))$, we introduce some definitions by following \cite{AMP_SE}. We say that a function $\phi(\cdot):\mathbb{C}^m \to  \mathbb{C}$ is \emph{pseudo-Lipschitz} of order $2$, if there exists a constant $c>0$ such that, for any $\mathbf{x}, \mathbf{y} \in \mathbb{C}^m$: $	|\phi(\mathbf x)-\phi(\mathbf y) | \leq c\left(1+\|\mathbf x\|+\|\mathbf y\| \right)\|\mathbf x - \mathbf y\|$. We say that the empirical distribution of vector sequences $\mathbf x_i, i=1,...,N$ (denoted by $\hat{p}_{\mathbf x}$) converges weakly to a probability density function $p(\tilde{\mathbf x})$ if $\lim_{N \to \infty} \mathbb E_{\hat{p}_{\mathbf x}}[\psi(x)] = E_{{p}_{\tilde{\mathbf x}}}[\psi(x)] $ for any bounded continuous function $\psi(\cdot)$. Our goal is to characterize the distribution of $\mathbf q(l+1)$ and $(\mathbf e_1(l), \mathbf e_2(l) )$ conditioned on the quantities previously calculated and used in \eqref{recursion}. To this end, define $\mathfrak{S}_{l_1, l_2}$ as the probability space of $\{\mathbf{q}(l)\}_{l=0}^{l_1-1}$,$\{\mathbf{v}(l)\}_{l=0}^{l_1-1}$,$\{\boldsymbol{\theta}_{i_n}(l)\}_{l=0}^{l_1-1}$, $\{\boldsymbol{\varphi}_{j_n}(l)\}_{l=1}^{l_2}$, $\{\mathbf{e}_1(l),\mathbf{e}_2(l)\}_{l=1}^{l_2}$, and $\{\mathbf{m}_1(l),$ $\mathbf{m}_2(l)\}_{l=0}^{l_2}$. Define matrix $\mathbf V_l =[\mathbf v(0), ..., \mathbf v(l-1)]$ and $\mathbf M_l =[\mathbf m_1(0),\mathbf m_2(0) ..., \mathbf m_1(l-1),\mathbf m_2(l-1)]$. Then we express $\mathbf v(l)$ as $\mathbf v(l) = \mathbf v_{\parallel}(l) + \mathbf v_{\perp}(l) $, where $\mathbf v_{\parallel}(l)$ is the orthogonal projection of $\mathbf v(l)$ onto the column space of $\mathbf V_l$, and $\mathbf v_{\perp}(l)$ is a vector in the orthogonal complementary space of the column space of $\mathbf V_l$. Furthermore, $\mathbf v_{\parallel}(l)$ can be  expressed as $\mathbf v_{\parallel}(l) = \sum_{i=1}^{l-1} \alpha_{i} \mathbf v(i)$ with $\alpha_{i}$ representing the $i$-th projection coefficient. Analogously to the expression of $\mathbf v(l)$, let $(\mathbf m_1(l),\mathbf m_2(l)) = (\mathbf m_{1,\parallel}(l),\mathbf m_{2,\parallel}(l) ) + (\mathbf m_{1,\perp}(l),\mathbf m_{2,\perp}(l))$ where $(\mathbf m_{1,\parallel}(l),\mathbf m_{2,\parallel}(l) ) = \sum_{i=0}^{l-1}  (\mathbf m_1(i), \mathbf m_2(i))\boldsymbol{\beta}_{i}$ with $\boldsymbol{\beta}_{i} \in \mathbb{R}^{2 \times 2}$.
	
	% Denote by $\hat{p}_{x_0}$ the empirical distribution of vector $\mathbf x_0 \in \mathbb{C}^N$. We say that the empirical distribution of $\mathbf x_0$ converge weakly to a probability density function $p(X)$ if $\lim_{N \to \infty} \mathbb E_{\hat{p}_{x_0}}[\psi(x)] = E_{{p}_{X}}[\psi(x)] $ for any bounded continuous function $\psi(\cdot)$.
	
	The state variables of the recursion \eqref{recursion} are $\tau_q(l) \in \mathbb R$ and $\boldsymbol{\Sigma}(l) \in \mathbb C^{2 \times 2}$ given by
	\begin{subequations}
	\begin{align}
		&\tau_q(l) = \zeta_a \sum_{n=1}^{n_I} \frac{I_n}{I} \mathbb E \left[ \left| g_{l,n} \left( \sqrt{\boldsymbol{\Sigma}(l)} \mathbf n, \tilde{\boldsymbol{\theta}}_n(l) \right) \right|^2 \right], \label{tau_q} \\
		&\boldsymbol{\Sigma}(l) = \zeta_a \sum_{n=1}^{n_J} \frac{J_n}{I}   \mathbb E \Big[ f_{l,n} (\sqrt{\tau_q(l)} w_q, \tilde{\boldsymbol{\varphi}}_{n}(l))^H \times \notag \\
		& \hspace{3cm} f_{l,n} ( \sqrt{\tau_q(l)} w_q, \tilde{\boldsymbol{\varphi}}_{n}(l) ) \Big], \label{Sigma}
	\end{align}
	\end{subequations}
	where $\mathbf n \sim \mathcal{CN}(\mathbf n ;0 , \mathbf I)$, $w_q \sim \mathcal{CN}(w_q;0,1)$, and $\boldsymbol{\Sigma}(0) = \lim_{J \to \infty} \frac{1}{J} (\mathbf m_1(0),\mathbf m_2(0))^H (\mathbf m_1(0),\mathbf m_2(0))$. The expectation $\mathbb E[\cdot]$ is taken over $\mathbf n$ and $\tilde{\boldsymbol{\theta}}_n(l)$ in \eqref{tau_q}, and $w_q$ and $\tilde{\boldsymbol{\varphi}}_{n}(l)$ in \eqref{Sigma}. 
	%Note that with initialization $\boldsymbol{\Sigma}(0)$, $\tau_q(l)$ and $\boldsymbol{\Sigma}(l)$ are obtained recursively in \eqref{tau_q}-\eqref{Sigma}.
	\begin{lemma}
		Consider the recursion \eqref{recursion}. Assume that the empirical distributions of $\boldsymbol{\theta}_{i_n}(l)$, $\boldsymbol{\varphi}_{j_n}(l)$, and $( m_{1j_n}(0), m_{2j_n}(0) )^T$ respectively converge weakly to the probability distributions of random variables $\tilde{\boldsymbol{\theta}}_n(l)$, $\tilde{\boldsymbol{\varphi}}_n(l)$, and $(\tilde{m}_{1n}(0),$ $\tilde{m}_{2n}(0))^T$ with bounded second moments. Further, assume that the empirical second moments of those vectors respectively converge to the second moments of corresponding random variables. Assume that $g_{l,n}(\cdot)$ and $f_{l,n}(\cdot)$ are Lipschitz continuous and  continuously differentiable almost everywhere with bounded derivatives. We have
	\begin{subequations}
	\begin{align}
		& \left. \mathbf q(l+1)\right|_{\mathfrak{S}_{l+1, l}} \notag \\
		& \quad \stackrel{\mathrm{d}}{=} \sum_{i=0}^{l-1} \alpha_{i} \mathbf q(i+1)+ \tilde{\mathbf A}^{H} \mathbf v_{\perp}(l) + \tilde{\mathbf M}_{l+1} \mathbf {o}_{l+1} \label{q_a} \\
		& \left. (\mathbf e_1(l),\mathbf e_2(l))\right|_{\mathfrak{S}_{l, l}}, \notag \\
		& \quad \stackrel{\mathrm{d}}{=} \sum_{i=0}^{l-1}  (\mathbf e_1(i),\mathbf e_2(i))\boldsymbol{\beta}_{i}+ \tilde{\mathbf A} (\mathbf m_{1,\perp}(l),\mathbf m_{2,\perp}(l)) + \tilde{\mathbf V}_{l} (\mathbf o_l, \mathbf o_l), \label{e_a}
	\end{align}
	\end{subequations}
	where $\tilde{ \mathbf A}$ is an independent copy of $\mathbf A$; the columns of $\tilde {\mathbf M}_{l}$ (or $\tilde {\mathbf V}_{l}$) form an orthogonal basis of the column space of $\mathbf M_l$ (or $\mathbf V_l$) with $\tilde {\mathbf M}_{l}^H \tilde {\mathbf M}_{l}= N \mathbf{I}_{l \times l}$ (or $\tilde {\mathbf V}_{l}^H \tilde {\mathbf V}_{l}= J \mathbf{I}_{l \times l}$; $\mathbf{o}_{l}$) is a vector of length $l$ whose elements converge to zero almost surely as $I, J \to \infty$. For any pseudo-Lipschitz functions $\phi_{q}(\cdot)$ and $\phi_{e}(\cdot)$ of order 2, we have
	\begin{subequations}
	\begin{align}
		& \lim _{ J_n \rightarrow \infty} \frac{1}{J_n} \sum_{j_n} \phi_{q}\left(q_{j_n}(l),  \boldsymbol{\varphi}_{j_n}(l) \right) \notag \\
		& \quad \stackrel{\text { a.s. }}{=}  \mathbb{E}\left[\phi_{q}\left( \sqrt{ \tau_q(l-1)} w_q,  \tilde{\boldsymbol{\varphi}}_n(l) \right)\right], n=1,...,n_J, \label{q_b} \\
		& \lim _{ I_n \rightarrow \infty} \frac{1}{I_n} \sum_{i_n} \phi_{e}\left(e_{1i_n}(l),e_{2i_n}(l),\boldsymbol{\theta}_{i_n}(l) \right) \notag \\
		&  \quad \stackrel{\text { a.s. }}{=}  \mathbb{E}\left[\phi_{e}\left( \sqrt{ \boldsymbol{\Sigma}(l)} \mathbf n, \tilde{\boldsymbol{\theta}}_n(l)  \right)\right], n=1,...,n_I, \label{e_b}
	\end{align}
	\end{subequations}
	where $w_q \sim \mathcal{CN}(w_q; 0,1)$ is independent of $\tilde{\boldsymbol{\varphi}}_n(l)$ and $\mathbf n \sim \mathcal{CN}(\mathbf n; 0,\mathbf I)$ is independent of $\tilde{\boldsymbol{\theta}}_n(l)$; $a\stackrel{\text { a.s. }}{=}b$ represent $a$ equals $b$ almost surely.	
	\end{lemma}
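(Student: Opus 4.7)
The plan is to prove Lemma 1 by induction on the recursion number $l$, following closely the conditioning technique of \cite{sundeep} and \cite{AMP_SE}, with the two additional features that (i) the recursion carries a two-column object $(\mathbf m_1,\mathbf m_2)\mapsto(\mathbf e_1,\mathbf e_2)$ rather than a single vector, and (ii) the componentwise nonlinearities $g_{l,n}$ and $f_{l,n}$ are piecewise-defined across the $n_I$ (resp.\ $n_J$) blocks of row (resp.\ column) indices. The claims (\ref{q_a}), (\ref{e_a}), (\ref{q_b}), and (\ref{e_b}) are to be established jointly in a single induction: (\ref{q_a}) and (\ref{e_a}) give the conditional distributional form at step $l$, while (\ref{q_b}) and (\ref{e_b}) give the empirical law-of-large-numbers limits that feed the next iterate's conditioning history.

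For the base case $l=0$, the history $\mathfrak{S}_{0,0}$ contains only $(\mathbf m_1(0),\mathbf m_2(0))$ and $\{\boldsymbol\varphi_{j_n}(0)\}$, so $(\mathbf e_1(0),\mathbf e_2(0))=\mathbf A(\mathbf m_1(0),\mathbf m_2(0))$ has i.i.d.\ Gaussian rows with block-wise-independent covariance $\tfrac{\zeta_a}{I}(\mathbf m_1(0),\mathbf m_2(0))^H(\mathbf m_1(0),\mathbf m_2(0))\to \boldsymbol\Sigma(0)$; the pseudo-Lipschitz limit (\ref{e_b}) at $l=0$ is then a direct consequence of the (conditional) strong law applied block by block. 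For the inductive step, the key tool is the conditional distribution of a complex Gaussian matrix given linear observations of its action, which I would formalize as: conditioned on $\mathbf A\,\mathbf M_{l+1}=[\mathbf e_1(0),\mathbf e_2(0),\dots,\mathbf e_1(l),\mathbf e_2(l)]$ and $\mathbf A^H\mathbf V_{l}=[\mathbf q(1),\dots,\mathbf q(l)]$, the matrix $\mathbf A$ decomposes into its conditional mean (a deterministic linear function of the revealed data) plus an independent copy $\tilde{\mathbf A}$ restricted to the orthogonal complement of the revealed row/column spaces. Applying this decomposition to $\mathbf A^H\mathbf v(l)$ in (\ref{recursion1}) produces three pieces: a deterministic combination $\sum_{i=0}^{l-1}\alpha_i\mathbf q(i+1)$ coming from the projection $\mathbf v_\parallel(l)$, the fresh Gaussian piece $\tilde{\mathbf A}^H\mathbf v_\perp(l)$, and a bias term whose leading part is exactly canceled by the Onsager correction $-(\mathbf m_1(l),\mathbf m_2(l))\boldsymbol\xi(l)$; this last cancellation is verified by a Stein-type identity, using the a.s.\ convergence of $\tfrac{\zeta_a}{I}\sum_{n,i_n}\partial_{e_j}v_{i_n}(l)$ to its expectation under the inductive hypothesis. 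The residual is collected into $\tilde{\mathbf M}_{l+1}\mathbf o_{l+1}$ with $\mathbf o_{l+1}\stackrel{\text{a.s.}}{\to}0$. An entirely symmetric argument handles (\ref{e_a}) via the action of $\mathbf A$ on $(\mathbf m_1(l),\mathbf m_2(l))$, and the state recursion (\ref{tau_q})--(\ref{Sigma}) emerges by computing the covariances of the fresh-noise pieces $\tilde{\mathbf A}^H\mathbf v_\perp(l)$ and $\tilde{\mathbf A}(\mathbf m_{1,\perp}(l),\mathbf m_{2,\perp}(l))$ in the large-$I,J$ limit. Finally, given (\ref{q_a}) and (\ref{e_a}), the pseudo-Lipschitz empirical limits (\ref{q_b}) and (\ref{e_b}) follow block by block by conditioning on the auxiliary inputs $\boldsymbol\varphi_{j_n}(l)$ or $\boldsymbol\theta_{i_n}(l)$ and invoking the standard conditional strong law for pseudo-Lipschitz functions of Gaussian vectors as in \cite[Lemma 1]{AMP_SE}, since within each block the effective noise is i.i.d.\ Gaussian with the same covariance.

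The main obstacle is the bookkeeping of the two-column generalization combined with the block decomposition. Concretely, one must verify that the $2l\times 2l$ Gram matrix $\frac1J\mathbf M_l^H\mathbf M_l$ and the $l\times l$ Gram matrix $\frac1I\mathbf V_l^H\mathbf V_l$ both converge to deterministic positive-definite limits whose $(i,j)$ blocks are the cross-covariances predicted by $\{\boldsymbol\Sigma(i)\}$ and $\{\tau_q(i)\}$, so that the projection coefficients $\alpha_i$ and $\boldsymbol\beta_i\in\mathbb R^{2\times 2}$ converge as required and the orthogonal components $\mathbf v_\perp(l)$ and $(\mathbf m_{1,\perp}(l),\mathbf m_{2,\perp}(l))$ indeed carry the correct asymptotic variance/covariance. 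This is where the block structure must be handled carefully: each block $n$ contributes its own expectation $\mathbb E[\,\cdot\,|\tilde{\boldsymbol\varphi}_n,\tilde{\boldsymbol\theta}_n]$ to these Gram matrices, and the limit is a weighted sum $\sum_n \tfrac{J_n}{J}$ or $\sum_n \tfrac{I_n}{I}$ of the block-wise contributions, which matches precisely the form in (\ref{tau_q})--(\ref{Sigma}). Once this is in place, every other step reduces to a routine adaptation of the single-column, single-block arguments in \cite{sundeep} and \cite{AMP_SE}, and the Onsager cancellation, the conditional-Gaussian representation, and the pseudo-Lipschitz strong laws go through without further surprises.
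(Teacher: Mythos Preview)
Your proposal is correct and follows exactly the route the paper itself indicates: the paper does not give a standalone proof of Lemma~1 but states that the recursion is a straightforward extension of \cite[eq.~(83)--(85)]{sundeep}, that Lemma~1 is the corresponding extension of \cite[Lemma~3]{sundeep}, and that the proof ``is straightforward by borrowing the methodology in \cite{sundeep},'' with the key ingredients being the Onsager subtraction for decorrelation and the rotational invariance of the Gaussian matrix $\mathbf A$. Your inductive outline---conditional-Gaussian decomposition of $\mathbf A$ given the history, Onsager cancellation via a Stein-type identity, block-by-block pseudo-Lipschitz strong laws, and verification that the Gram matrices converge to the weighted-sum limits appearing in (\ref{tau_q})--(\ref{Sigma})---is precisely the content of that methodology, so you are in full agreement with the paper and have in fact supplied considerably more detail than the paper chose to include.
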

	Equations \eqref{q_a} and \eqref{q_b} in Lemma 1 mean that in the asymptotic regime $I, J \to \infty$, $\mathbf q(l)$ can be treated as a random Gaussian vector with i.i.d. entries of variance $\tau_q(l)$; \eqref{e_a} and \eqref{e_b} mean that $(\mathbf e_1(l),\mathbf e_2(l))$ can be treated as a random Gaussian matrix consisting of i.i.d. row vectors with covariance matrix $\boldsymbol{\Sigma}(l)$. The recursion \eqref{recursion} is a straightforward  extension of \cite[eq. (83)-(85)]{sundeep}, where the difference is only the choice of $n_I$, $n_J$, $\boldsymbol{\theta}_{i_n}(l)$, and $\boldsymbol{\varphi}_{j_n}(l) $. Correspondingly, Lemma 1 is an extension of \cite[Lemma 3]{sundeep}. The proof of Lemma 1 is straightforward by borrowing the methodology in \cite{sundeep}. We note that the key to ensuring Lemma 1 is the point-wise subtraction in the left hand-side of \eqref{recursion} for decorrelation and the Gaussian rotational invariance provided by the random Gaussian matrix $\mathbf A$. 
	
	%Lemma 1(b) shows that for any pseudo-Lipschitz function,  the mean of the function outputs over $q_{j_n}(l),\forall j$ equal the expectation of this function of a Gaussian random variable almost surely. The variance of the Gaussian random variable is exactly the state $\tau_q(l)$ obtained in \eqref{tau_q}.
	%is the variance of the Gaussian entry, and $\boldsymbol{\Sigma}(l)$ obtained in \eqref{Sigma} is covariance matrix of the $2$-dimensional Gaussian row vector.
	\vspace{-0.2cm}
	
	\subsection{Proof}

	The results of Theorem 1 comprise parts 1) and 2) for the models of $x_{qtk}$ and $u_{qtn}$ at Module A, parts 3) and 4) for the models of $x_{qtk}$ and $c_{qtn}$ at Module B, and the SE equations including \eqref{tau_r} at Module A, \eqref{tau_po} at module B, and \eqref{v_x} and \eqref{v_s}-\eqref{v_c} at super variable nodes. 
	Note that \eqref{v_x} and \eqref{v_s}-\eqref{v_c} at super variable nodes are obtained by using the SE equations at Modules A and B.  
	% Also note that the message passing related to module C strictly follows the sum-product rule, which means that if the output models from modules A and B are accurate, the characterizations on the soft inputs/outputs of module C are accurate. 
	Thus, it suffices to prove parts 1) and 2), and \eqref{tau_r} at module A, and parts 3) and 4), and \eqref{tau_po} at module B. We prove by showing that the message passing related to modules A and B are both special cases of recursion \eqref{recursion}, as detailed below.
	
	\subsubsection{State evolution related to module A}
	% 上一版描述前准备工作
	% consider the first part of Algorithm 1 under the analysis result for the second part (in the next subsection), i.e., $\lim _{N \to \infty} ({c}_{qtn},{p}_{qtn}) \stackrel{d}{=}(C_t,P_t)$, and $\lim _{K \to \infty} ({r}_{2qtk},{x}_{qtk}) \stackrel{d}{=}(R_2,X)$ where $R_2 = X + W_{R_2}$ and $W_{R_2} \sim \mathcal{CN}(W_{R_2};0,{\tau}_{o})$. 
	
	The message passing related to module A involves the estimates of $x_{qtk}$ and $u_{qtk}$. Under the i.i.d. assumptions (in Assumption 1 of Theorem 1) on the decoder outputs $\{b_{x_{qtk}}\}$ (or $\{b_{s_{qn}}\}$), the estimation processes of $x_{qtk}$ and $u_{qtk}$ are independent and identical at different sub-blocks $q$ and time-slot $t$. In what follows, we focus on the message passing related to module A at one time-slot in a sub-block. 
	
	Specifically, for recursion matrix, let $\mathbf A = [\mathbf G, \mathbf H] $ and $\zeta_a = 1$; for recursion vectors, let $\mathbf m_2(l)=\mathbf 0$ and $\mathbf e_2(l)= \mathbf 0$; for recursion parameters, let $n_I = 1$, $I_1=M$, and $\theta_{i_1} = w_{i_1}$ in \eqref{model}, and let $n_J = 3$, $(J_1,J_2,J_3)=(N_{\rm P},N-N_{\rm P},K)$, $\boldsymbol{\varphi}_{j_1}=(u_{j_1}, c_{j_1}, p_{j_1})$, $\boldsymbol{\varphi}_{j_2}=(u_{j_2}, c_{j_2}, p_{j_2},s_{j2})$, and $\boldsymbol{\varphi}_{j_3}=(x_{j_3},o_{j_3})$. Then the correspondence between \eqref{recursion} and the message passing related to module A is given by  
	\vspace{-0.1cm}
	\begin{subequations}
	\begin{align}
		& \mathbf q = (\mathbf u^T , \mathbf x^T)^T - ( \mathbf d^T , \mathbf r^T)^T ~ {\rm and} ~ \mathbf{e}_1 = \mathbf{w} - (\mathbf y - \mathbf b), \label{d1} \\
		& v_{i_1} = g_{1}(e_{i_1} , w_{i_1}) = e_{i_1} - w_{i_1}, \label{vn1} \\
		& m_{1j_1} % = f_1(q_{j_1} , \boldsymbol{\varphi}_{j_1}) 
		=  \mathbb{E}(u_{j_1}|u_{j_1}-q_{j_1},p_{j_1};{\tau}_d,{\tau}_p) - u_{j_1}, \label{mj1}  \\
		& m_{1j_2} % = f_2(q_{j_2} , \boldsymbol{\varphi}_{j_2}) 
		=  \mathbb{E}[u_{j_2}|u_{j_2}-q_{j_2},p_{j_2}, s_{j_2} \sim {\pi}_{j_2};{\tau}_d,{\tau}_p] - u_{j_2}, \label{mj2}  \\	
		& m_{1j_3} % = f_3(q_{j_3} , \boldsymbol{\varphi}_{j_3})
		=  \mathbb{E}[x_{j_3}|x_{j_3}-q_{j_3},o_{j_3}, x_{j_3} \sim \beta_{j3};{\tau}_{r},{\tau}_{o}] - x_{j_3}, \label{mj3}
	\end{align}
	\end{subequations}
	with $\xi_1 = 1$ and $\gamma_1 = - \left(\frac{K}{M} \sum_{j=1}^K \frac{v_{x_j}}{{\tau}_{r_1}} + \frac{N}{M} \sum_{j=1}^N \frac{v_{u_j}}{{\tau}_d} \right)  = - \frac{{\tau}_b}{({\tau}_b^{\rm p}+\sigma_w^2)}$. Note that $\mathbb E(\cdot|\cdot)$ in \eqref{mj1}-\eqref{mj3} are treated as functions, e.g., $\mathbb{E}(x_{j3}|\cdot)$ in \eqref{mj3} are functions of $q_{j3}$ and $o_{j3}$. These functions are Lipschitz continuous since the partial derivatives of these functions exist and are bounded everywhere. The corresponding functions ${\rm var}(\cdot|\cdot)$ are pseudo-Lipschitz of order $2$. For example, considering \eqref{mj3}, we have ${\rm var}(x_{j3}|\cdot) = \int |x_{j3}-\mathbb{E}(x_{j3}|\cdot)|^2 p(x_{j3}|\cdot) d x_{j3} $. Since $\mathbb{E}(x_{j3}|\cdot)$ is Lipschitz continuous and function $|\cdot|^2$ is pseudo-Lipschitz of order 2, ${\rm var}(x_{j3}|\cdot)$ belongs to pseudo-Lipschitz functions of order 2.
	
	%To verify that \eqref{vn1}-\eqref{d1} is a case of the recursion \eqref{recursion1}-\eqref{recursion2}, we rewrite \eqref{d1} as
	%\begin{align}
	%	\mathbf q  = -\mathbf A^H( \mathbf y -\mathbf b) - ( \boldsymbol{\mu}_u^T- \mathbf u^T  , \boldsymbol{\mu}_x^T - \mathbf x^T)^T = \mathbf A^H \mathbf v - \xi_1 \mathbf m_1 \notag
	%\end{align}
	%where the first equation is from the expression of $\mathbf d$ in the step 8 of Algorithm 1 and ${r}_{1qtk} = \mu_{x_{qtk}} + \sum_m h_{mk}^* (y_{qtm}- {b}_{qtm})$; 
	%
	%the second euqation is from the definitions \eqref{vn1}-\eqref{mj3}. We then rewrite \eqref{e1} as
	%\begin{align}
	%	\mathbf e  = \mathbf A ( \boldsymbol{\mu}_u^T- \mathbf u^T , \boldsymbol{\mu}_x^T - \mathbf x^T)^T -  \frac{{\tau}_b}{({\tau}_b^{\rm p}+\sigma_w^2)} (\mathbf y - \mathbf b^{\rm p}) = \mathbf A \mathbf m_1 - \gamma_1 \mathbf v \notag
	%\end{align}
	%where the first equation is from the expression of $\mathbf b$ and $\mathbf y$ respectively in the step 6 of Algorithm 1 and \eqref{model}; the second equation is from the definitions \eqref{vn1} and \eqref{mj1}-\eqref{mj2}. 
	
	Applying \eqref{q_a} in Lemma 1 for $q_{j_3} = x_{j_3} - r_{1j_3}$ in \eqref{d1}, we prove part 1) of Theorem 1. Similarly, applying \eqref{q_a} in Lemma 1 for $q_{j_2} = u_{j_2} - d_{j_2}$, we prove part 2) of Theorem 1. Using \eqref{Sigma}, we obtain $\Sigma_{1,1} = \frac{N_{\rm P}}{M} \mathbb E [{\rm var}(u_{j_1}|d_{j_1},p_{j_1};{\tau}_d,{\tau}_p)] + \frac{N-N_{\rm P}}{M} \mathbb E [{\rm var} (u_{j_2}|d_{j_2},p_{j_2},s_{j_2} \sim {\pi}_{j_2};{\tau}_d,{\tau}_p)] + \frac{K}{M} \mathbb E[{\rm var}(x_{j3}|r_{j_3}, o_{j_3} , x_{j_3} \sim \beta_{j3} ;{\tau}_r ,\tau_o)] = \frac{N}{M} {v}_u + \frac{K}{M} {v}_x$. With $\Sigma_{1,1} = \frac{N}{M} {v}_u + \frac{K}{M} {v}_x$ and \eqref{tau_q}, we obtain $\tau_q = \mathbb E |\sqrt{\Sigma_{1,1}}n - w_i|^2 = \frac{N}{M} {v}_u + \frac{K}{M} {v}_x + \sigma_w^2 = \tau_r = \tau_d$ in \eqref{tau_r}.

	\subsubsection{State evolution related to module B} 
	
	%consider the second part of Algorithm 1 under the results from the analysis of the first part, i.e., $\lim _{K \to \infty} (d_{qtn},{u}_{qtn}) \stackrel{d}{=}(D_t,U_t)$ in \eqref{D_t} and $\lim _{K \to \infty} ({r}_{1qtk},{x}_{qtk}) \stackrel{d}{=}(R_1,X)$ where $R_1 = X + W_{R_1}$ and $W_{R_1} \sim \mathcal{CN}(W_{R_1};0,{\tau}_{r_1})$.
	Similarly to the proof in the previous subsection, let $\mathbf A = \mathbf F$, and $\zeta_a =\zeta$; $n_I = 2$, $(I_1,I_2)=(N-N_{\rm P},N_{\rm P})$, ${\boldsymbol \theta}_{i_1} = (d_{i_1}-u_{i_1}, s_{i_1}) = (w_{d_{i_1}} , s_{i_1}) $, and ${\theta}_{i_2} = d_{i_2}-u_{i_2} = w_{d_{i_2}}$; $n_J = 1$, $J_1=K$, and $\boldsymbol{\varphi}_{j_1}=(x_{j_1},r_{1j_1},{\alpha}_{j_1})$. Then the correspondence between \eqref{recursion} and the message passing related to module B is given by 
	\begin{subequations}
	\begin{align}
		& \mathbf q =  \mathbf o - \mathbf x, ~ \mathbf{e}_1 = \mathbf c, ~ {\rm and} ~  \mathbf{e}_2 = \mathbf p, \\ 
		& v_{i_1} % = g_{1}(e_{1i_1} , e_{2i_1}, \boldsymbol{\theta}_{i_1}) 
		= \frac{ {\tau}_{o}}{{\tau}_p} \left( \mathbb{E}[c_{i_1}|s_{i_1} e_{1i_1} + w_{d_{i_1}},e_{2i_1}, s_{i_1} \sim {\pi}_{i_1};{\tau}_{d},{\tau}_{p}] - e_{2i_1} \right),  \\
		& v_{i_2} % = g_{2}(e_{1i_2} , e_{2i_2}, \boldsymbol{\theta}_{i_2}) 
		= \frac{ {\tau}_{o}}{{\tau}_p} \left( \mathbb{E}[c_{i_2}|e_{1i_2} +w_{d_{i_2}},e_{2i_2};{\tau}_{d},{\tau}_{p}] - e_{2i_2} \right), \\
		& (m_{1j_1}, m_{2j_1}) % = f_1(q_{j_1} , \boldsymbol{\varphi}_{j_1}) 
		= \left(x_j, \mathbb{E}[x_{j_1}|r_{j_1},x_{j_1} + q_{j_1},x_{j_1} \sim {\alpha}_{j_1};{\tau}_{r},{\tau}_{o}] \right), 
	\end{align}
	\end{subequations}
	with $\xi_2 = \frac{\zeta {\tau}_{o}}{{\tau}_p} (\frac{1}{N}\sum_{i=1}^{N} \frac{ v_{c_i} -{\tau}_p } {{\tau}_p} ) = -1$, $\xi_1 =1$, $\gamma_2 = \frac{\zeta}{N} \sum_{j=1}^{K} \frac{v_{x_j}}{{\tau}_{o}} = \frac{{\tau}_p}{ {\tau}_{o}}$, and $\gamma_1 = 0$.

	Applying \eqref{q_a} with $q_{j_1} = o_{j_1} - x_{j_1}$, we prove part 3) of Theorem 1. Applying \eqref{e_a} with $(e_{1i},e_{2i})=(-c_i,p_i)$, we prove part 4) of Theorem 1. Using \eqref{Sigma}, we obtain $\boldsymbol{\Sigma} = \zeta \mathbb E [f_1( \sqrt{\tau_{o}}w_q ,\tilde{\boldsymbol{\varphi}}_{1})^H f_1(\sqrt{\tau_{o}}w_q , \tilde{\boldsymbol{\varphi}}_{1})] = [\zeta \frac{K}{N} , \zeta \frac{K}{N}-\zeta \frac{K}{N} {v}_x; \zeta \frac{K}{N}-\zeta \frac{K}{N} {v}_x ,\zeta \frac{K}{N} {v}_x]$, which yields $p(c_i|p_i) = \mathcal{CN}(c_i;p_i,\zeta \frac{K}{N} {v}_x)$ and $p_i \sim \mathcal{CN}(p_i;0,\zeta \frac{K}{N} - \zeta \frac{K}{N} {v}_x)$. Then we obtain the AWGN model $c_i = p_i + w_{c_i}$ in \eqref{C} with ${\tau}_p = \zeta \frac{K}{N} {v}_x$ in \eqref{tau_po}. Using \eqref{tau_q} and \cite[eq. (76)]{sundeep}, we obtain $\tau_q = \zeta \mathbb{E} \left[|g_{1}(c_i, p_i, \tilde{\boldsymbol{\theta}}_{i})|^2 \right] = \zeta {\tau}_{o}^2  \mathbb E[ \frac{\partial }{\partial p_i}(\frac{1}{\tau_p}(\frac{N_{\rm P}}{N} \mathbb E[c_i|d_i,p_i;{\tau}_d,{\tau}_p] + \frac{N-N_{\rm P}}{N} \mathbb E[c_i|d_i,p_i,s_i \sim {\pi}_i;{\tau}_d,{\tau}_p] -p_i))] = {{\tau}_p^2/}{(\zeta({\tau}_p -{v}_c))} = \tau_o $ in \eqref{tau_po}, which concludes the proof.

	\end{appendices}

	\bibliographystyle{IEEEtran}
	\bibliography{TurboMP}

	\end{document}